\newcolumntype{P}[1]{>{\centering\arraybackslash}p{#1}}
\newcolumntype{M}[1]{>{\centering\arraybackslash}m{#1}}
\newdimen\prevdp
\def\leftlabel#1{\noalign{\prevdp=\prevdepth
   \kern-\prevdp\nointerlineskip\vbox to0pt{\vss\hbox{#1}}\kern\prevdp}}
\colorlet{DarkRed}{red!50!black}
\colorlet{DarkGreen}{green!50!black}
\colorlet{DarkBlue}{blue!50!black}
\newcommand{\ED}[1]{\textcolor{blue}{[ED:#1]}}
\newcommand{\UBudgetName}{\textbf{B-URST}\xspace}
\newcommand{\UQuotaName}{\textbf{Q-URST}\xspace}
\newcommand{\DSteinerT}{\textbf{DSteinerT}\xspace}
\newcommand{\UndirectedBudgetAdditive}{\textbf{B-URAT}\xspace}
\newcommand{\DBudgetAdditiveName}{\textbf{B-DRAT}\xspace}
\newcommand{\DQuotaAdditiveName}{\textbf{Q-DRAT}\xspace}
\newcommand{\PrizeColectingSteinerTree}{\textbf{PCST}\xspace}
\newcommand{\DBudgetAdditiveLP}{\textbf{\DBudgetAdditiveName-LP}\xspace}
\newcommand{\DQuotaAdditiveLP}{\textbf{\DQuotaAdditiveName-LP}\xspace}
\newcommand{\UBudgetLP}{\textbf{\UBudgetName-LP}\xspace}
\newcommand{\UQuotaLP}{\textbf{\UQuotaName-LP}\xspace}
\let\epsilon\varepsilon
\let\eps\varepsilon
\definecolor{orange}{RGB}{235,90,0}
\definecolor{darkorange}{RGB}{175,30,0}
\definecolor{turkis}{RGB}{131,182,182}
\definecolor{darkturkis}{RGB}{31,82,82}
\definecolor{green}{RGB}{102,180,0}
\definecolor{darkgreen}{RGB}{51,90,0}
\definecolor{myblue}{RGB}{0,0,213}
\definecolor{mydarkblue}{RGB}{0,0,100}
\definecolor{mybrightblue}{HTML}{74B0E4}
\definecolor{mybrighterblue}{HTML}{B3EAFA}
\definecolor{lila}{RGB}{102,0,102}
\definecolor{darkred}{RGB}{139,0,0}
\definecolor{darkyellow}{RGB}{188,135,2}
\definecolor{brightgray}{RGB}{200,200,200}
\definecolor{darkgray}{RGB}{50,50,50}
\definecolor{amaranth}{rgb}{0.9, 0.17, 0.31}
\definecolor{alizarin}{rgb}{0.82, 0.1, 0.26}
\definecolor{amber}{rgb}{1.0, 0.75, 0.0}
\definecolor{green(ryb)}{rgb}{0.4, 0.69, 0.2}
\definecolor{hanblue}{rgb}{0.27, 0.42, 0.81}
\definecolor{grannysmithapple}{rgb}{0.66, 0.89, 0.63}
\newtheorem{theorem}{Theorem}[section]
\newtheorem{lemma}{Lemma}[section]
\newtheorem{claim}{Claim}[section]
\newtheorem{lemma-rstbl}{Lemma}[section]
\newtheorem{obs-rstbl}{Observation}[section]
\newtheorem{theorem-rstbl}{Theorem}[section]
\title{
Approximation algorithms for Node-weighted Steiner Problems: \\Digraphs with Additive Prizes and Graphs with Submodular Prizes
%
}
\author{Gianlorenzo D'Angelo}
\author{Esmaeil Delfaraz}
\affil{\normalsize Gran Sasso Science Institute, L'Aquila, Italy}
\date{}
\begin{document}

\maketitle
\begin{abstract}
\begin{sloppypar}
In the \emph{budgeted rooted node-weighted Steiner tree} problem, we are given a graph $G$ with $n$ nodes, a predefined node $r$, two weights associated to each node modelling costs and prizes, and a real-valued budget $B$. The aim is to find a tree in $G$ rooted at $r$ such that the total cost of its nodes is at most $B$ and the total prize is maximized. In the \emph{quota rooted node-weighted Steiner tree} problem, we are given a real-valued quota $Q$, instead of the budget, and we aim at minimizing the cost of a tree rooted at $r$ whose overall prize is at least $Q$.

In this paper, we study some relevant generalizations of the above problems, namely we consider separately (i) directed graphs (with additive prize function), and (ii) monotone submodular prize functions over subsets of nodes (in undirected graphs).


For the budgeted problem, Ghuge and Nagarajan~[SODA\ 2020], and D'Angelo, Delfaraz and Gilbert~[ISAAC\ 2022] proposed an optimal quasi-polynomial time $O\left(\frac{\log n'}{\log \log n'}\right)$-approximation algorithm and a polynomial time bicriteria $(1+\epsilon, O(\frac{1}{\epsilon^{3}}\sqrt{B}))$-approximation algorithm, respectively, for the case in which the graph is directed, the costs are restricted to be positive integer valued and are associated to the edges, and the prize function is a general monotone submodular function over subsets of nodes, where $n'$ is the number of vertices in an optimal solution and $\epsilon \in (0, 1]$.

For \emph{scenario (i)}, we develop a technique resorting on a standard flow-based linear programming relaxation to compute a tree with good trade-off between prize and cost, which allows us to provide very simple polynomial time approximation algorithms for both the budgeted and the quota problems. For the \emph{budgeted} problem, our algorithm achieves an $O(\frac{1}{\epsilon^2}n^{2/3}\ln{n})$-approximation at the cost of a budget violation of a factor of at most $1+\epsilon$, for any $\epsilon \in (0, 1]$. For the \emph{quota} problem, our algorithm guarantees an approximation factor of $O(n^{2/3}\ln{n})$ at the cost of a violation of the quota constraint by a factor of at most $2$. Our algorithms work for nonnegative real-valued costs and are the first non-trivial polynomial time algorithms with approximation factors depending on $n$.

The same technique can be used to find an approximation algorithm for the node-weighted directed Steiner tree problem (\DSteinerT).
Recently, Li and Laekhanukit~[SODA\ 2022] ruled out poly-logarithmic approximation algorithms for \DSteinerT based on a standard flow-based linear programming relaxation.
%
%
By using the same linear relaxation, we provide a surprisingly simple polynomial time $O((1+\epsilon)\sqrt{n} \ln {n})$-approximation algorithm for \DSteinerT, for any $\epsilon>0$. 

For \emph{scenario (ii)}, we provide a polynomial time $O(\frac{1}{\epsilon^3}\sqrt{n}\log{n})$-approximation algorithm for the budgeted problem that violates the budget constraint by a factor of at most $1+\epsilon$, for any $\epsilon \in (0, 1]$. Also in this case we introduce a general technique that exploits a flow-based linear program to find trees with a good trade-off between prize and cost, which allows us to provide a good approximation  also for the quota problem.
\end{sloppypar}
\end{abstract}


\section{Introduction}\label{sec:intro}

\emph{Prize Collecting Steiner Tree} (\PrizeColectingSteinerTree) refers to a wide class of combinatorial optimization problems, involving variants of the \emph{Steiner tree} and \emph{traveling sales man} problems, with many practical applications in computer and telecommunication networks, VLSI design, computational geometry, wireless mesh networks, and cancer genome studies~\cite{cheng2004steiner, gao2018algorithm,hochbaum2020approximation, kuo2014maximizing,vandin2011algorithms}. 


In general, we are given a (directed) graph $G$, two functions modelling costs and prizes (or penalties) associated to the edges and/or to the nodes of the graph, and we want to find a connected subgraph of $G$ (usually a tree or an out-tree) $T$ which optimizes an objective function defined as a combination of its costs and prizes and/or is subject to some constraints on its  cost and prize. Casting suitable constraints and objective functions give rise to different problems.
For example, in \emph{budgeted} problems, we are given a budget $B$ and we require that the cost of $T$ is no more than $B$ and its prize is maximized. In \emph{quota} problems, we require the prize of $T$ to be at least some quota $Q$ and its cost to be minimum.
Additional constraints can be required, for example in \emph{rooted} variants we are given a specific node, called \emph{root}, which is required to be part of $T$ and reach all the nodes in $T$, while in \emph{Steiner tree} problems $T$ must include a specified set of nodes called \emph{terminals}.

While there is a vast literature providing
approximation algorithms for many variants of \PrizeColectingSteinerTree on undirected graphs where the prize function is additive, e.g.~\cite{archer2011improved, bateni2018improved, garg2005saving,  guha1999efficient, goemans1995general, johnson2000prize, konemann2013lmp, paul2020budgeted}, the case of directed graphs or monotone  submodular prize functions received less attention~\cite{charikar1999approximation, d2022budgeted, ghuge2020quasi, kuo2014maximizing, zelikovsky1997series}.

In this paper, we consider \emph{node-weighted} Steiner tree problems, that is both costs and prizes  are associated to the nodes of the graph, and we investigate two relevant settings: (i) the underlying graph is \emph{directed} and the prize function is \emph{additive}, and (ii) the underlying graph is \emph{undirected} and the prize function is \emph{monotone and submodular}.
In both settings, we consider budgeted and quota problems. For the first setting we also study the minimum-cost Steiner tree problem. We consider the more general rooted variant of all these problems. For each of the above two settings, we introduce a new technique, resorting on flow-based linear programming relaxations, which allows us to find trees or out-trees with a good trade-off between cost and prize. Casting suitable values of quota and budget and applying new and known tree trimming procedures, we can achieve good bicriteria approximation ratios for all the above problems.

In the following we analyze  settings (i) and (ii) separately.


\subsection{Directed Graphs and Additive Prizes}

Prize collecting Steiner trees are usually much harder on directed graphs than on undirected graphs. A well-known example is the \emph{Steiner tree} problem, for which there is a simple polynomial time $2$-approximation algorithm for its undirected version, but for its directed version, unless $NP \subseteq \bigcap_{0 <\epsilon< 1} \text{ZPTIME}(2^{n^\epsilon})$ or the Projection Game Conjecture is false, there is no quasi-polynomial time algorithm that achieves an approximation ratio of $o(\frac{\log^2 k}{\log \log k})$~\cite{grandoni2019log2}, where $k$ is the number of terminal nodes.


We consider three variants of the node-weighted version of \PrizeColectingSteinerTree in this scenario in which we are given a directed graph $D=(V, A)$ with $|V|=n$, two nonnegative functions, namely, cost $c(v)$ and prize $p(v)$ that are associated to each vertex $v \in V$, and a root vertex $r$.

\paragraph{The Budgeted Rooted Additive Tree problem (\DBudgetAdditiveName).} Given a budget $B$, in \DBudgetAdditiveName, we aim to find an out-tree (a.k.a. out-arborescence) $T$ of $D$ rooted at $r$ such that the sum of the costs of its vertices is at most $B$ and the sum of the prizes of its vertices is maximum.

\cite{guha1999efficient} introduced the undirected version of \DBudgetAdditiveName, we call this problem \UndirectedBudgetAdditive. They gave a polynomial time $O(\log^2 n)$-approximation algorithm that violates the budget constraint by a factor of at most $2$. The approximation factor was improved to $O(\log n)$, with the same budget violation, by Moss and Rabani~\cite{moss2007approximation}. Bateni, Hajiaghay and Liaghat~\cite{bateni2018improved} later improved the budget violation factor to $1+\epsilon$ to obtain an approximation factor of $O\left(\frac{1}{\epsilon^2}\log n\right)$, for any $\epsilon\in (0,1]$.
Kortsarz and Nutov~\cite{kortsarz2009approximating} showed that the unrooted version of \UndirectedBudgetAdditive, so does \UndirectedBudgetAdditive, admits no $o(\log \log n)$-approximation algorithm, unless $NP \subseteq DTIME(n ^{\text{polylog}(n)})$, even if the algorithm is allowed to violate the budget constraint by a factor equal to a universal constant. 

Ghuge and Nagarajan~\cite{ghuge2020quasi} provided a tight quasi-polynomial time $O(\frac{\log n'}{\log \log n'})$-approximation algorithm for the edge-cost version of \DBudgetAdditiveName that requires $(n \log B)^{O(\log^{1+\epsilon} n')}$ time, where $n'$ is the number of vertices in an optimal solution and the edge costs are positive integers. Very recently, D'Angelo, Delfaraz and Gilbert~\cite{d2022budgeted} provided a polynomial time $O\left(\frac{1}{\epsilon^2}\sqrt{B}\right)$-approximation algorithm for the same problem that violates the budget constraint by a factor of $1+\epsilon$, where $\epsilon \in (0, 1]$. Bateni, Hajiaghay and Liaghat~\cite{bateni2018improved} showed that the integrally gap of the flow-based LP for \UndirectedBudgetAdditive, so is for \DBudgetAdditiveName, is unbounded. 

In this paper we show that, by using this flow-based LP and violating the budget constraint by a factor of at most $1+\epsilon$, for $\epsilon \in (0, 1]$, one can achieve a very simple polynomial time $O\left(\frac{1}{\epsilon^2}n^{2/3} \ln{n}\right)$-approximation algorithm for \DBudgetAdditiveName where costs and prizes are non-negative real numbers.

\paragraph{The Quota Rooted Additive Tree problem (\DQuotaAdditiveName).} Given a quota $Q$, in \DQuotaAdditiveName, we aim to find an out-tree $T$ of $D$ rooted at $r$ whose total prize is at least $Q$ and total cost is minimum. To the best of
our knowledge, this problem has not been studied explicitly before.

On undirected graphs for \DQuotaAdditiveName, which is called \textbf{Q-URAT}, by using the algorithm by Moss and Rabbani~\cite{moss2007approximation} along with the ideas of K{\"o}nemann and Sadeghian and Sanit{\`{a}}~\cite{konemann2013lmp}, and Bateni, Hajiaghay and Liaghat~\cite{bateni2018improved}, one can provide an $O(\log{n})$-approximation algorithm. 

Here, by using the flow-based LP, we present a very simple $O((1+\epsilon)n^{2/3} \ln {n})$-approximation algorithm for \DQuotaAdditiveName that violates the quota constraint by a factor of at most $2$ and runs in a time that is polynomial in the input size and in $1/\epsilon$, for any $\epsilon >0$.

\paragraph{The Node-Weighted Directed Steiner Tree problem (\DSteinerT).} In \DSteinerT, we are given a set of terminals $K \subseteq V$ and the goal is to find an out-tree $T$ of $D$ rooted at $r$ spanning the set $K$ whose total cost of vertices is minimum.

Zelikovsky~\cite{zelikovsky1997series} provided the first approximation algorithm for the edge-cost version of \DSteinerT, we term this problem \textbf{E-}\DSteinerT, with factor $O(k^{\epsilon} (\log^{1/\epsilon}{k}))$ that runs in $O(n^{1/\epsilon})$, where $|K|=k$. A follow-up work by~\cite{charikar1999approximation} proposed a better approximation algorithm with a factor $O(\log^3{k})$ in quasi-polynomial time. This factor was improved to $O(\frac{\log^2 k}{\log \log k})$ by the randomized algorithm of Grandoni, Laekhanukit, and Li.~\cite{grandoni2019log2} in $n^{O(\log^{5} k)}$ time. 
Later, Ghuge and Nagarajan~\cite{ghuge2020quasi} proposed a deterministic $O(\frac{\log^2 k}{\log \log k})$-approximation algorithm for \textbf{E-}\DSteinerT in $n^{O(\log^{1+\epsilon} k)}$ time. Recently, Li and Laekhanukit~\cite{li2022polynomial} ruled out poly-logarithmic approximation algorithms for \textbf{E-}\DSteinerT using the standard flow-based LP. This result holds for \DSteinerT too as, in the instance in~\cite{li2022polynomial}, the incoming edges of each vertex have the same cost.

Using this standard flow-based LP, in this work, we show that one can achieve a very simple $O((1+\epsilon)\sqrt{n} \ln {n})$-approximation algorithm for \DSteinerT, that requires a time that is polynomial in the input size and in $1/\epsilon$, for any $\epsilon >0$.

Our algorithm does not improve over the best known algorithms in terms of approximation ratio, however it is much simpler and provides a guarantee over the optimum of the standard flow-based linear programming relaxation, which allows us to use it as a black-box to approximate other problems. As an example, we show that it can be used to approximate \DBudgetAdditiveName and \DQuotaAdditiveName with factors 
$O(\frac{n^{3/4}\ln{n}}{\epsilon^2})$ and
$O(n^{3/4}\ln{n})$ and constraint violation of $1+\epsilon$ and 2, respectively, for any $\epsilon\in (0,1]$.

\subsection{Undirected Graphs and Submodular Prizes}
There are only a few papers that provide approximation algorithms for some variants of \PrizeColectingSteinerTree on undirected graphs when the prize function is monotone submodular~\cite{d2022budgeted, HajiaghayiKKN12, khuller2020analyzing, kuo2014maximizing, lamprou2020improved, ran2016approximation}.

We consider two variants of node-weighted \PrizeColectingSteinerTree in this scenario.
Let $G=(V, E)$ be an undirected graph with $n$ nodes, $c:V \rightarrow \mathbb{R}^{\ge 0}$ be a cost function on nodes, $p:2^V \rightarrow \mathbb{R}^{\ge 0}$ be a monotone submodular prize function on subsets of nodes and $r\in V$ be a root vertex. 

\paragraph{The Budgeted Undirected Rooted Submodular problem (\UBudgetName).} Given a budget $B \in \mathbb{R}^{+}$, in \UBudgetName we aim to find a tree $T$ of $G$ including $r$ that costs at most $B$ and maximizes the prize function.

Kuo, Lin, and Tsai~\cite{kuo2014maximizing} studied the unrooted version of \UBudgetName, when all node costs are positive integers. They provided an $O(\Delta \sqrt{B})$-approximation algorithm, where $\Delta$ is the maximum degree of the graph.  Very recently, this factor was improved to $O(\sqrt{B})$ by D'Angelo, Delfaraz and Gilbert~\cite{d2022budgeted}. The authors in~\cite{d2022budgeted} also provided a bicriteria $O(1+\epsilon, \sqrt{B}/\epsilon^3)$ for \UBudgetName (i.e. in the rooted case) when all node costs are positive integers. Ghuge and Nagarajan~\cite{ghuge2020quasi} provided a tight quasi-polynomial time $O(\frac{\log n'}{\log \log n'})$-approximation algorithm for \UBudgetName on directed graphs that requires $(n \log B)^{O(\log^{1+\epsilon} n')}$ time, where $n'$ is the number of vertices in an optimal solution and the edge costs are positive integers.

In this paper, we provide a new flow-based linear programming formulation that leads to a polynomial time $O(\frac{1}{\epsilon^3}\sqrt{n}\log{n})$-approximation algorithm for \UBudgetName violating the budget constraint by a factor of at most $1+\epsilon$, for any $\epsilon \in (0, 1]$.
    
\paragraph{The Quota Undirected Rooted Submodular problem (\UQuotaName).} Given a quota $Q \in \mathbb{R}^{+}$, in \UQuotaName, we aim to find a tree $T$ of $G$ including $r$ such that the prize function of $T$ is at least $Q$ and the cost of $T$ is minimum. To the best of our knowledge, this problem has not been studied explicitly before.

We present an algorithm that finds trees with a good trade-off between prize and cost, which allows us to provide a good approximation for \UQuotaName.

\subsection{Summary of the Results}

In the first step of our study, we present a surprisingly simple technique that yields a polynomial time $O((1+\epsilon)\sqrt{n}\ln{n})$-approximation algorithm for \DSteinerT, for any $\epsilon >0$. Then we show that using the core idea of our approach for \DSteinerT, one can achieve very simple approximation algorithms for \DQuotaAdditiveName and \DBudgetAdditiveName. Our algorithm for \DQuotaAdditiveName guarantees an $O((1+\epsilon)n^{2/3}\ln{n})$-approximation that violates the quota constraint by a factor of at most $2$, for any $\epsilon >0$. Our algorithm for \DBudgetAdditiveName achieves an approximation factor of $O(\frac{1}{\epsilon^{2}}n^{2/3}\ln{n})$ with $1+\epsilon$ budget violation, for any $\epsilon \in (0, 1]$.  

In the next step, we introduce a novel LP relaxation for \UBudgetName and \UQuotaName. By using it, we show that one can achieve a polynomial time $O(\frac{1}{\epsilon^{3}}\sqrt{n}\log{n})$-approximation algorithm for \UBudgetName, violating the budget constraint by a factor of at most $1+\epsilon$, for any $\epsilon \in (0, 1]$. Last, by using our LP, we show that for \UQuotaName, our technique achieves either (i) an $O(\sqrt{n} \log{n})$-approximation that violates the quota constraint by a factor of at most $2$, or (ii) a $(1+\epsilon)$-approximation algorithm that violates the quota constraint by a factor of at most $2\sqrt{n}$, for any $\epsilon >0$.

Table~\ref{tbBestBounds} summarizes the best previous results on our problems and our results.
\begin{table}[t]
\caption{A summary of the best previous results and our results on our problems. $\epsilon \in (0, 1]$, $k$ is the number of terminals and $n'$ is the number of vertices in an optimal solution. For any $\alpha, \beta \ge 1$, $(\beta, \alpha)$ represents an $\alpha$-approximation with a $\beta$ violation in the constraint for the corresponding problem. Note that all the previous best results are for the directed edge-cost versions and (except~\cite{zelikovsky1997series}) are only for the case when the costs are positive integer.}\label{tbBestBounds}
\begin{center}
\begin{tabular}{|M{1.7cm}|M{6cm}|M{5cm}|}
 \hline
 Problem & Best Previous Results & This Paper \\ \hline
 \DBudgetAdditiveName & $O(\frac{\log{n'}}{\log{\log{n'}}})$~\cite{ghuge2020quasi} (quasi-poly-time), $(1+\epsilon, O(\frac{\sqrt{B}}{\epsilon^2}))$~\cite{d2022budgeted} (poly-time) & 
 $(1+\epsilon, O(\frac{n^{2/3} \ln {n}}{\epsilon^2}))$ (Theorem~\ref{thMainDBudget})\\ \hline
 \DQuotaAdditiveName & - & $(2, O(n^{2/3}\ln {n}))$ (Theorem~\ref{thDQuotaTree})\\ \hline
 \DSteinerT & $O(\frac{\log^2{k}}{\log{\log{k}}})$~\cite{ghuge2020quasi} (quasi-poly-time), $ O(k^{\epsilon}\log^{1/\epsilon}{k})$~\cite{zelikovsky1997series} (poly-time) & $O((1+\epsilon)\sqrt{n}\ln{n})$ (Theorem~\ref{thDSteinerTree})\\ \hline
 \UBudgetName & $O(\frac{\log{n'}}{\log{\log{n'}}})$~\cite{ghuge2020quasi} (quasi-poly-time), $(1+\epsilon, O(\frac{\sqrt{B}}{\epsilon^3}))$~\cite{d2022budgeted} (poly-time) & $(1+\epsilon, O(\frac{\sqrt{n}\log {n}}{\epsilon^3}))$(Theorem~\ref{thMainUBudget})\\ \hline
 \UQuotaName & - & $(2, O(\sqrt{n} \log{n}))$ or $(2\sqrt{n}, 1+\epsilon)$ (Theorem~\ref{thUQuotaTree})\\ \hline
\end{tabular}
\end{center}
\end{table}

\subsection{Further Related Work}
Here we describe those variants of prize collecting Steiner tree problems that are more closely related to our study.

\cite{charikar1999approximation} provided a polynomial time $\tilde O(k^{2/3})$-approximation algorithm for $k$-Directed Steiner Forest ($k-$\textbf{DSF}) which is a more general version of the directed Steiner problem, where the goal is to find a minimum cost sugraph that connects at least $k$ predetermined pairs. Feldman, Kortsarz, and Nutov~\cite{feldman2012improved} improved this bound to $O(k^{1/2+\epsilon})$ for any fixed $\epsilon >0$. A well-known variant of $k-$\textbf{DSF} is \textbf{DSF} in which the goal is to find a minimum cost subgraph that contains a path between every prescribed pair. \cite{chekuri2011set} provided an $O(k^{1/2+\epsilon})$-approximation algorithm for \textbf{DSF}. In terms of the number of vertices $n$, Feldman, Kortsarz, and Nutov~\cite{feldman2012improved} gave an $O(n^{\epsilon}\cdot \min\{n^{4/5}, m^{2/3}\})$-approximation algorithm for \textbf{DSF} which was improved by~\cite{berman2011improved} to $O(n^{2/3+\epsilon})$, for any $\epsilon>0$. Later,~\cite{chlamtavc2020approximating} proposed an $O(n^{3/5+\epsilon})$-approximation algorithm for \textbf{DSF} when all edges in a given directed graph have the same cost.

Hochbaum and Rao~\cite{hochbaum2020approximation} investigated \UBudgetName in which each vertex costs $1$ and provided an approximation algorithm with factor $\min\{1/((1-1/e)(1/R-1/B)), B\}$, where $R$ is the radius of the graph. \cite{chen2020optimal} investigated the edge-cost version of \UBudgetName. The authors in~\cite{vandin2011algorithms} proposed a $(\frac{2e-1}{e-1}R)$-approximation algorithm for a special case of \UBudgetName, where $R$ is the radius of an optimal solution. This problem coincides with the connected maximum coverage problem in which each set costs one. In the connected maximum coverage problem, we are given a graph in which each vertex represents a subset of elements and is associated with some nonnative cost, and each element is associated with some prize. The goal here is to find a connected subgraph whose total cost is bounded by a given budget and prize is maximum. \cite{ran2016approximation} presented an $O(\Delta\log{n})$-approximation algorithm for a special case of the connected maximum coverage problem, where $\Delta$ is the maximum degree in the graph. This bound was recently improved to $O(\log{n})$ by D'Angelo, Delfaraz and Gilbert~\cite{d2022budgeted}. The authors in \cite{d2022budgeted} also showed that one can obtain an $O(f\log{n})$-approximation algorithm for the connected maximum coverage problem, where $f$ is the maximum frequency of an element. 

One of the applications of \UBudgetName is a problem in wireless sensor networks called \emph{Budgeted Sensor Cover Problem} (\textbf{BSCP}), where the goal is to find a set of connected sensors of size at most $B$ to maximize the number of covered users. The authors in~\cite{kuo2014maximizing} provided a $5(\sqrt{B}+1)/(1-1/e)$-approximation algorithm for \textbf{BSCP}, which was improved by~\cite{xu2021throughput} to $\lfloor\sqrt{B}\rfloor/(1-1/e)$. Huang, Li, and Shi~\cite{huang2020approximation} proposed a $8(\lceil 2\sqrt{2} C\rceil+1)^2/(1-1/e)$-approximation algorithm for \textbf{BSCP}, where $C=O(1)$. Danilchenko, Segal, and Nutov~\cite{danilchenko2020covering} investigated a closely related problem to \textbf{BSCP}, where the goal is to place a set of connected disks (or squares) in the plane such that the total weight of target points is maximized. They proposed a polynomial time $O(1)$-approximation algorithm for this problem. Another variant of \UBudgetName is the budgeted connected dominating set problem in which we need to find a connected subset of vertices of size at most $B$ in a given undirected graph that maximizes the profit function on the dominated vertices. Khuller, Purohit, and Sarpatwar~\cite{khuller2020analyzing} investigated this problem when the profit function is a \emph{special submodular function} and designed a $\frac{12}{1-1/e}$-approximation algorithm. By generalizing the analysis of~\cite{khuller2020analyzing}, this factor was improved by Lamprou,  Sigalas, and Zissimopoulos~\cite{lamprou2020improved} to $\frac{11}{1-e^{-7/8}}$. They also showed that this problem cannot be approximated to within less than a factor of $\frac{e}{e-1}$, unless $P=NP$. 

Lee and Dooly~\cite{lee1996algorithms} provided a $(B-2)$-approximation algorithm for \UndirectedBudgetAdditive, where each vertex costs $1$. Johnson, Minkoff, and Phillips~\cite{johnson2000prize} introduced an edge-cost variant of \UndirectedBudgetAdditive, called \textbf{E\UndirectedBudgetAdditive}. They proposed a $(5+\epsilon)$-approximation algorithm for the unrooted version of \textbf{E\UndirectedBudgetAdditive} using Garg's $3$-approximation algorithm~\cite{garg3} for the $k$-MST problem, and observed that a $2$-approximation for $k$-MST would lead to a $3$-approximation for \textbf{E\UndirectedBudgetAdditive}. This observation by~\cite{johnson2000prize} along with the Garg's $2$-approximation algorithm~\cite{garg2005saving} for $k$-MST lead to a $3$-approximation algorithm for the unrooted version of \textbf{E\UndirectedBudgetAdditive}. A polynomial time $2$-approximation algorithm was proposed by~\cite{paul2020budgeted} for \textbf{E\UndirectedBudgetAdditive}. Using this $2$-approximation algorithm for \textbf{E\UndirectedBudgetAdditive} as a black-box,~\cite{d2022budgeted} showed that one can achieve a polynomial time $16\Delta/\epsilon^{2}$-approximation algorithm for \UndirectedBudgetAdditive that violates the budget constraint by a factor of $1+\epsilon$, where $\Delta$ is the maximum degree of the graph and $\epsilon \in (0, 1]$.

\cite{zhou2018relay} studied a variant of \textbf{E\UndirectedBudgetAdditive} in the wireless sensor networks and provided a $10$-approximation algorithm. \cite{seufert2010bonsai} investigated a special case of the unrooted version of \textbf{URAT}, where each vertex costs $1$ and we aim to find a tree with at most $B$ nodes maximizing the accumulated prize. Indeed, this is equivalent to the unrooted version of \textbf{E\UndirectedBudgetAdditive} when the cost of each edge is $1$ and we are looking for a tree containing at most $B-1$ edges to maximize the accumulated prize. \cite{seufert2010bonsai} provided a $(5+\epsilon)$-approximation algorithm for this problem. Similarly, \cite{huang2019maximizing} investigated this variant of \textbf{E\UndirectedBudgetAdditive} (or \UndirectedBudgetAdditive) in the plane and proposed a $2$-approximation algorithm.

The edge-cost quota variant of \UndirectedBudgetAdditive, called \textbf{EQ-URAT}, has been investigated by Johnson, Minkoff, and Phillips~\cite{johnson2000prize}. They showed that an $\alpha$-approximation algorithm for the $k$-MST problem results in an $\alpha$-approximation algorithm for \textbf{EQ-URAT}. Hence, one can have a $2$-approximation algorithm for \textbf{EQ-URAT} using the $2$-approximation algorithm of Garg~\cite{garg2005saving} for the $k$-MST problem. \cite{d2022budgeted} showed that using this $2$-approximation algorithm for \textbf{EQ-URAT} as a black-box, one can achieve a $2\Delta$-approximation algorithm for the quota variant \UndirectedBudgetAdditive (\textbf{Q-URAT}).

In the prize collecting variants of \UndirectedBudgetAdditive, called \textbf{PC-URAT}, the goal is to minimize the cost of the nodes in the resulting tree plus the prizes of vertices not spanned by the tree.
K{\"o}nemann and Sadeghian and Sanit{\`{a}} \cite{konemann2013lmp} designed a Lagrangian multiplier preserving $O(\ln{n})$-approximation algorithm for \textbf{PC-URAT}. Bateni, Hajiaghay and Liaghat~\cite{bateni2018improved} provided an $O(\log{n})$-approximation algorithm for a more general case of \textbf{PC-URAT}. There exists no $o(\ln{n})$-approximation algorithm for \textbf{PC-URAT}, unless $NP \subseteq \text{DTIME}(n^{\text{Polylog}(n)})$~\cite{klein1995nearly}. 
Goemans and Williamson~\cite{goemans1995general} provided a $2$-approximation algorithm for the edge cost variant of \textbf{PC-URAT}, called \textbf{EPC-URAT}. Later, \cite{archer2011improved} proposed a $(2-\epsilon)$-approximation algorithm for \textbf{EPC-URAT} which was a breakthrough upon the long-standing factor of $2$. \cite{HajiaghayiKKN12} considered a more general variant of \textbf{EPC-URAT} in which every prescribed pair $(u,v)$ should be connected through at least $r_{uv}$ edge-disjoint paths, otherwise we should pay the penalty. Furthermore, the penalty functions in their setting are monotone submodular functions. \cite{HajiaghayiKKN12} provided a constant-factor approximation algorithm for this general version of \textbf{EPC-URAT}.

\section{Notation and problem statement}

For an integer $s$, let $[s]:=\{1,\ldots,s\}$. Let $G=(V, E)$ (resp. $D=(V, A)$) be an undirected graph (resp. a directed graph) with a distinguished vertex $r \in V$ and $c:V \rightarrow \mathbb{R}^{\ge 0}$ be a nonnegative cost function on nodes. 

A \emph{path} is an undirected graph made of a sequence of distinct vertices $\{v_1, \dots, v_s\}$ and a sequence of edges $\{v_i, v_{i+1}\}$, $i\in[s-1]$. A \emph{tree} is an undirected graph in which any two vertices are connected by exactly one path.  If a subgraph $T$ of a graph $G$ is a tree, then we say that $T$ is a tree of $G$. 

A \emph{directed path} is a directed graph made of a sequence of distinct vertices $(v_1, \dots, v_s)$ and a sequence of directed edges $(v_i, v_{i+1})$, $i\in[s-1]$.
An \emph{out-tree} (a.k.a. out-arborescence) is a directed graph in which there is exactly one directed path from a specific vertex $r$, called \emph{root}, to each other vertex. If a subgraph $T$ of a directed graph $D$ is an out-tree, then we say that $T$ is an out-tree of $D$. 
For simplicity of reading, we will refer to out-trees simply as trees when it is clear that we are in the context of directed graphs.

Given two nodes $u$ and $v$ in $V$, the cost of a path from $u$ to $v$ in $G$ (resp. $D$) is the sum of the cost of its nodes. A path from $u$ to $v$ with the minimum cost is called a \textit{shortest path} and its cost, denoted by $dist(u,v)$, is called the \textit{distance} from $u$ to $v$ in $G$ (resp. $D$). Let $F$ denote the maximum distance from $r$ to a node in $V$, $F:=\max_{v\in V}\{ dist(r,v) \}$.

For any subgraph $G'$ of $G$ (resp. $D'$ of $D$), we denote by $V(G')$ and $E(G')$ (resp. $V(D')$ and $A(D')$) the set of nodes and edges in $G'$ (resp. $D'$), respectively. Consider a subset $S \subseteq V$. $G[S]$ (resp. $D[S]$) denotes the graph induced by set $S$, i.e., $E(G[S])=\{\{u, v\} \in E|u,v \in S\}$ (resp. $A(D[S])=\{(u, v) \in A|u,v \in S\}$).

Now we state the problems we consider on directed and undirected graphs.

\paragraph{Directed Graphs.}



Let $D=(V, A)$ be a directed graph with $n$ nodes, $c:V \rightarrow \mathbb{R}^{\ge 0}$ be a cost function on nodes,  $p:V \rightarrow \mathbb{R}^{\ge 0}$ be a prize function on nodes, $r\in V$ be a root vertex. We consider three variants of node-weighted Steiner problems.

\begin{enumerate}
    
    \item The Budgeted Directed Rooted Additive Tree problem (\DBudgetAdditiveName): Given a budget $B \in \mathbb{R}^{+}$, find an out-tree $T$ of $D$ rooted at $r$ that maximizes $p(T)=\sum_{v \in V(T)}p(v)$ subject to $c(T)=\sum_{v \in V(T)}c(v) \le B$.
    
    \item The Quota Directed Rooted Additive Tree problem (\DQuotaAdditiveName): Given a quota $Q \in \mathbb{R}^{+}$, find an out-tree $T$ of $D$ rooted at $r$ that minimizes $c(T)=\sum_{v \in V(T)}c(v)$ subject to $p(T)=\sum_{v \in V(T)}p(v) \ge Q$.
    
    \item The node-weighted Directed Steiner Tree problem (\DSteinerT): Given a set of terminals $K \subseteq V$, find an out-tree $T$ of $D$ rooted at $r$ such that $K\subseteq V(T)$ and $c(T)=\sum_{v\in V(T)}c(v)$ is minimum.
    
\end{enumerate}

\paragraph{Undirected Graphs.}

Let $G=(V, E)$ be an undirected graph with $n$ nodes, $c:V \rightarrow \mathbb{R}^{\ge 0}$ be a cost function on nodes, $p:2^V \rightarrow \mathbb{R}^{\ge 0}$ be a monotone non-decreasing submodular prize function on subsets of nodes, $r\in V$ be a root vertex. We consider two variants of node-weighted Steiner problems.

\begin{enumerate}
    
    \item The Budgeted Undirected Rooted Submodular problem (\UBudgetName): Given a budget $B \in \mathbb{R}^{+}$, find a tree $T$ of $G$ that maximizes $p(T)$ subject to $r \in V(T)$ and $c(T)=\sum_{v \in V(T)}c(v) \le B$.
    
    \item The Quota Undirected Rooted Submodular problem (\UQuotaName): Given a quota $Q \in \mathbb{R}^{+}$, find a tree $T$ of $G$ that minimizes $c(T)=\sum_{v \in V(T)}c(v)$ subject to $r \in V(T)$ and $p(T)\ge Q$.
\end{enumerate}

Consider a budgeted problem \textbf{BP} $\in \{\text{\DBudgetAdditiveName, \UBudgetName}\}$ (resp. a quota problem \textbf{QP} $\in \{\text{\DQuotaAdditiveName, \UQuotaName}\}$) with a budget $B$ (resp. quota $Q$). For $\alpha,\beta\geq 1$, a bicriteria $(\beta, \alpha)$-approximation algorithm for \textbf{BP} (resp. \textbf{QP}) is one that, for any instance $I_B$ (resp. $I_Q$) of the problem, returns a solution $Sol_{I_B}$ (resp. $Sol_{I_Q}$) such that $p(Sol_{I_B}) \ge \frac{OPT_{I_B}}{\alpha}$ (resp. $c(Sol_{I_Q}) \le \alpha OPT_{I_Q}$)  and $c(Sol_{I_B}) \le \beta B$ (resp. $p(Sol_{I_Q}) \ge \frac{Q}{\beta}$), where $OPT_{I_B}$ (resp. $OPT_{I_Q}$) is the optimum for $I_B$ (resp. $I_Q$).
\section{Results and Techniques}

In this section, we summarize our results and give an overview of the new techniques. 
We distinguish between problems on directed and undirected graphs.

\subsection{Directed Graphs}
We introduce a new technique to compute a tree with bounded cost and prize and we use it in all our algorithms for the problems on directed graphs. Since our algorithm for \DSteinerT presents our main technique in a more simpler way, in the following, we describe the core idea behind such technique by using the case of \DSteinerT as an illustrative example.

The next theorem, whose full proof is given in Section~\ref{sec:DST}, provides a polynomial time approximation algorithm for \DSteinerT.
\begin{restatable}{theorem}{thDSteinerTree}\label{thDSteinerTree}
For any $\epsilon >0$, \DSteinerT admits an $O\left((1+\epsilon)\sqrt{n} \ln{n}\right)$-approximation algorithm whose running time is polynomial in the input size and in $1/\epsilon$.
\end{restatable}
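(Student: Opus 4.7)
The plan is to use the standard flow-based LP relaxation of \DSteinerT as a benchmark and to assemble the output tree via an iterative, density-based greedy procedure. First I would write the LP with variables $x_v \in [0,1]$ for each $v \in V$ (with $x_r = 1$) and with non-negative flow variables $f^t_{(u,v)}$ for each $t \in K$ and each arc $(u,v) \in A$, enforcing one unit of $r$-to-$t$ flow via flow conservation together with $\sum_{u:(u,v)\in A}f^t_{(u,v)} \leq x_v$ for every $v\ne r$; the objective is $\min \sum_{v\in V} c(v)\,x_v$. This LP is of polynomial size and solvable in polynomial time, yielding an optimum $\tau^*\le\mathrm{OPT}$. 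The output tree $T$ is then built by initializing $T\leftarrow\{r\}$ and $K_0\leftarrow K$ and, in iteration $i\ge 1$, adding a subtree $T_i$ rooted at $r$ that covers at least one terminal in $K_{i-1}$ and whose \emph{density} $c(T_i)/|V(T_i)\cap K_{i-1}|$ is at most $O(\sqrt n)\cdot\tau^*/|K_{i-1}|$; the iteration terminates when $K_i=\emptyset$, after at most $|K|\le n$ rounds.

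The correctness of this scheme rests on two facts. The first is the standard greedy-covering estimate: if the density in iteration $i$ is at most $\alpha\cdot\tau^*/|K_{i-1}|$ with $\alpha=O(\sqrt n)$, then $c(T)=\sum_i c(T_i)\le \alpha\,\tau^*\sum_i|V(T_i)\cap K_{i-1}|/|K_{i-1}|\le O(\alpha\log|K|)\cdot\tau^* = O(\sqrt n \log n)\cdot\mathrm{OPT}$, where the final sum is bounded by the harmonic estimate $\sum_{j=0}^{|K|-1}1/(|K|-j)=O(\log|K|)$. The second is the construction of the density-bounded subtree itself, which I would realize by simple enumeration. For every pair $(v,\ell)$ with $v\in V$ and $\ell\in\{1,\lceil 1+\epsilon\rceil,\lceil(1+\epsilon)^2\rceil,\ldots,|K_{i-1}|\}$, I would form the candidate two-level subtree consisting of a shortest $r\to v$ node-weighted path together with shortest $v\to t$ paths for the $\ell$ terminals of $K_{i-1}$ nearest to $v$ (breaking ties arbitrarily), and then return the candidate of minimum density.

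The main obstacle, which I expect to be the heart of the argument, is showing that at least one such two-level candidate indeed attains density $O(\sqrt n)\cdot\tau^*/|K_{i-1}|$. My approach would be a dichotomy on the fractional optimum $(x^\ast,f^\ast)$: either there is a ``hub'' vertex $v$ through which an $\Omega(1/\sqrt n)$-fraction of the flow to the residual terminals passes along short segments, yielding a two-level candidate $(v,\ell)$ of the desired density, or no such hub exists, in which case a pigeonhole on the LP-decomposed path costs produces a single-terminal candidate ($\ell=1$) whose density is bounded by the appropriate average of $r$-to-terminal distances; the exponent $1/2$ is the worst-case balance of these two cases. The discretization $\ell\in\{(1+\epsilon)^j\}$ turns an exact target count into an approximate one with multiplicative loss $(1+\epsilon)$, giving the final factor $O((1+\epsilon)\sqrt n\log n)$; since the enumeration has $O(n\cdot\log_{1+\epsilon}|K|)=\mathrm{poly}(n,1/\epsilon)$ candidates and each is built in polynomial time via shortest-path computations, the overall running time is polynomial in the input size and in $1/\epsilon$.
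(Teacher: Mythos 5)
You take a genuinely different route from the paper: an iterative density-greedy in the Klein--Ravi style, whereas the paper builds the whole tree in one shot from the LP support. For contrast, the paper's construction is: prune so that $dist(r,v)\le(1+\epsilon)c(T^*)$ for all $v$; solve the flow LP; let $U=\{v: x_v\ge 1/\sqrt n\}$ (so $c(U)\le\sqrt n\,c(T^*)$); split $K$ into cheap terminals $CH$, reachable from $r$ inside $D[U]$, and expensive terminals $EX$; cover $CH$ inside $D[U]$ at cost $\le\sqrt n\,c(T^*)$; show that for each $t\in EX$ the set $X_t$ of last low-capacity gateway vertices on $r$-to-$t$ paths has $|X_t|\ge\sqrt n$; find a hitting set $X'$ of size $\le\sqrt n\ln n$; and pay $(1+\epsilon)c(T^*)$ per vertex of $X'$ to reach it from $r$, plus one more $c(U)$ to route from $X'$ to all of $EX$ through $D[U]$. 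There is no greedy loop and hence no harmonic-sum overhead.

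The gap in your plan sits exactly where you flag it, and I do not think the sketched dichotomy closes it. To finish via the harmonic sum you need, in \emph{every} iteration, a two-level candidate of density $O(\sqrt n)\cdot\tau^*/|K_{i-1}|$, and this is strictly stronger than what the LP structure delivers when a majority of the residual terminals are expensive. The averaging argument behind the hitting set gives a vertex $w$ lying in $X_t$ for at least $|K_{i-1}|/\sqrt n$ residual terminals, but no more in the worst case; and the two-level tree $r\to w\to\{t : w\in X_t\}$ still costs $\Omega(\sqrt n\cdot c(T^*))$ because you must pay for essentially all of $U$ to route from $w$ to the terminals it serves. Its density is therefore $\Omega(n)\cdot c(T^*)/|K_{i-1}|$, not $O(\sqrt n)\cdot c(T^*)/|K_{i-1}|$. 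Re-tuning the threshold $x_v\ge\theta$ does not help: if you raise $\theta$ the cheap side degrades (fewer terminals reachable through $U$), if you lower it $c(U)$ blows up, and the hub-side density stays $\Omega(n)\cdot c(T^*)/|K_{i-1}|$ for every $\theta$ because the coverage and the cost scale together. Covering $EX$ in one shot with the full hitting set instead gives per-iteration density $O(\sqrt n\ln n)\cdot c(T^*)/|K_{i-1}|$, but then the harmonic sum produces $O(\sqrt n\ln^2 n)$, one log worse than the theorem. Your ``no hub'' fallback (a single-terminal candidate of cost $\le(1+\epsilon)c(T^*)$) only has the required density when $|K_{i-1}|=O(\sqrt n)$, so it cannot rescue the early iterations. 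To recover the theorem you would have to drop the greedy for the expensive terminals and handle them in one shot via the hitting set, which is precisely what the paper does.
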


Let $I=<D=(V, A), c, r, K>$ be an instance of \DSteinerT. First note that we can suppose w.l.o.g. that any vertex $v \in V$ has a distance no more than $(1+\epsilon)c(T^*)$ from $r$ for any $\epsilon>0$, where $T^*$ is an optimal solution to~\DSteinerT.

 For every $v \in V$, we let $\mathcal{P}_v$ be the set of simple paths in $D$ from $r$ to $v$. For each $v\in V$, let $c_v=c(v)$.
We use the standard flow-based linear programming relaxation for \DSteinerT, which is as follows. 
\begin{align}\label{lpDSteinerTree}
 & \tag{LP-DST}\\
 \text{minimize}\quad \sum_{v \in V} x_v c_v& \\
  \text{subject to}\quad \sum_{P \in \mathcal{P}_t}f^t_P &= 1, &&\forall t \in K\label{lpDSteinerTree:overallflow}\\
                 \sum_{P \in \mathcal{P}_t:v \in P} f^t_{P}&\le x_v, &&\forall v \in V, t \in K\label{lpDSteinerTree:capacity}\\
                 0 \le &x_v \le 1, &&\forall v\in V\notag\\
                 0 \le &f^t_P \le 1, &&\forall t\in K, P \in \mathcal{P}_t\notag
\end{align}

For any terminal $t \in K$ and path $P \in \mathcal{P}_t$, $f^t_P$ denotes the amount of flow sent from $r$ to $t$ using $P$. Constraint~\eqref{lpDSteinerTree:overallflow} ensures that for any $t \in K$, the total amount of flow sent from $r$ to $t$ has to be equal to $1$. Constraint~\eqref{lpDSteinerTree:capacity} ensures that for any $t \in K$, the total amount flow sent from $r$ to $t$ passing through a vertex $v$ is at most $x_v$ for any $v \in V$. Note that, although the number of variables in~\eqref{lpDSteinerTree} is exponential in the input size, we can find an optimal solution in polynomial time as we need to find, independently for each $t$, the maximum flow from $r$ to $t$ by using $x_v$ as node capacities (see Appendix~\ref{apx:lp}).

It is easy to see that an optimal solution for~\eqref{lpDSteinerTree} provides a lower bound to $c(T^*)$. In fact, the solution to~\eqref{lpDSteinerTree} in which $x_v$ is set to 1 if $v\in V(T^*)$ and 0 otherwise, and $f^t_P$ is set to 1 if $P$ is the unique path from $r$ to $t$ in $T^*$ and to 0 otherwise, is feasible for~\eqref{lpDSteinerTree} and has value $\sum_{v \in V} x_v c_v=c(T^*)$.

Let $x$ be an optimal solution for~\eqref{lpDSteinerTree} and let $S\subseteq V$ be the set of all vertices $v$ with $x_v >0$ in~\eqref{lpDSteinerTree}. 

Let $U \subseteq S$ be the set of all nodes $v$ with $x_v \ge \frac{1}{\sqrt{n}}$ and $U'=S\setminus U$. We split the set $K$ of terminals into two sets: $CH$ is the set of terminals that are reachable from $r$ through at least a path $P$ that contains only vertices in $U$; and $EX$ is the set of terminals that are reachable from $r$ only through paths containing at least a node in $U'$, $EX=K\setminus CH$. 
We compute two trees $T^{CH}$ and $T^{EX}$ rooted at $r$ and spanning all terminals in $CH$ and $EX$, respectively, and bound their cost. Then we merge the two trees to obtain a directed Steiner tree of all terminals.

As $\sum_{v \in U} x_v c_v \le \sum_{v \in S} x_v c_v \le c(T^*)$ and $x_v \ge \frac{1}{\sqrt{n}}$ for any $v \in U$ (by definition), then $\sum_{v \in U} c_v \leq \sqrt{n}\cdot c(T^*)$ and hence we can span all terminals in $CH$ with a tree $T^{CH}$ rooted at $r$ that costs $c(T^{CH})\leq\sqrt{n}\cdot c(T^*)$.

Now we  show that we can compute a tree $T^{EX}$ rooted at $r$ spanning all the terminals in $EX$ with cost $c(T^{EX})=O\left((1+\epsilon)c(T^*)\sqrt{n}\ln{n}\right)$. The algorithm to build $T^{EX}$ can be summarized as follows. We first compute, for each $t\in EX$, the set $X_t$ of vertices $w$ in $U'$ for which there exists a path from $w$ to $t$ that uses only vertices in $U\cup \{w\}$. Then we compute a small-size hitting set $X'$ of all sets $X_t$. Finally, we connect $r$ to the vertices of $X'$ and the vertices of $X'$ to those in $EX$ in such a way that each node $t$ in $EX$ is reached from one of the vertices in $X'$ that hits $X_t$. The bound on the cost of $T^{EX}$ follows from the size of $X'$ and from the cost of nodes in $U$. 

Indeed, we can show that there exists a hitting set $X'$ of all sets $X_t$ with size $|X'| \le \sqrt{n}\ln{n}$. To show this, we first prove  (see Claim~\ref{clBound-X-t}) that $|X_t| \ge \sqrt{n}$ for any $t \in EX$ as follows. By constraint~\eqref{lpDSteinerTree:overallflow} of~\eqref{lpDSteinerTree},  each terminal should receive one unit of flow, moreover, by definition of $EX$ and $X_t$, any path from $r$ to any $t \in EX$ should pass through a vertex $w \in X_t$, which implies that any flow from $r$ to $t$ should also pass through a vertex $w \in X_t$. Therefore, one unit of flow needs to pass through vertices in $X_t$. Since, by definition of $U'$, each node $w \in X_t \subseteq U'$ has a capacity $x_w$ smaller than $\frac{1}{\sqrt{n}}$, we have $|X_t| \ge \sqrt{n}$. 



Since each $X_t$, $t\in EX$, has at least $\sqrt{n}$ elements and the size of $\bigcup_{t \in EX}X_t$ is at most $n$, then we can find in polynomial time a subset $X'$ of $\bigcup_{t \in EX}X_t$ that hits all the sets $X_t$ and has size at most $\sqrt{n}\ln{n}$ (see Claim~\ref{clHittingSet}).





For each $w \in X'$, we find a shortest path from $r$ to $w$. Since $|X'| \le \sqrt{n}\ln{n}$ and $dist(r, v) \le (1+\epsilon)c(T^*)$ for any $v \in V$, the total cost of these shortest paths from $r$ to $X'$ is at most $(1+\epsilon)\sqrt{n}\ln{n}\cdot c(T^*)$. We also find, for each $t\in EX$, a shortest path from an arbitrary vertex $w\in X' \cap X_t$ to $t$ through only vertices in $U$ (such a path exists by definition). The total cost of these shortest paths is at most $\sqrt{n}\cdot c(T^*)$ as they contain only vertices in $U$. This implies that there exists a tree $T^{EX}$ rooted at $r$ spanning the terminals in $EX$ with $c(T^{EX})=O((1+\epsilon)\sqrt{n}\ln{n}\cdot c(T^*))$.


Finally, as both $T^{EX}$ and $T^{CH}$ are rooted at $r$, we can find a tree $T$ rooted at $r$ that spans all vertices $V({T^{EX}})\cup V({T^{CH}})$ and has cost $c(T)=O\left((1+\epsilon) \sqrt{n}\ln{n}\cdot c(T^*)\right)$.


It is worth pointing out that by using the above algorithm for \DSteinerT as a black-box, one can obtain a polynomial time bicriteria $\left(2, O((1+\epsilon)n^{3/4}\ln{n})\right)$-approximation algorithm for \DQuotaAdditiveName, and a polynomial time bicriteria $\left(1+\epsilon, O(\frac{n^{3/4}\ln{n}}{\epsilon^2})\right)$-approximation algorithm for \DBudgetAdditiveName, where $\epsilon \in (0, 1]$.
We briefly explain how to obtain the approximation algorithm for \DQuotaAdditiveName, the one for \DBudgetAdditiveName is similar. Let $I_Q=<D=(V, A), c, p, r>$ be an instance of \DQuotaAdditiveName. First note that we can suppose w.l.o.g. that any vertex $v \in V$ has a distance no more than $(1+\epsilon)c(T^*_Q)$ from $r$, where $T^*_Q$ is an optimal solution to~\DQuotaAdditiveName. We first use a standard linear programming flow-based formulation to find an optimal fractional solution for \DQuotaAdditiveName (see~\eqref{lpDBudgetQuotaAdditive}). Let $x$ be an optimal solution for this linear program and $S=\{v \in V|x_v>0\}$. Next we partition $S$ into two subsets $S_1$ and $S_2$, where $S_1=\{v \in S|x_v \ge \frac{1}{n^{1/4}}\}$ and $S_2=S\setminus S_1$. Let $OPT$ be the fractional optimum for \DQuotaAdditiveName. 
If $\sum_{v \in S_1} x_v p(v) \ge Q/2$, then let $I=<D=(V, A), c, r, S_1>$ be an instance of \DSteinerT in which $S_1$ is the set of terminals. 
For the instance $I$, consider the solution $x'$ to~\eqref{lpDSteinerTree} in which $x'_v=1$ for any $v \in S_1$ and $x'_v=n^{1/4}\cdot x_v$ for any $v \in S_2$. This implies that an optimal fractional solution of~\eqref{lpDSteinerTree} obtained from $I$ costs at most $n^{1/4}\cdot OPT$ and all nodes in $S_1$ receive one unit of flow.
Now we can use our $O((1+\epsilon)\sqrt{n}\ln{n})$-approximation algorithm for \DSteinerT to find a tree spanning $S_1$. We have two subsets $CH, EX \subseteq S_1$ defined as above, so we can cover $CH$ and $EX$ by out-trees of cost at most $n^{3/4}OPT$ and $O(n^{3/4}\ln{n}(1+\epsilon)c(T^*_Q))$, respectively. 
If $\sum_{v \in S_2} x_v p(v) \ge Q/2$, then, by an averaging argument, we can find a subset of vertices with prize $Q/2$ and size at most $2n^{3/4}$. Hence we can span these $2n^{3/4}$ vertices by an out-tree of cost at most $2(1+\epsilon)n^{3/4}\cdot c(T^*_Q)$, for any $\epsilon >0$, as we can assume that no vertex has a distance more than $(1+\epsilon)c(T^*_Q)$ from $r$.

We can improve the approximation ratio for \DBudgetAdditiveName and \DQuotaAdditiveName by directly using our technique.
Our results for \DBudgetAdditiveName and \DQuotaAdditiveName are the following.

\begin{restatable}{theorem}{MainDBudgetTheorem}\label{thMainDBudget}
\DBudgetAdditiveName admits a polynomial time bicriteria $\left(1+\epsilon, O(\frac{n^{2/3}\ln{n}}{\epsilon^2})\right)$-approximation algorithm, for any $\epsilon \in (0, 1]$.
\end{restatable}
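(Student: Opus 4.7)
The plan is to apply the LP-partition technique underlying Theorem~\ref{thDSteinerTree} \emph{directly} (rather than as a black box), with threshold $t=n^{-1/3}$ replacing $n^{-1/2}$ in order to balance an in-$S_1$ spanning cost of $n^{1/3}B$ against a hitting-set-based routing cost of $n^{2/3}B\ln n$. First, at the cost of a $(1+\epsilon)$ budget blow-up, I would assume every vertex $v\in V$ satisfies $dist(r,v)\le (1+\epsilon)B$, since otherwise $v$ cannot lie in any feasible tree. I would then relax \DBudgetAdditiveName to the flow LP analogous to~\eqref{lpDSteinerTree}: variables $x_v\in[0,1]$ and $f^v_P\ge 0$ over simple $r$-$v$ paths, objective $\max\sum_v x_v p(v)$, constraints $\sum_v x_v c(v)\le B$, $\sum_{P\in\mathcal{P}_v} f^v_P = x_v$ for all $v$, and $\sum_{P\in\mathcal{P}_v:u\in P} f^v_P\le x_u$ for all $u,v$. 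As in Appendix~\ref{apx:lp}, this admits a polynomial-size equivalent solvable in polynomial time, and its optimum $OPT_{LP}$ upper-bounds $OPT$. Let $x^*$ be an optimum, $S=\{v:x^*_v>0\}$, $S_1=\{v\in S:x^*_v\ge n^{-1/3}\}$, and $S_2=S\setminus S_1$.

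At least one of $S_1,S_2$ carries half of the LP prize. In the \emph{light} case $\sum_{v\in S_2} x^*_v p(v)\ge OPT_{LP}/2$, the bound $x^*_v< n^{-1/3}$ forces $\sum_{v\in S_2} p(v)\ge n^{1/3}OPT_{LP}/2$; by pigeonhole some $v^*\in S_2$ has $p(v^*)\ge OPT/(2n^{2/3})$, so the shortest $r$-to-$v^*$ path (cost $\le (1+\epsilon)B$) is already a valid solution with the required prize.

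The \emph{heavy} case $\sum_{v\in S_1} x^*_v p(v)\ge OPT_{LP}/2$ is the harder part. Because $\sum_{v\in S_1}c(v)\le n^{1/3}\sum_{v\in S_1} x^*_v c(v)\le n^{1/3}B$, I would span $S_1$ by adapting the \DSteinerT partition: split $S_1=S_1^{CH}\cup S_1^{EX}$ according to whether $v$ is reachable from $r$ using a path entirely contained in $S_1$. If $S_1^{CH}$ carries a constant fraction of the LP prize, an $S_1$-internal spanning out-tree of cost $\le n^{1/3}B$ together with tree-trimming already gives the desired ratio. Otherwise I would work on $S_1^{EX}$, defining the bottleneck sets $X_v\subseteq S_2$ exactly as in the \DSteinerT proof. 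Because here $v$ needs only $x^*_v$ units of flow instead of one, the natural lower bound degrades to $|X_v|\ge x^*_v/t$, which is non-uniform in $v$. To restore uniformity, I would bucket $S_1^{EX}$ by powers of two of $x^*_v$ into $O(\log n)$ bands, select the band $B_{j^*}$ carrying $\Omega(OPT/\log n)$ of the fractional prize, and apply Claim~\ref{clHittingSet} inside $B_{j^*}$, where $|X_v|=\Omega(2^{-j^*}n^{1/3})$ uniformly. Greedy hitting produces $X'\subseteq S_2$ of size $O(n^{2/3}\cdot 2^{j^*}\ln n)$. Connecting $r$ to each $w\in X'$ via a shortest path of cost $\le(1+\epsilon)B$ and then each $w$ to its hit vertices through $S_1$ yields a tree of cost $O((1+\epsilon)n^{2/3}\cdot 2^{j^*} B\ln n)$ spanning vertices of total prize $\Omega(OPT\cdot 2^{-j^*}/\log n)$. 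A standard Bateni--Hajiaghayi--Liaghat-style trimming then extracts a subtree of cost $\le (1+\epsilon)B$ and prize $\Omega(\epsilon^2\cdot OPT/(n^{2/3}\ln n))$, the $2^{j^*}$ factor canceling between total cost and total prize.

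The main obstacle is exactly this non-uniformity of $|X_v|$ in the heavy case. In \DSteinerT the terminal flow-conservation requirement gave the uniform bound $|X_v|\ge \sqrt n$ for free; here $x^*_v$ ranges across $[n^{-1/3},1]$, so uniformity must be bought by bucketing, which is what introduces the extra $\log n$ factor in the final ratio. The choice $t=n^{-1/3}$ is the balance point that equates the span-cost $n^{1/3}B$ on $S_1$ with the routing cost $n^{2/3}B\ln n$ through the hitting set, and it is precisely by running the technique directly inside \DBudgetAdditiveName (rather than feeding the induced Steiner instance into Theorem~\ref{thDSteinerTree}) that the exponent drops from $3/4$ to $2/3$. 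The $1/\epsilon^2$ factor, together with the $(1+\epsilon)$ budget violation, comes from the closing tree-trimming step, which is enabled by the preprocessing bound $dist(r,v)\le (1+\epsilon)B$ that caps each retained subtree's stem cost.
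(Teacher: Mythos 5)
Your overall strategy — flow‐LP relaxation, threshold partition into $S_1/S_2$ at $n^{-1/3}$, hitting set in the heavy case, pigeonhole in the light case, then Bateni–Hajiaghayi–Liaghat trimming — is in the right spirit and matches the paper's framework. But the heavy case, as you set it up, does not give the claimed bound; the paper avoids the problem you identify with a different decomposition.

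The crux is that you define the ``cheap'' reachability set with the \emph{same} threshold $n^{-1/3}$ that defines $S_1$, so the bottleneck sets $X_v\subseteq S_2$ for $v\in S_1^{EX}$ satisfy only $|X_v|\geq x^*_v/n^{-1/3}=x^*_v n^{1/3}$, which for $x^*_v$ near $n^{-1/3}$ is a trivial $|X_v|\geq 1$. You then bucket by powers of two and restrict to a band $B_{j^*}$ carrying a $\Omega(1/\log n)$ fraction of the LP prize. That bucketing costs you a genuine $\log n$. Working through the counting for a band with $x^*_v\approx 2^{-j^*}$ gives $|X_v|=\Omega(2^{-j^*}n^{1/3})$, hitting set $|X'|=O(2^{j^*}n^{2/3}\ln n)$, tree cost $O(2^{j^*}n^{2/3}B\ln n)$, and prize $\Omega(2^{j^*}\cdot OPT/\log n)$ (note: your proposal writes $\Omega(OPT\cdot 2^{-j^*}/\log n)$ — that sign is backwards; since $p(v)\geq x^*_v p(v)/2^{-j^*}$, the factor should be $2^{j^*}$). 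The $2^{j^*}$ cancels between prize and cost, as you say, but you are left with a prize-to-cost ratio $\Omega\big(OPT/(n^{2/3}B\ln^2 n)\big)$, which after trimming yields $O(n^{2/3}\ln^2 n/\epsilon^2)$, not the $O(n^{2/3}\ln n/\epsilon^2)$ stated in the theorem. The extra logarithm from bucketing does not cancel; it compounds with the $\ln n$ from Claim~\ref{clHittingSet}.

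The paper's fix is to use \emph{two} thresholds rather than one. Keep $S_1=\{v:x_v\geq n^{-1/3}\}$ as the set whose vertices must be spanned (so they carry enough LP prize and enough flow), but define the ``cheap'' routing set as $U=\{v:x_v\geq n^{-2/3}\}$, a strictly larger set. Bottleneck nodes $w\in X_v$ then satisfy $x_w<n^{-2/3}$, and since $v\in S_1$ receives at least $n^{-1/3}$ flow, you get $|X_v|\geq n^{-1/3}/n^{-2/3}=n^{1/3}$ \emph{uniformly} — no bucketing needed. The costs then balance directly: $c(U)\leq n^{2/3}B$ for the internal routing, $|X'|\leq n^{2/3}\ln n$ for the hitting set, so $c(T)=O((F+B)n^{2/3}\ln n)$ and $p(T)\geq Q/2$, giving the claimed ratio after one application of the trimming lemma. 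Your balance heuristic ``span-cost $n^{1/3}B$ vs.\ routing cost $n^{2/3}B\ln n$'' is slightly off for this reason: the span cost the paper actually pays is $c(U)\approx n^{2/3}B$, not $c(S_1)\approx n^{1/3}B$, and that is exactly what equates with the hitting-set routing cost. Two smaller points: for the budgeted problem the preprocessing $dist(r,v)\leq B$ is free (such a $v$ cannot appear in any feasible tree), so no $(1+\epsilon)$ distance blow-up is needed; and in the light case the paper partitions $S_2$ into $\leq n^{1/3}$ groups of size $\leq 2n^{2/3}$ and picks the best group, giving a tree with prize $\geq Q/2$ and cost $\leq 2Fn^{2/3}$, which is stronger per unit cost than picking the single best vertex, though both suffice after trimming.
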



\begin{restatable}{theorem}{thDQuotaTree}\label{thDQuotaTree}
For any $\epsilon >0$, \DQuotaAdditiveName admits a bicriteria $\left(2, O((1+\epsilon)n^{2/3} \ln{n})\right)$-approximation algorithm whose running time is polynomial in the input size and in $1/\epsilon$. 
\end{restatable}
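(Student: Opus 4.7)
The plan is to adapt the technique developed for \DSteinerT (Theorem~\ref{thDSteinerTree}) by introducing a \emph{two-level} partition of an optimal fractional solution. First I would write a flow-based LP relaxation for \DQuotaAdditiveName that coincides with~\eqref{lpDSteinerTree} except that the per-terminal constraint~\eqref{lpDSteinerTree:overallflow} is replaced by $\sum_{P \in \mathcal{P}_v} f^v_P \ge x_v$ for every $v \in V$, the quota constraint $\sum_v x_v p_v \ge Q$ is appended, and the objective is $\min \sum_v x_v c_v$. An optimal $x^*$ is computable in polynomial time by node-capacitated max-flows, and its value $LP^*$ lower-bounds $OPT := c(T_Q^*)$. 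By the standard geometric guessing of $OPT$ and pruning any vertex whose distance from $r$ exceeds the guess by more than a $(1+\epsilon)$ factor, I may assume $dist(r,v) \le (1+\epsilon)\,OPT$ for every $v$, at a multiplicative cost of $(1+\epsilon)$.

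Next I would choose two thresholds $\tau = n^{-2/3}$ and $\tau_1 = n^{-1/3}$, and set $U = \{v : x_v^* \ge \tau\}$, $U' = V \setminus U$ (with $r \in U$). Let $CH$ be the set of nodes reachable from $r$ by paths lying entirely in $U$, let $EX = V \setminus CH$, and split $EX = EX_1 \cup EX_2$ with $EX_1 = \{v \in EX : x_v^* \ge \tau_1\}$. Since $\tau_1 > \tau$ we have $EX_1 \subseteq U \setminus CH$. For $v \in EX$ define $X_v \subseteq U'$ exactly as in the \DSteinerT proof; the same flow-cut argument (any $x_v^*$ units of flow from $r$ to $v$ must cross $X_v$, whose nodes have capacity $x_w^* < \tau$) gives $|X_v| \ge x_v^*/\tau$, so $|X_v| \ge \tau_1/\tau = n^{1/3}$ for every $v \in EX_1$. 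The algorithm outputs the out-tree formed by gluing at $r$ three pieces: (a) a subtree spanning $CH$ using only $U$-nodes; (b) a hitting set $W \subseteq U'$ for the family $\{X_v : v \in EX_1\}$ obtained by the greedy argument of Claim~\ref{clHittingSet}, together with shortest $r$-to-$w$ paths for $w \in W$ and shortest $w$-to-$v$ paths through $U$ for every $v \in EX_1$; and (c) shortest $r$-to-$v$ paths for the top $k = 2n^{2/3}$ vertices of $EX_2$ in decreasing order of $p_v$.

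For the cost, piece (a) costs at most $\sum_{v \in U} c_v \le LP^*/\tau \le n^{2/3}\,OPT$; since $|W| \le (|U'|/n^{1/3})\ln n \le n^{2/3}\ln n$ and each shortest $r$-to-$w$ path costs at most $(1+\epsilon)\,OPT$, and the additional $EX_1$-vertices contribute $\sum_{v \in EX_1} c_v \le LP^*/\tau_1 \le n^{1/3}\,OPT$, piece (b) costs $O(n^{2/3}\ln n\,(1+\epsilon)\,OPT)$; piece (c) costs $O(n^{2/3}(1+\epsilon)\,OPT)$. Summing gives $O((1+\epsilon)n^{2/3}\ln n\cdot OPT)$. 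For the prize, write $A_\bullet = \sum_{v \in \bullet} x_v^* p_v$ for $\bullet \in \{CH, EX_1, EX_2\}$, so $A_{CH}+A_{EX_1}+A_{EX_2} \ge Q$ by the quota. Piece (a) collects $\sum_{v \in CH} p_v \ge A_{CH}$, piece (b) collects $\sum_{v \in EX_1} p_v \ge A_{EX_1}$, and since $x_v^* < \tau_1$ on $EX_2$ we have $\sum_{v \in EX_2} p_v \ge A_{EX_2}/\tau_1 = n^{1/3} A_{EX_2}$; the standard top-$k$ average bound then gives $\sum_{v \in \text{top } k} p_v \ge (k/|EX_2|)\cdot n^{1/3} A_{EX_2} \ge 2 A_{EX_2}$. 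Combining, the total collected prize is at least $A_{CH}+A_{EX_1}+A_{EX_2} \ge Q \ge Q/2$, so the quota is violated by a factor of at most $2$.

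The main difficulty is the joint balancing of the two thresholds: $\tau$ simultaneously controls the cost of piece (a) (scaling as $1/\tau$) and the sizes $|X_v|$ (scaling as $1/\tau$), while $\tau_1$ trades the hitting-set size in (b) (scaling as $\tau/\tau_1$) against the averaging set in (c) (scaling as $n\tau_1$). The only choice that equates all three contributions at $n^{2/3}$ is $\tau = n^{-2/3}$ and $\tau_1 = n^{-1/3}$, and the $\ln n$ factor is inherited from the greedy hitting set. A secondary point is verifying that the three pieces can be combined into a single $r$-rooted out-tree whose cost does not exceed the sum of the three costs, which holds because each piece is an $r$-rooted out-tree and shared internal nodes are counted only once in the node-weighted cost.
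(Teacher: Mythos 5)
Your proposal is correct and proves the stated theorem, but it takes a genuinely different (and in fact slightly stronger) route than the paper. The paper proves a reusable subroutine (its Theorem~\ref{th:tree-ratio}): starting from a feasible fractional solution, it partitions the support into $S_1 = \{v : x_v \ge n^{-1/3}\}$ and $S_2 = \{v : x_v < n^{-1/3}\}$, \emph{case-splits} on which of the two carries at least half the quota, and then either spans all of $S_1$ (via the $U$-reachability/hitting-set argument of Lemma~\ref{thSpanning-Sz}) or partitions $S_2$ into $O(n^{1/3})$ groups of size $2n^{2/3}$ and picks the best group. Either branch yields prize $\ge Q/2$, hence the factor-$2$ quota violation. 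Your version avoids the case split: you build all three pieces --- span $CH$, hit $\{X_v : v\in EX_1\}$ to reach the expensive high-capacity vertices, and take the top-$2n^{2/3}$ vertices of $EX_2$ by prize --- and since $CH$, $EX_1$, $EX_2$ partition $V$, the collected prize is $\sum_{CH} p_v + \sum_{EX_1}p_v + \sum_{\mathrm{top}\text{-}k}p_v \ge A_{CH} + A_{EX_1} + 2A_{EX_2} \ge Q$, so the quota is in fact not violated at all (you only used $Q\ge Q/2$ for safety). Your top-$k$ selection plays the role of the paper's grouping/best-group argument, and your threshold split at $\tau_1 = n^{-1/3}$ within $EX$ plays the role of the paper's $S_1$/$S_2$ split. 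The cost accounting is the same in both ($c(U)\le n^{2/3}\,OPT$, $|W|\le n^{2/3}\ln n$ with $(1+\epsilon)OPT$-length paths, $2n^{2/3}$ extra shortest paths), so the bound $O((1+\epsilon)n^{2/3}\ln n)$ matches. The trade-off is that the paper's case-split formulation produces a black-box lemma that is reused verbatim for \DBudgetAdditiveName (feeding the output to the trimming process), whereas your unified construction is tailored to the quota problem; on the other hand, yours gives a cleaner $(1, O((1+\epsilon)n^{2/3}\ln n))$-guarantee rather than $(2, \cdot)$. Minor remarks: your LP uses $\sum_P f^v_P \ge x_v$ rather than equality, which is equivalent; and the constraint $\sum_{P:w\in P}f^v_P\le x_w$ with $w=r$ forces $x_r = \max_v x_v$, so $r\in U$ whenever $U\neq\emptyset$ (and when $U=\emptyset$ your $EX_1=\emptyset$ too and piece (c) alone suffices), so the edge case you did not spell out works out.
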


\subsection{Undirected Graphs}

The previous best polynomial time approximation algorithm for \UBudgetName was a bicriteria $\left(1+\epsilon, O(\frac{\sqrt{B}}{\epsilon^3})\right)$-approximation algorithm~\cite{d2022budgeted}. We improve this result when $B \ge n^{1+\alpha}$ for any $\alpha > 0$ and provide a new  algorithm for \UQuotaName.

\begin{restatable}{theorem}{MainUBudgetTheorem}\label{thMainUBudget}
\UBudgetName admits a polynomial time bicriteria $\left(1+\epsilon, O(\frac{\sqrt{n}\log{n}}{\epsilon^3})\right)$-approximation algorithm, for any $\epsilon \in (0, 1]$.
\end{restatable}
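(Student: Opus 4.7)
The plan is to transfer the flow-LP technique of Theorem~\ref{thDSteinerTree} to the undirected submodular setting, replacing additive prize accounting by marginal-gain accounting. The first step is a geometric guess of the integer optimum $\mathrm{OPT}$ within a factor $1+\epsilon$, reducing the problem to a feasibility question: given a target $\mathrm{OPT}^\star$, produce a tree of cost at most $(1+\epsilon)B$ with prize $\Omega(\mathrm{OPT}^\star\cdot \epsilon^{3}/(\sqrt{n}\log{n}))$. For this I would formulate the LP \UBudgetLP with variables $x_v\in[0,1]$ for each node, flow variables $f^v_P$ along $r$-to-$v$ paths, the standard flow-conservation constraints $\sum_{P\in \mathcal{P}_v} f^v_P = x_v$ and capacity constraints $\sum_{P\ni u} f^v_P \le x_u$ analogous to \eqref{lpDSteinerTree:overallflow} and \eqref{lpDSteinerTree:capacity}, the budget $\sum_v x_v c(v) \le B$, and the submodular objective encoded via its concave closure $p^+(x) \geq \mathrm{OPT}^\star$ (equivalently, via polynomially many level-set cuts). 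The concave closure admits a polynomial-time separation oracle from the submodular polyhedron, so an optimal $x^\star$ is computable efficiently.

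Next, as in the directed algorithm, I partition the support $S = \{v : x^\star_v > 0\}$ into heavy vertices $U = \{v \in S : x^\star_v \geq 1/\sqrt{n}\}$ and light vertices $U' = S \setminus U$. The budget constraint gives $\sum_{v \in U} c(v) \leq \sqrt{n}\cdot B$, and a standard preprocessing step ensures $dist(r,v)\le(1+\epsilon)B$ for every $v\in V$. Every prize-carrying vertex $v$ that is separated from $r$ by $U'$-cuts admits a bottleneck set $X_v \subseteq U'$ of size at least $\sqrt{n}$ (the flow-capacity argument of Claim~\ref{clBound-X-t} transfers verbatim), and Claim~\ref{clHittingSet} yields a hitting set $X'$ of size $O(\sqrt{n}\log n)$ that hits every $X_v$ simultaneously. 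Connecting $r$ to $X'$ through shortest paths (each of cost $\le (1+\epsilon)B$) and extending from $X'$ through $U$-paths to the prize-carrying vertices produces a single tree $T$ of cost $O(\sqrt{n}\log n\cdot B)$ that covers every LP-selected vertex.

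To convert $T$ into a solution respecting the budget $(1+\epsilon)B$, I apply the tree-trimming procedure of \cite{bateni2018improved,d2022budgeted}, which decomposes $T$ into $O(\sqrt{n}\log n/\epsilon)$ subtrees each of cost at most $(1+\epsilon)B$, and returns the subtree of maximum $p(\cdot)$. By monotonicity and subadditivity of $p$, this subtree collects prize $\Omega(p(V(T))\cdot \epsilon / (\sqrt{n}\log n))$. Combined with the $1+\epsilon$ slack from the $\mathrm{OPT}$ guess and an additional $1/\epsilon$ factor from the submodular-rounding step that translates $p^+(x^\star)\geq\mathrm{OPT}^\star$ into $p(V(T))\geq\Omega(\mathrm{OPT}^\star/\epsilon)$, this yields the required $O(\sqrt{n}\log n/\epsilon^3)$ ratio with $(1+\epsilon)$ budget violation.

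The main obstacle is the submodular-rounding step, i.e., showing that the tree $T$ built from the LP-based decomposition carries prize $\Omega(p^+(x^\star))$ up to a factor depending only on $\epsilon$. With additive prizes the split of prize mass between $U$ and $U'$ is automatic, since $\sum_{v\in U}x^\star_v p(v) + \sum_{v\in U'}x^\star_v p(v)=\sum_{v\in S}x^\star_v p(v)$; for submodular $p$ the prize can live in the joint realization of heavy and light sets, so a naive split may lose an arbitrary factor. I would circumvent this either by using pipage-style rounding on the concave closure or by stratifying $p$ into $O(\log n)$ level sets and running the additive argument on each stratum. Monotonicity then ensures that the union of heavy and light subtrees only improves the prize, and the $\log n$ from level-set enumeration is absorbed into the $\sqrt{n}\log n$ in the final ratio.
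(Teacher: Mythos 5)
The paper's proof takes a genuinely different route, and your transfer of the directed hitting-set machinery has a quantitative gap that cannot be fixed by parameter tuning.

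\textbf{The hitting-set bound does not carry over.} In the directed Steiner tree argument each terminal $t$ receives a \emph{full unit} of flow (constraint~\eqref{lpDSteinerTree:overallflow}); that is what forces $|X_t|\ge\sqrt{n}$ once every bottleneck vertex has $x_w<1/\sqrt{n}$. In the budgeted submodular LP the vertices you care about receive only $x_v$ units, and if you threshold ``heavy'' at $x_v\ge 1/\sqrt{n}$, the same argument yields $|X_v|\ge \sqrt{n}\cdot x_v$, which can be as small as $1$. Claim~\ref{clHittingSet} then gives $|X'|\le n\ln n$, not $\sqrt{n}\ln n$, and the cost of $T^{EX}$ is $\Theta(nB\ln n)$. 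There is no choice of thresholds that simultaneously makes the heavy part cheap, the bottleneck sets large, and the received flow $\Omega(1)$. The paper avoids this entirely: instead of the hitting-set construction, it casts $S_1\cup\{r\}$ as terminals of a fractional node-weighted Steiner tree instance (\textbf{FNWST}) and applies the $O(\log n)$-approximation of Guha, Moss, Naor and Schieber to span $S_1$ with a tree of cost $O(B\sqrt{n}\log n)$. No bottleneck set nor hitting set appears.

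\textbf{The prize-splitting is the real crux, and your fix is not worked out.} You correctly identify that with additive prizes the mass splits automatically between $S_1$ and $S_2$ but that for submodular $p$ a naive split can lose an arbitrary factor. The paper resolves this by \emph{not} using the concave closure: its LP \eqref{lpUndirected-SubmodularFlow} uses the explicit constraints $\sum_{v\in S}x_vp_v\le p(S)$ for all $S$ (which admit a polynomial-time separation oracle via the submodular base polyhedron) together with the \emph{linear} surrogate objective $\sum_v x_vp_v$. With those constraints the split is immediate: if $\sum_{v\in S_1}x_vp_v\ge Q/2$ then $p(S_1)\ge\sum_{v\in S_1}x_vp_v\ge Q/2$ by monotonicity and constraint~\eqref{lpUndirected-SubmodularFlow:submodular}; if $\sum_{v\in S_2}x_vp_v\ge Q/2$ then, since every $x_v<1/\sqrt{n}$ on $S_2$ and $|S_2|\le n$, some single vertex has $p_v\ge Q/(2\sqrt{n})$. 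Your concave-closure objective does not give either of these implications; pipage rounding and level-set stratification are sketches, and even the level-set route as stated would multiply a $\log n$ into the ratio (yielding $O(\sqrt n\log^2 n/\epsilon^3)$), not ``absorb'' it.

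\textbf{The trimming step is also where the paper does new work.} Decomposing $T$ into subtrees of cost $\le(1+\epsilon)B$ and returning the best one does not, by itself, deliver a \emph{rooted} subtree of cost $\le(1+\epsilon)B$: reconnecting the best piece to $r$ costs up to an extra $B$ and can push the cost to $2B$. The old trimming of Lemma~\ref{trimmingProcess} (Lemma 4.2 in \cite{d2022budgeted}) degrades the prize-to-cost ratio by $1/h$, and with $h=\Theta(\sqrt{n}\log n)$ that would only give $O(n\log^2 n/\epsilon^2)$. The paper introduces Lemma~\ref{lmNewtrimmingProcess}, a refinement that first shrinks the budget violation to $2$ and only then applies the old trimming, so the $h$ dependence vanishes from the ratio and the final bound $O(\sqrt{n}\log n/\epsilon^3)$ is obtained. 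Your proposal does not contain this two-stage idea and would not reach the claimed exponent of $\epsilon$.

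In short: the skeleton (LP relaxation, threshold partition, span $S_1$, trim) matches the paper, but all three workhorse lemmas differ, and the piece you flag as the ``main obstacle'' is exactly where the paper's new LP constraint and new trimming lemma are needed; without them the argument as written gives at best an $\widetilde{O}(n)$ ratio.
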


\begin{restatable}{theorem}{thUQuotaTree}\label{thUQuotaTree}
Let $T^*_Q$ be an optimal solution to \UQuotaName. \UQuotaName admits an algorithm that, for any $\epsilon >0$, outputs a tree $T$ for which one of the following two conditions holds: (i) $c(T) = O(c(T^*_Q)\sqrt{n}\log{n})$ and $p(T)\geq \frac{Q}{2}$;  (ii) $c(T) \le (1+\epsilon)c(T^*_Q)$ and $p(T)\geq \frac{Q}{2\sqrt{n}}$; and whose running time is polynomial in the input size and in $1/\epsilon$.
\end{restatable}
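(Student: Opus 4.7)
The plan is to mirror the LP-based technique used for \UBudgetName in Theorem~\ref{thMainUBudget} but tuned to the quota constraint, with a case split driven by the Lov\'asz extension of $p$. First, I would formulate a flow-based convex relaxation of \UQuotaName: each vertex $v$ has a variable $x_v\in[0,1]$, auxiliary flow variables enforce that $r$ fractionally reaches every node of positive support, the objective is $\min\sum_v x_v c_v$, and the prize constraint is $\hat p(x)\ge Q$, where $\hat p$ is the Lov\'asz extension of the monotone submodular function $p$. Because $p$ is submodular, $\hat p$ is concave, so this convex program is solvable in polynomial time, and the integral point induced by $T^*_Q$ is feasible, so the fractional optimum is at most $c(T^*_Q)$. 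As preprocessing, I would guess $c(T^*_Q)$ up to a factor $(1+\epsilon)$ by a logarithmic scan of scales and discard every vertex $v$ with $dist(r,v)>(1+\epsilon)\,c(T^*_Q)$; the overall running time remains polynomial in the input and in $1/\epsilon$.

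Let $x$ be an optimal fractional solution and $S=\{v:x_v>0\}$ its support. I would partition $S$ into $S_1=\{v\in S:x_v\ge 1/\sqrt n\}$ and $S_2=S\setminus S_1$, and sort $S$ in weakly decreasing order of $x_v$ as $\sigma(1),\sigma(2),\dots$; since $x$-values in $S_1$ strictly dominate those in $S_2$, the set $S_1$ is exactly a prefix of this order. Writing $p_i:=p(\{\sigma(1),\dots,\sigma(i)\})-p(\{\sigma(1),\dots,\sigma(i-1)\})$ for the marginal of $\sigma(i)$, the Lov\'asz extension satisfies $\hat p(x)=\sum_i x_{\sigma(i)}\,p_i$, so the prize constraint splits as
\[
\sum_{\sigma(i)\in S_1} x_{\sigma(i)}\,p_i \ +\ \sum_{\sigma(i)\in S_2} x_{\sigma(i)}\,p_i \ \ge\ Q,
\]
and at least one of the two sums is at least $Q/2$. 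The algorithm then distinguishes two cases, outputting the tree built in whichever case triggers.

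If $\sum_{\sigma(i)\in S_1}x_{\sigma(i)}p_i\ge Q/2$, then using $x_{\sigma(i)}\le1$ and the fact that $S_1$ is a prefix, the sum telescopes to $\sum_{\sigma(i)\in S_1}p_i=p(S_1)$, so $p(S_1)\ge Q/2$. Moreover, $x_v\ge 1/\sqrt n$ on $S_1$ gives $\sum_{v\in S_1}c_v\le \sqrt n\sum_{v}x_v c_v\le \sqrt n\cdot c(T^*_Q)$. I would then invoke the Klein--Ravi $O(\log n)$-approximation algorithm for node-weighted undirected Steiner tree with terminal set $S_1\cup\{r\}$; this produces a tree $T$ of cost $O(\sqrt n\log n\cdot c(T^*_Q))$ that spans $S_1$, hence $p(T)\ge p(S_1)\ge Q/2$ by monotonicity, yielding condition (i).

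Otherwise $\sum_{\sigma(i)\in S_2}x_{\sigma(i)}p_i\ge Q/2$, and since $x_{\sigma(i)}<1/\sqrt n$ on $S_2$, we obtain $\sum_{\sigma(i)\in S_2}p_i>\sqrt n\cdot Q/2$. Because $|S_2|\le n$, at least one index $i^\star$ with $\sigma(i^\star)\in S_2$ satisfies $p_{i^\star}\ge Q/(2\sqrt n)$. Submodularity of $p$ then gives $p(\{v^\star\})\ge p_{i^\star}\ge Q/(2\sqrt n)$, where $v^\star=\sigma(i^\star)$. The preprocessing guarantees a path from $r$ to $v^\star$ of cost at most $(1+\epsilon)c(T^*_Q)$; returning this path as $T$ produces condition (ii). The main technical point is the joint use of the Lov\'asz extension with the threshold partition, which lets a single LP constraint drive both cases simultaneously and is what allows the second branch to recover a single vertex of prize $\Omega(Q/\sqrt n)$ rather than only $\Omega(Q/n)$; the remaining ingredients (convex-programming solver for $\hat p$, shortest-path computations, and the classical Klein--Ravi approximation) are standard.
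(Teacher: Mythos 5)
Your high-level architecture (threshold partition of the support at $1/\sqrt{n}$, case split on which half carries at least $Q/2$ of the fractional prize, prefix/telescoping argument in the heavy case, pigeonhole plus a single shortest path in the light case) mirrors the paper's proof of Theorem~\ref{th:urst:tree-ratio} very closely, and your Case~2 analysis is essentially identical to the paper's. But there are two genuine gaps.

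First, your relaxation is not solvable by the argument you give. You assert that the Lov\'asz extension $\hat p$ is concave because $p$ is submodular, but Lov\'asz's theorem says the opposite: $p$ is submodular \emph{iff} $\hat p$ is \emph{convex}. Hence $\{x:\hat p(x)\ge Q\}$ is a superlevel set of a convex function --- equivalently, by the greedy/Edmonds characterization $\hat p(x)=\max_{y\in B(p)}\langle y,x\rangle$, it is a \emph{union} of half-spaces --- and is not convex in general, so you do not get a convex program and the claimed polynomial solvability does not follow. The paper avoids this by writing \emph{linear} constraints $\sum_{v\in S}x_v\,p(\{v\})\le p(S)$ for all $S$ (a polymatroid-type constraint on $y_v:=x_v\,p(\{v\})$) together with the linear quota constraint $\sum_v x_v\,p(\{v\})\ge Q$; this is an LP with exponentially many constraints handled by the Gr\"otschel--Lov\'asz--Schrijver separation oracle. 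It delivers the same key rounding inequality you need in Case~1, namely $p(S_1)\ge\sum_{v\in S_1}x_v\,p(\{v\})\ge Q/2$, but as an honest, solvable LP. (Your telescoping argument actually recovers the same inequality from $\hat p(x)\ge Q$, so the two formulations are "aligned in spirit," but yours is not shown to be tractable.)

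Second, your cost bound for Case~1 is underjustified. You bound $\sum_{v\in S_1}c_v\le\sqrt{n}\,c(T^*_Q)$ and then "invoke Klein--Ravi," but $G[S_1\cup\{r\}]$ need not be connected, so $\sum_{v\in S_1}c_v$ does not bound the cost of an optimal Steiner tree for the terminal set $S_1\cup\{r\}$, and Klein--Ravi only guarantees $O(\log n)$ times that Steiner optimum. The missing step, which the paper supplies, is that the LP solution $x$ (with the flow constraints) scaled by $\sqrt{n}$ is a \emph{feasible fractional} node-weighted Steiner tree for $S_1\cup\{r\}$ of cost at most $\sqrt{n}\cdot B$, and one then uses the Guha et al.\ extension of Klein--Ravi that loses only $O(\log n)$ against the \emph{fractional} optimum. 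Without some such argument connecting the LP flow to fractional Steiner tree feasibility, the bound $c(T)=O(\sqrt{n}\log n\cdot c(T^*_Q))$ does not follow from what you've written.
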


To show Theorems~\ref{thMainUBudget} and~\ref{thUQuotaTree}, we use the same core idea. In what follows, we explain such an idea by describing our algorithm for \UBudgetName.

Our approach extends to monotone submodular prize functions the technique which was introduced by Guha, Moss, Naor, and Schieber~\cite{guha1999efficient} for the case of the budgeted problem on undirected graphs and additive prize function.
Their algorithm consists of three steps: (i) it computes an optimal solution to a linear relaxation of their problem, which provides an upper bound $OPT$ on the optimal prize; (ii) by using this optimal fractional solution, it computes a tree $T$ which may violate the budget constraints but has a prize of at least $OPT/4$ and a prize-to-cost ratio $\gamma = \Omega\left(\frac{OPT}{B\log^2{n}}\right)$; (iii) it applies a trimming process that takes as input $T$ and returns another tree which violates the budget constraint by a factor of at most $2$ but preserves a prize-to-cost ratioof $\gamma$ (up to a bounded multiplicative factor) and costs at least $B/2$. Therefore, the obtained tree guarantees a bicriteria $(2, O(\log^2{n}))$-approximation.

We encounter two main issues if we want to use the technique provided by~\cite{guha1999efficient} in general monotone submodular functions:

\begin{enumerate}
    \item The linear program provided by~\cite{guha1999efficient} may not give an upper bound to the optimum prize of \UBudgetName. To overcome this issue, we provide a new linear programming formulation whose optimum is an upper bound on the optimum prize of \UBudgetName and which can be solved in polynomial time through a separation oracle.
    
    \item The trimming procedure by~\cite{guha1999efficient} cannot be applied to the case of general monotone submodular functions. To deal with this issue, we introduce a new trimming process by refining the one provided by D'Angelo, Delfaraz and Gilbert~\cite{d2022budgeted} for the case of monotone submodular prizes. 
    The prize-to-cost ratio of the tree produced by this latter is smaller than that of the original tree by an arbitrarily large multiplicative factor.
    Here we show how to reduce this factor to a fixed constant, while keeping upper and lower bounds on the cost. This results in a tree that guarantees an approximation factor of $O(\frac{\sqrt{n}\log{n}}{\epsilon^3})$ at the cost of a budget violation of $1+\epsilon$, for any $\epsilon\in (0,1]$.
\end{enumerate}


\section{Node-Weighted Directed Steiner Tree}\label{sec:DST}


As a warm-up,
we present a polynomial time approximation algorithm for \DSteinerT using the standard flow-based linear programming relaxation~\eqref{lpDSteinerTree}.

\thDSteinerTree*

We prove Theorem~\ref{thDSteinerTree} in what follows.
We denote an optimal solution to \DSteinerT by $T^*$. Recall that $dist(v, u)$ denotes the cost of a shortest path from $v$ to $u$ in $D$.

We first argue that we can assume that for all nodes $v$ 
we have $dist(r, v) \leq (1+\epsilon)c(T^*)$, for any $\epsilon>0$. 
We remove from the graphs all the nodes that do not satisfy this condition by estimating the value of $c(T^*)$ using the following procedure.

Let $c_{\min}$ be the minimum positive cost of a vertex and $c_{M}$ be the cost of a minimum spanning tree of $D$, we know that $c(T^*)\leq c_{M}$. We estimate the value of $c(T^*)$ by guessing $N$ possible values, where $N$ is the smallest integer for which $c_{\min}(1+\epsilon)^{N-1}\geq c_{M}$.

For each guess $i\in [N]$, we remove the nodes $v$ with $dist(r, v)> c_{\min}(1+\epsilon)^{i-1}$
, and compute a Steiner Tree in the resulting graph, if it exists, with an algorithm that will be explained later. Eventually, we output the computed Steiner Tree with the smallest cost. Since $c_{\min}(1+\epsilon)^{N-2}< c_{M}$, the number $N$ of guesses is smaller than $\log_{1+\epsilon}(c_{M}/c_{\min}) + 2$, which is polynomial in the input size and in $1/\epsilon$.

Let $i\in [N]$ be the smallest value for which $c_{\min}(1+\epsilon)^{i-1}\geq c(T^*)$. Then, $c(T^*)> c_{\min}(1+\epsilon)^{i-2}$ and for all the nodes $v$ in the graph used in guess $i$, we have $dist(r, v)\leq c_{\min}(1+\epsilon)^{i-1} <(1+\epsilon)c(T^*)$. 
Since we output the solution with the minimum cost among those computed in the guesses for which our algorithm returns a feasible Steiner Tree, then the final solution will not be worse than the one computed at guess $i$. Therefore, from now on we focus on guess $i$ and assume that $dist(r, v) \leq (1+\epsilon)c(T^*)$
, for all nodes $v$.

We now describe our $O\left((1+\epsilon)\sqrt{n} \ln{n}\right)$-approximation algorithm under this assumption. 

Let $x$ be an optimal solution for~\eqref{lpDSteinerTree} and let $S\subseteq V$ be the set of all vertices $v$ with $x_v >0$ in~\eqref{lpDSteinerTree}. Note that $\sum_{v \in S}x_v c_v \le c(T^*)$ as~\eqref{lpDSteinerTree} provides a lower bound on the optimum solution of \DSteinerT. Let $U \subseteq S$ be the set of all nodes with $x_v \ge \frac{1}{\sqrt{n}}$ for any $v \in U$. Note that $r$ belongs to $U$ since we need to send one unit of flow from $r$ to any terminal by constraint~\eqref{lpDSteinerTree:overallflow}.
We call a terminal $t \in K$ a \emph{cheap terminal} if there exists a path from $r$ to $t$ in $D[U]$. We call a terminal $t \in K$ an \emph{expensive terminal} otherwise. Let $CH$ and $EX$ be the set of all cheap and expensive terminals in $K$, respectively.

We show that there exists a tree $T^{CH}$ rooted at $r$ spanning all the cheap terminals $CH$ with cost $c(T^{CH})\leq\sqrt{n}\cdot c(T^*)$.

\begin{lemma}\label{clSpanning-Cheap-Terminals}
There exists a polynomial time algorithm that finds a tree $T^{CH}$ rooted at $r$ spanning all the cheap terminals $CH$ with cost $c(T^{CH})\leq\sqrt{n}\cdot c(T^*)$.
\end{lemma}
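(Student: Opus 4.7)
The plan is to exploit the definitions of $U$ and $CH$ together with the fact that the total LP cost is an upper bound on $c(T^*)$. The whole argument boils down to: the total cost of $U$ is small, and every cheap terminal lives in the component of $r$ inside $D[U]$, so any spanning out-arborescence of that component will do.

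First, I would bound the total (unweighted) node-cost of $U$. Since $x$ is feasible and optimal for~\eqref{lpDSteinerTree}, and since an integral solution induced by $T^*$ is feasible with value $c(T^*)$, we have $\sum_{v \in S} x_v c_v \le c(T^*)$. In particular $\sum_{v \in U} x_v c_v \le c(T^*)$. By the definition of $U$, every $v \in U$ satisfies $x_v \ge 1/\sqrt{n}$, so
\begin{equation*}
\sum_{v \in U} c_v \;=\; \sum_{v \in U} \frac{x_v c_v}{x_v} \;\le\; \sqrt{n}\sum_{v \in U} x_v c_v \;\le\; \sqrt{n}\cdot c(T^*).
\end{equation*}

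Next I would produce the tree itself. Recall that $r \in U$ (as observed in the text, because $r$ carries one unit of flow to every terminal) and that, by definition of $CH$, every cheap terminal $t$ is reachable from $r$ by a directed path using only vertices in $U$, i.e., a path in $D[U]$. Let $R \subseteq U$ be the set of vertices reachable from $r$ in $D[U]$; then $CH \subseteq R$. Compute an out-arborescence $T^{CH}$ of $D[U]$ rooted at $r$ spanning $R$ (for instance, via a BFS/DFS from $r$ restricted to $D[U]$, keeping one incoming arc per discovered vertex). This computation is polynomial.

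Finally I would read off the cost bound. Since $V(T^{CH}) \subseteq U$, we get
\begin{equation*}
c(T^{CH}) \;=\; \sum_{v \in V(T^{CH})} c_v \;\le\; \sum_{v \in U} c_v \;\le\; \sqrt{n}\cdot c(T^*),
\end{equation*}
and by construction $T^{CH}$ is an out-tree rooted at $r$ containing every $t \in CH$. There is no real obstacle here: the only things to verify carefully are that $r \in U$ (so that the BFS starts correctly) and that the LP value lower-bounds $c(T^*)$, both of which follow immediately from the LP definition.
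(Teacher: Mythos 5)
Your proof is correct and follows essentially the same argument as the paper: bound $c(U)$ by $\sqrt{n}\cdot c(T^*)$ using $x_v \ge 1/\sqrt{n}$ for $v\in U$ together with the LP lower bound, then observe that any out-arborescence of $D[U]$ rooted at $r$ containing $CH$ has node set inside $U$. The paper takes a union of shortest $r$-to-$t$ paths in $D[U]$ before spanning, but since the cost bound is ultimately against $c(U)$ anyway, your plain BFS/DFS out-arborescence of the reachable set $R\subseteq U$ is equivalent and, if anything, slightly cleaner.
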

\begin{proof}
By definition, each terminal $t$ in $CH$ is reachable from $r$ through some paths $P$ that contains only vertices in $U$, i.e., $V(P) \subseteq U$. Thus we compute a shortest path $P$ from $r$ to every $t \in CH$ in $D[U]$. Let $\mathcal{P}^{CH}$ be the union of all these shortest paths $P$ and $V(\mathcal{P}^{CH})$ be the set of all vertices of the paths in $\mathcal{P}^{CH}$. Now we find a tree $T^{CH}$ rooted at $r$ spanning all the vertices $V(\mathcal{P}^{CH})$. By construction, $T^{CH}$ contains all terminals in $CH$.

We know that

\[
\sum_{v \in V(T^{CH})} x_v c_v = \sum_{v \in V(\mathcal{P}^{CH})} x_v c_v \le \sum_{v \in U} x_v c_v \le \sum_{v \in S} x_v c_v \le c(T^*),
\]
where the equality is due to $V(T^{CH}) = V(\mathcal{P}^{CH})$, the first inequality is due to $V(\mathcal{P}^{CH}) \subseteq U$ and the second inequality is due to $U \subseteq S$.

By definition of $U$, we have $x_v \ge \frac{1}{\sqrt{n}}$ for any $v \in U$, then 
\[
c(T^{CH})=\sum_{v \in V(T^{CH})}c_v\leq\sqrt{n}\cdot\sum_{v \in V(T^{CH})} x_v c_v  \leq \sqrt{n}\cdot c(T^*),
\]
which concludes the proof. 
\end{proof}

We next show that there exists a tree $T^{EX}$ rooted at $r$ spanning all the expensive terminals $EX$ with cost $c(T^{EX})=O\left((1+\epsilon)c(T^*)\sqrt{n}\ln{n}\right)$. The algorithm to build $T^{EX}$ can be summarized as follows. We first compute, for each $t\in EX$, the set $X_t$ of vertices $w$ in $S\setminus U$ for which there exists a path $P$ from $w$ to $t$ that uses only vertices in $U\cup \{w\}$, i.e., $V(P)\setminus \{w\} \subseteq U$. Then, we compute a small-size hitting set $X'$ of all $X_t$. Finally, we connect $r$ to the vertices of $X'$ and the vertices of $X'$ to those in $EX$ in such a way that each node $t$ in $EX$ is reached from one of the vertices in $X'$ that hits $X_t$. The bound on the cost of $T^{EX}$ follows from the size of $X'$ and from the cost of nodes in $U$.

\begin{lemma}\label{clSpanning-Expensive-Terminals}
There exists a polynomial time algorithm that finds a tree $T^{EX}$ rooted at $r$ spanning all the cheap terminals $EX$ with cost $c(T^{EX})\leq 2 (1+\epsilon)\sqrt{n}\ln{n}\cdot c(T^*)$.
\end{lemma}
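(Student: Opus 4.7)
The plan is to follow the three-step construction previewed between Lemma~\ref{clSpanning-Cheap-Terminals} and Lemma~\ref{clSpanning-Expensive-Terminals}: (i) for each $t\in EX$, identify a set $X_t\subseteq S\setminus U$ that every $r$-to-$t$ flow path must cross; (ii) extract a small hitting set $X'$ of the family $\{X_t\}_{t\in EX}$; (iii) form $T^{EX}$ as the union of shortest $r$-to-$w$ paths for $w\in X'$ with shortest $w$-to-$t$ paths through $U\cup\{w\}$, and take any spanning out-arborescence rooted at $r$.

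For step (i), I would define $X_t=\{w\in S\setminus U:\text{some path from $w$ to $t$ has all internal vertices in }U\}$ and lower-bound $|X_t|$ as follows. Fix any flow decomposition of the optimal LP solution $x$ that delivers to $t$ the unit of flow guaranteed by constraint~\eqref{lpDSteinerTree:overallflow}. Because $t\in EX$, no flow-carrying path lies entirely in $U$, so each such path contains at least one vertex of $S\setminus U$; its \emph{last} such vertex then belongs to $X_t$ by construction. Aggregating over the decomposition, $\sum_{w\in X_t}x_w\ge 1$, and since $x_w<1/\sqrt{n}$ for every $w\in S\setminus U$ by definition of $U$, this forces $|X_t|>\sqrt{n}$, which is the content of Claim~\ref{clBound-X-t}.

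For step (ii), I would apply the standard greedy set-cover algorithm to $\{X_t\}_{t\in EX}$ over the universe $\bigcup_{t\in EX}X_t$ of size at most $n$. Since every $X_t$ has size greater than $\sqrt{n}$, the uniform assignment $y_v=1/\sqrt{n}$ is a feasible fractional hitting set of value at most $\sqrt{n}$, so greedy returns an integral hitting set $X'$ with $|X'|=O(\sqrt{n}\ln|EX|)=O(\sqrt{n}\ln n)$, matching Claim~\ref{clHittingSet}.

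For step (iii), I would bound $c(T^{EX})$ by the sum of two contributions. The $|X'|$ shortest $r$-to-$w$ paths cost at most $|X'|\cdot(1+\epsilon)c(T^*)\le(1+\epsilon)\sqrt{n}\ln n\cdot c(T^*)$, using the preprocessing $dist(r,v)\le(1+\epsilon)c(T^*)$. The chosen $w$-to-$t$ paths (one per $t\in EX$, with $w\in X'\cap X_t$ and internal vertices in $U$, which exist by definition of $X_t$) jointly use only vertices in $U\cup X'$, so their $U$-part is bounded by
\[
\sum_{v\in U}c_v\;\le\;\sqrt{n}\sum_{v\in U}x_v c_v\;\le\;\sqrt{n}\cdot c(T^*),
\]
where the first inequality uses $x_v\ge 1/\sqrt{n}$ on $U$ and the second uses $\sum_{v\in S}x_v c_v\le c(T^*)$. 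Any spanning out-arborescence of the union of the selected paths, rooted at $r$, is then a tree $T^{EX}$ containing $EX$ with $c(T^{EX})\le (1+\epsilon)\sqrt{n}\ln n\cdot c(T^*)+\sqrt{n}\cdot c(T^*)\le 2(1+\epsilon)\sqrt{n}\ln n\cdot c(T^*)$, since $\ln n\ge 1$. The main subtlety I expect is formalising step (i): one has to pin down an explicit polynomial-size flow decomposition so that the ``last vertex in $S\setminus U$'' map is well-defined and collectively accounts for the full unit of flow into $t$; everything else reduces to the greedy set-cover analysis and the shortest-path preprocessing.
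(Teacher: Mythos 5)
Your proposal is correct and matches the paper's proof essentially step for step: the same definition of $X_t$ as the set of ``last $U'$-vertices,'' the same lower bound $|X_t|\gtrsim\sqrt{n}$ driven by constraint~\eqref{lpDSteinerTree:overallflow} and the capacity cap $x_w<1/\sqrt{n}$ on $U'$, the same greedy hitting set of size $O(\sqrt{n}\ln n)$, and the same two-part cost accounting ($r$-to-$X'$ shortest paths plus the $U$-internal tails). The one subtlety you flag at the end --- needing an explicit polynomial-size flow decomposition so the ``last vertex in $S\setminus U$'' map is well-defined --- is not actually required: the paper avoids any decomposition by arguing directly on the path variables, via $1=\sum_{P}f^t_P\le\sum_{w\in X_t}\sum_{P:w\in P}f^t_P\le\sum_{w\in X_t}x_w$, where the first inequality only uses that every path from $r$ to $t$ must contain \emph{some} vertex of $X_t$; the sets $X_t$ themselves are defined purely combinatorially via reachability in $D[U\cup\{w\}]$ and are computable without touching the flow at all.
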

\begin{proof}
Let $U' \subseteq S$ be the set of all vertices $v$ with $0<x_v <\frac{1}{\sqrt{n}}$, i.e., $U'=S\setminus U$. 

We now show that $|X_t| \ge \sqrt{n}$ for any $t \in EX$. 

\begin{claim}\label{clBound-X-t}
$|X_t| \ge \sqrt{n}$, for each $t \in EX$.
\end{claim}
\begin{proof}

We know that (i) each terminal must receive one unit of flow (by constraint~\eqref{lpDSteinerTree:overallflow} of~\eqref{lpDSteinerTree}), (ii) any path $P$ from $r$ to any $t \in EX$ in the graph $D[S]$ contains at least one vertex $w \in U'$ (by definition of expensive terminals), and , (iii) in any path $P$ from $r$ to any $t \in EX$, the node $w \in U'$ in $P$ that is closest to $t$ is a member of $X_t$, i.e. $w \in X_t$ (by definition of $X_t$), therefore any flow from $r$ to $t$ must pass through a vertex $w \in X_v$. This implies that the vertices in $X_t$ must send one unit of flow to $v$ in total. Since each of them can only send at most $1/\sqrt{n}$ amount of flow, they must be at least $\sqrt{n}$. Formally, we have 
\[
 1 = \sum_{P\in\mathcal{P}_t} f^t_P \leq \sum_{w \in X_t}\sum_{P\in\mathcal{P}_t:w\in P} f^t_P\leq \sum_{w \in X_t} x_w \leq \sum_{w \in X_t} \frac{1}{\sqrt{n}} =\frac{|X_t|}{\sqrt{n}},
\]
which implies that $|X_t| \ge \sqrt{n}$. Note that the first equality follows from constraint~\eqref{lpDSteinerTree:overallflow} of~\eqref{lpDSteinerTree}, the first inequality is due to the fact that, by definition of $X_t$, any path $P$ from $r$ to a $t \in EX$ contains a vertex $w \in X_t$, the second inequality is due to constraint~\eqref{lpDSteinerTree:capacity} of~\eqref{lpDSteinerTree}, the last inequality 
is due to $x_w<\frac{1}{\sqrt{n}}$, for each $w\in U'$, and $X_t\subseteq U'$.
This concludes the proof of the claim.
\end{proof}
We use the following well-known result (see e.g. Lemma~3.3 in~\cite{Chan2007} and Appendix~\ref{apx:claim} for a proof) to find a small set of vertices that hits all the sets $X_t$, for all $t\in EX$.
\begin{claim}\label{clHittingSet}
Let $V'$ be a set of $M$ elements and $\Sigma=(X'_1, \dots, X'_N)$ be a collection of subsets of $V'$ such that $|X'_i|\ge R$, for each $i\in [N]$.
There is a deterministic algorithm which runs in polynomial time in $N$ and $M$ and finds a subset $X' \subseteq V'$ with $|X'|\le (M/R)\ln{N}$ and $X' \cap X'_i\ne \emptyset$ for all $i\in[N]$.
%
\end{claim}

Thanks to Claim~\ref{clBound-X-t}, we can use the algorithm of Claim~\ref{clHittingSet} to find a set $X'\subseteq \bigcup_{t \in EX}X_t$ such that $X' \cap X_t\ne \emptyset$, for all $t\in EX$, whose size is at most 
\[
|X'| \le \frac{n\ln{n}}{\sqrt{n}}\leq \sqrt{n}\ln{n},
\]
where the parameters of Claim~\ref{clHittingSet} are $R=\sqrt{n}$, $N=|EX|\leq n$, and $M=\big|\bigcup_{t \in EX}X_t\big| \leq |V| \leq n$.

In other words, since for any $t \in EX$ and any $w \in X_t$ there exists a path from $w$ to $t$ in $D[U \cup \{w\}]$ and $X' \cap X_t\ne \emptyset$, then there exists at least a vertex $w\in X'$ for which there is a path from $w$ to $t$ in $D[U \cup \{w\}]$.

Now, for each $w \in X'$, we find a shortest path from $r$ to $w$ in $D$. Let $\mathcal{P}_1$ be the set of all these shortest paths. 
We also select, for each $t\in EX$, an arbitrary vertex $w$ in $X'\cap X_t$ and compute a shortest path from $w$ to $t$ in $D[U\cup \{w\}]$. Let $\mathcal{P}_2$ be the set of all these shortest paths. 
Let $V(\mathcal{P}_1)$ and $V(\mathcal{P}_2)$ denote the union of all vertices of the paths in $\mathcal{P}_1$ and $\mathcal{P}_2$, respectively, and let $D^{EX}$ be the graph induced by all the vertices in $V(\mathcal{P}_1)\cup V(\mathcal{P}_2)$.

Now we find a tree $T^{EX}$ rooted at $r$ spanning $D^{EX}$. Note that such a tree exists as in $D^{EX}$ we have for each $w \in X'$ a path from $r$ to $w$ and, for each terminal $t\in EX$, at least a path from one of the vertices in $X'$ to $t$.

We next move to bounding the cost of $T^{EX}$, indeed we bound the cost of all vertices in $D^{EX}$. 
Since $|X'|\le \sqrt{n}\ln{n}$ and the cost of a shortest path from $r$ to any $v \in V$ in $D$ is at most $(1+\epsilon)c(T^*)$, then $c(V(\mathcal{P}_1))\leq (1+\epsilon)\sqrt{n}\ln{n}\cdot c(T^*)$. 
Since $x_v \ge \frac{1}{\sqrt{n}}$ for any $v \in U$, and $\sum_{v \in U}x_v c_v\leq \sum_{v \in S}x_v c_v \leq c(T^*)$, then $c(U)=\sum_{v \in U}c_v\leq \sqrt{n} \cdot c(T^*)$. Therefore, since $V(\mathcal{P}_2)\setminus X'\subseteq U$, then $c(V(\mathcal{P}_2)\setminus X')\leq c(U)\leq \sqrt{n} \cdot c(T^*)$.
Overall, $D^{EX}$ costs at most $2(1+\epsilon)\sqrt{n}\ln{n}\cdot c(T^*)$. This finishes the proof.
\end{proof}

Now we prove Theorem~\ref{thDSteinerTree}.

\begin{proof}[Proof of Theorem~\ref{thDSteinerTree}]
Since both $T^{EX}$ and $T^{CH}$ are rooted at $r$, we can find a tree $T$ rooted at $r$ spanning all vertices $V({T^{EX}})\cup V({T^{CH}})$. By Lemmas~\ref{clSpanning-Cheap-Terminals} and~\ref{clSpanning-Expensive-Terminals} we have $c(T)=O\left((1+\epsilon) c(T^*)\sqrt{n}\ln{n}\right)$. This concludes the proof.
\end{proof}

\section{Directed Rooted Additive Tree Problems}

In this section we present a polynomial time bicriteria $\left(2, O((1+\epsilon)n^{2/3} \ln{n})\right)$-approximation algorithm for \DQuotaAdditiveName and a polynomial time bicriteria $\left(1+\epsilon, O(\frac{n^{2/3}\ln{n}}{\epsilon^2})\right)$-approximation algorithm for \DBudgetAdditiveName, where $\epsilon$ is an arbitrary number in $(0,1]$.
Through the section we let $I_Q=<D=(V, A), c, p, r, Q>$ and $I_B=<D=(V, A), c, p, r, B>$ be two instances of \DQuotaAdditiveName and \DBudgetAdditiveName, respectively, and we let $T^*_Q$ and $T^*_B$ be two optimal solutions for $I_Q$ and $I_B$, respectively. 
Both algorithms use the same technique and can be summarized in the following three steps: 

\begin{enumerate}
    \item We define a set of linear constraints, denoted as~\eqref{lpDBudgetQuotaAdditive}, over fractional variables, that takes a given quota $Q$ and a given budget $B$ as parameters and admits a feasible solution if there exists a subtree $T$ of $D$ rooted at $r$ such that $c(T)\leq B$ and $p(T)\geq Q$.
    Observe that the minimum $B$ for which~\eqref{lpDBudgetQuotaAdditive} is feasible for a given $Q$ is a lower bound on $c(T^*_Q)$, while the maximum $Q$ for which~\eqref{lpDBudgetQuotaAdditive} is feasible for a given $B$ is an upper bound to $p(T^*_B)$.
    \item We give a polynomial time algorithm that takes as input a feasible solution to~\eqref{lpDBudgetQuotaAdditive} and computes a subtree $T$ of $D$ rooted at $r$ such that $c(T) = O((F+B)n^{2/3}\ln{n})$ and $p(T)\geq \frac{Q}{2}$, where $F$ is the maximum distance from $r$ to any other node, $F:=\max_{v\in V}\{ dist(r,v) \}$.

    \item 
    \begin{description}
    \item[\DQuotaAdditiveName:] We first show that we can assume that $F\leq (1+\epsilon)c(T^*_Q)$, for any $\epsilon>0$. Then, we use a solution for~\eqref{lpDBudgetQuotaAdditive} that minimizes $B$ as input to the algorithm in the previous step and obtain a tree $T$ such that $c(T) = O((F+B)n^{2/3}\ln{n}) = O((1+\epsilon)c(T^*_Q)n^{2/3}\ln{n})$ and $p(T)\geq \frac{Q}{2}$.
    \item[\DBudgetAdditiveName:] We assume w.l.o.g. that $F\leq B$ and use a solution for~\eqref{lpDBudgetQuotaAdditive} that maximizes $Q$ as input to the algorithm in the previous step and obtain a tree $T$ such that $c(T) = O(Bn^{2/3}\ln{n})$ and $p(T)\geq \frac{Q}{2}\geq \frac{p(T^*_B)}{2}$. 
    
    Tree $T$ may violate the budget constraint by a large factor, however the ratio $\gamma$ between its prize and its cost is $\Omega\left(\frac{p(T^*_B)}{Bn^{2/3}\ln{n}}\right)$. Therefore, we can apply to $T$ the trimming process given in~\cite{bateni2018improved} to obtain another tree $\hat T$ with cost $\frac{\epsilon}{2}B\le c(\hat T)\le (1+\epsilon)B$, for any $\epsilon \in (0, 1]$, and prize-to-cost ratio $\frac{p(\hat T)}{c(\hat T)}=\frac{\epsilon \gamma}{4}$. The obtained tree $\hat{T}$ achieves an approximation ratio of $O(\frac{n^{2/3}\ln{n}}{\epsilon^2})$ at the cost of a budget violation of $1+\epsilon$, for any $\epsilon \in (0, 1]$.
    \end{description}

\end{enumerate}


In the following, we will detail each step of our algorithms.

\paragraph{Bounding the optimal cost and prize.}

Here we define a set of linear constraints that admits a feasible solution if there exists a tree $T$ rooted in $r$ in $D$ such that $c(T)\leq B$ and $p(T)\geq Q$, for given parameters $B$ and $Q$.

For each $v \in V$, let $p_v=p(v)$, $c_v=c(v)$, and $\mathcal{P}_v$ be the set of simple paths in $D$ from $r$ to $v$. Our set of constraints~\eqref{lpDBudgetQuotaAdditive} is defined as follows.

\begin{align}\label{lpDBudgetQuotaAdditive}
& \tag{Const-DRAT}\\
 \sum_{v \in V} x_v p_v &\ge Q\label{lpDBudgetQuotaAdditive:quota}\\
 \sum_{v \in V} x_v c_v &\le B\label{lpDBudgetQuotaAdditive:budget}\\
                  \sum_{P \in \mathcal{P}_v}f^v_P &= x_v,&& \forall v \in V\setminus \{r\}\label{lpDBudgetQuotaAdditive:overrallflow}\\
                 \sum_{P \in \mathcal{P}_v:w \in P} f^v_{P}&\le x_w,&& \forall v\in V\setminus \{r\} \text{ and }\forall w \in V\setminus \{ v\}\label{lpDBudgetQuotaAdditive:capacity}\\
                 0 \le x_v &\le 1, &&\forall v\in V\notag\\
                 0 \le f^v_P &\le 1, &&\forall v\in V\setminus \{r\}, P \in \mathcal{P}_v\notag
\end{align}

We use variables $f^v_P$ and $x_v$, for each $v\in V$ and $P \in \mathcal{P}_v$, where $f^v_P$ represents the amount of flow sent from $r$ to $v$ using path $P$ and $x_v$ represents both the capacity of node $v$ and the overall amount of flow sent from $r$ to $v$.
Variables $x_v$, for $v\in V$, are called \emph{capacity variables}, while variables $f_v^P$ for $v\in V$ and $P \in \mathcal{P}_v$ are called \emph{flow variables}. 


The constraints in~\eqref{lpDBudgetQuotaAdditive} are as follows. Constraints~\eqref{lpDBudgetQuotaAdditive:quota} and~\eqref{lpDBudgetQuotaAdditive:budget}  ensure that any feasible (fractional) solution to~\eqref{lpDBudgetQuotaAdditive} has a prize at least $Q$ and a cost at most $B$. Constraints~\eqref{lpDBudgetQuotaAdditive:overrallflow} and~\eqref{lpDBudgetQuotaAdditive:capacity}  formulate a connectivity constraint through standard flow encoding, that is they ensure that the nodes $v$ with $x_v>0$ induce subgraph in which all nodes are reachable from $r$. In particular, constraint~\eqref{lpDBudgetQuotaAdditive:overrallflow} ensures that the amount of flow that is sent from $r$ to any vertex $v$ must be equal to $x_v$ and constraint~\eqref{lpDBudgetQuotaAdditive:capacity} ensures that the total flow from $r$ to $v$ passing through a vertex $w$ cannot exceed $x_w$.

Note that~\eqref{lpDBudgetQuotaAdditive} has an exponential number of variables. However, it can be solved efficiently as we only need to find, independently for any $v \in V\setminus \{r\}$, a flow from $r$ to $v$ of value $x_v$ that does not exceed the capacity $x_w$, for each vertex $w \in V\setminus\{r,v\}$ (see Appendix~\ref{apx:lp}).

We now show that a feasible solution to~\eqref{lpDBudgetQuotaAdditive} can be used to find a lower bound to an optimal solution for \DQuotaAdditiveName and an upper bound to an optimal solution for \DBudgetAdditiveName.
In particular, we show that, for any tree $T$ rooted in $r$ in $D$ such that $c(T)\leq B$ and $p(T)\geq Q$ we can compute a feasible solution $x$ for~\eqref{lpDBudgetQuotaAdditive}.

\begin{lemma}\label{lmAdditiveLPOtimpality}
Given a directed graph $D=(V, A)$, $r\in V$,  $c:V \rightarrow \mathbb{R}^{\ge 0}$, $p:V \rightarrow \mathbb{R}^{\ge 0}$, $B\in \mathbb{R}^{\ge 0}$ and $Q\in \mathbb{R}^{\ge 0}$, if there exists a tree $T$ rooted in $r$ in $D$ such that $c(T)\leq B$ and $p(T)\geq Q$, then there exists a feasible solution $x$ for~\eqref{lpDBudgetQuotaAdditive}.
\end{lemma}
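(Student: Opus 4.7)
The plan is to give a direct constructive proof: from the hypothesized tree $T$, build an explicit integral solution $(x,f)$ to~\eqref{lpDBudgetQuotaAdditive} and verify each of the constraints by inspection. The construction is simply the indicator solution of $T$, together with the unit flows carried along the unique root-paths inside $T$.

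More concretely, first I would set the capacity variables $x_v = 1$ if $v \in V(T)$ and $x_v = 0$ otherwise. Since $T$ is an out-tree rooted at $r$, for every $v \in V(T)\setminus\{r\}$ there is a unique directed path $P_v$ from $r$ to $v$ inside $T$, which belongs to $\mathcal{P}_v$. I would then set $f^v_{P_v} = 1$ for every such $v$, and $f^v_P = 0$ for every other pair $(v,P)$ (including every $v \notin V(T)$, for which all flow variables are set to $0$). All variables clearly lie in $[0,1]$.

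Next I would verify the four groups of constraints. For~\eqref{lpDBudgetQuotaAdditive:quota} and~\eqref{lpDBudgetQuotaAdditive:budget}, since $x_v$ is the indicator of $V(T)$, we immediately get $\sum_{v} x_v p_v = p(T) \ge Q$ and $\sum_v x_v c_v = c(T) \le B$. For~\eqref{lpDBudgetQuotaAdditive:overrallflow}, if $v \in V(T)\setminus\{r\}$ then only $f^v_{P_v}=1$ contributes, so the sum equals $1 = x_v$; if $v \notin V(T)$ then all $f^v_P=0$ and $x_v = 0$. For~\eqref{lpDBudgetQuotaAdditive:capacity}, fix $v \in V(T)\setminus\{r\}$ and $w \neq v$: the left-hand side is $1$ iff $w \in V(P_v)$ and $0$ otherwise; but $V(P_v) \subseteq V(T)$, so whenever the left-hand side is $1$, the right-hand side $x_w$ is also $1$. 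For $v \notin V(T)$, every $f^v_P = 0$ and the constraint holds trivially.

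The construction is purely combinatorial and there is no real technical obstacle: the only thing to be careful about is that the definition of $\mathcal{P}_v$ (simple paths from $r$ to $v$) must include the path $P_v$ used in $T$, which holds because $T$ is an out-tree and thus $P_v$ is simple. Putting these verifications together yields a feasible solution $(x,f)$ for~\eqref{lpDBudgetQuotaAdditive}, which is exactly the statement of the lemma.
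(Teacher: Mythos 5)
Your proposal is correct and follows exactly the same construction as the paper's proof: take the $0$--$1$ indicator solution of $V(T)$ for the capacity variables and route a unit of flow along the unique $r$-to-$v$ path in $T$ for each $v\in V(T)$. Your write-up is somewhat more detailed in its constraint-by-constraint verification (in particular, the remark that $V(P_v)\subseteq V(T)$ is what makes constraint~\eqref{lpDBudgetQuotaAdditive:capacity} go through), but there is no substantive difference in approach.
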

\begin{proof}
Let us consider a solution to~\eqref{lpDBudgetQuotaAdditive} in which $x_v=1$ for all $v\in V(T)$, while $x_v$ is set to $0$ for all $v\not\in V(T)$.
As $p(T)\ge Q$ and $c(T)\le B$, then the quota and budget constraints~\eqref{lpDBudgetQuotaAdditive:quota}--\eqref{lpDBudgetQuotaAdditive:budget} are satisfied. Since $T$ is connected and for any $v \in V(T)$, there exists only one path $P$ from $r$ to $v$ in $T$, constraints~\eqref{lpDBudgetQuotaAdditive:overrallflow} and~\eqref{lpDBudgetQuotaAdditive:capacity} are satisfied by setting $f^v_P=1$ and any other flow variable to 0.
\end{proof}

Lemma~\ref{lmAdditiveLPOtimpality} shows the existence of a feasible solution $x$ to~\eqref{lpDBudgetQuotaAdditive} when the budget is $B=c(T^*_Q)$ and the quota is $Q$, therefore the optimum $OPT_Q$ to the linear program of minimizing $B$ subject to constraints~\eqref{lpDBudgetQuotaAdditive}
gives a lower bound to $c(T^*_Q)$, i.e. $OPT_Q \leq c(T^*_Q)$.
Similarly, the optimum $OPT_B$ to the linear program of maximizing $Q$ subject to constraints~\eqref{lpDBudgetQuotaAdditive}
gives an upper bound to $p(T^*_B)$, $OPT_B \geq p(T^*_B)$. We denote these two linear programs by \DQuotaAdditiveLP and \DBudgetAdditiveLP, respectively.

\paragraph{Finding a good tree from a feasible fractional solution to~\eqref{lpDBudgetQuotaAdditive}.}
Here we elaborate the second step and show how to find a tree with a good trade-off between prize and cost by using a feasible fractional solution from the previous step. In particular, we will show the following theorem. Recall that $F$ denotes the maximum distance from $r$ to a node in $V$, $F:=\max_{v\in V}\{ dist(r,v) \}$.
\begin{theorem}\label{th:tree-ratio}
Given a feasible solution $x$ to~\eqref{lpDBudgetQuotaAdditive}, then there exists a polynomial time algorithm that computes a tree $T$ rooted at $r$ such that  $c(T) = O((F+B)n^{2/3}\ln{n})$ and $p(T)\geq \frac{Q}{2}$.
\end{theorem}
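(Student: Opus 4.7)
The plan is to extend the technique of Theorem~\ref{thDSteinerTree} using two thresholds $\alpha=n^{-1/3}$ and $\beta=n^{-2/3}$. Let $S=\{v \in V : x_v>0\}$ and split $S=V_H\cup V_L$ where $V_H=\{v\in S:x_v\geq\alpha\}$. By~\eqref{lpDBudgetQuotaAdditive:quota}, $\sum_{v\in S} x_v p_v\geq Q$, so at least one of $\sum_{v\in V_H} x_v p_v$ and $\sum_{v\in V_L} x_v p_v$ is $\geq Q/2$; I will handle these two cases with different constructions and output a tree that witnesses $p(T)\geq Q/2$.

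For the \emph{high-prize case} ($\sum_{v\in V_H} x_v p_v\geq Q/2$), since $x_v\leq 1$ this implies $\sum_{v\in V_H} p_v\geq Q/2$, so it suffices to span $V_H$. I set $U=\{v\in S:x_v\geq\beta\}\supseteq V_H$, partition $V_H$ into cheap vertices (reachable from $r$ in $D[U]$) and expensive ones, and build a tree $T^{CH}$ in $D[U]$ covering the cheap part. Using \eqref{lpDBudgetQuotaAdditive:budget} together with $x_v\geq\beta$ on $U$ gives $c(U)\leq \beta^{-1} B=n^{2/3}B$, so $c(T^{CH})\leq n^{2/3}B$. For each expensive $v\in V_H$, define $X_v=\{w\in S\setminus U: \text{there is a path } w\rightsquigarrow v \text{ in } D[U\cup\{w\}]\}$. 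Mimicking the flow argument of Claim~\ref{clBound-X-t}, the $x_v$ units of flow from $r$ to $v$ must cross $X_v$, and each $w\in X_v$ has $x_w<\beta$, so $|X_v|\geq x_v\beta^{-1}\geq \alpha\beta^{-1}=n^{1/3}$. Applying Claim~\ref{clHittingSet} with $R=n^{1/3}$ then yields a hitting set $X'$ of size $O(n^{2/3}\ln n)$. I build $T^{EX}$ as the union of shortest $r$-to-$w$ paths for $w\in X'$ (cost $\le|X'|\cdot F=O(n^{2/3}F\ln n)$ since $dist(r,w)\leq F$) and shortest $w$-to-$v$ paths inside $D[U\cup\{w\}]$ for each expensive $v$ and some $w\in X'\cap X_v$ (absorbed into $c(U)+|X'|F$). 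Merging $T^{CH}$ and $T^{EX}$, both rooted at $r$, yields a tree spanning $V_H$ with total cost $O((F+B)n^{2/3}\ln n)$.

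For the \emph{low-prize case} ($\sum_{v\in V_L} x_v p_v\geq Q/2$), since $x_v<\alpha=n^{-1/3}$ on $V_L$ we get $\sum_{v\in V_L} p_v>n^{1/3}Q/2$. Sorting $V_L$ by decreasing prize and letting $V^*$ be the top $k=\lceil n^{2/3}\rceil$ vertices, a straightforward averaging argument gives $\sum_{v\in V^*} p_v\geq (k/|V_L|)\sum_{v\in V_L} p_v\geq n^{-1/3}\cdot n^{1/3}Q/2=Q/2$. Since every vertex is within distance $F$ of $r$, the union of shortest $r$-to-$v$ paths for $v\in V^*$ yields a tree of cost $O(n^{2/3}F)$, well within the claimed bound.

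The algorithm computes both trees in polynomial time and outputs whichever satisfies $p(T)\geq Q/2$. The main technical challenge is choosing the thresholds so that (a) $c(U)\leq n^{2/3}B$, (b) expensive $v\in V_H$ have $|X_v|\geq n^{1/3}$ to fit the hitting-set bound of Claim~\ref{clHittingSet}, and (c) the number of top-prize low-capacity vertices covered in the low-prize case stays $O(n^{2/3})$ while still carrying prize at least $Q/2$. The choices $\alpha=n^{-1/3}$ and $\beta=n^{-2/3}$ satisfy $\alpha\beta^{-1}=n^{1/3}$ and $\beta^{-1}=n^{2/3}$, making all three constraints fit together exactly.
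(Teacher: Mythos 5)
Your proof is correct and follows essentially the same approach as the paper: the same two-threshold split (with $\alpha=n^{-1/3}$ giving $S_1=V_H$ and $\beta=n^{-2/3}$ giving $U$), the same cheap/expensive decomposition of $V_H$, the same flow-crossing argument to lower-bound $|X_v|$, and the same hitting-set application. The only cosmetic difference is in the low-prize case, where you select the top $\lceil n^{2/3}\rceil$ vertices by prize while the paper partitions $V_L$ into groups of size $2|V_L|^{2/3}$ and takes the best group — both are the same averaging argument.
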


For \DBudgetAdditiveName, we can assume w.l.o.g. that $F\leq B$, therefore the above theorem implies that the ratio between prize and cost of the computed tree is $\Omega\left( \frac{Q}{Bn^{2/3}\ln{n}} \right)$. Similarly, for the \DQuotaAdditiveName problem we can assume that $F\leq(1+\epsilon)B$ and the above theorem implies that the prize-to-cost ratio of the computed tree is $\Omega\left( \frac{Q}{(1+\epsilon)Bn^{2/3}\ln{n}} \right)$.

We now prove Theorem~\ref{th:tree-ratio}.
Let $x$ be a feasible solution for~\eqref{lpDBudgetQuotaAdditive} and let $S \subseteq V$ be the set of vertices $v$ with $x_v>0$, i.e., $S=\{v \in V: x_v>0\}$.
We partition $S$ into two subsets $S_1, S_2 \subseteq S$, where $S_1=\{v \in S| x_v\ge \frac{1}{n^{1/3}}\}$ and $S_2=\{v \in S| x_v < \frac{1}{n^{1/3}}\}$.

We first focus on nodes in $S_1$ and, in the following lemma, we show how to compute a tree $T$ rooted at $r$ spanning all vertices in $S_1$ with cost $c(T)=O((F+B)n^{2/3}\ln{n})$.

\begin{lemma}\label{thSpanning-Sz}
There exists a polynomial time algorithm that finds a tree $T$ rooted at $r$ spanning all vertices in $S_1$ with cost $c(T)=O((F+B)n^{2/3}\ln{n})$.
\end{lemma}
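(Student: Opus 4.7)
The plan is to adapt, with a different threshold, the argument used to prove Theorem~\ref{thDSteinerTree} via Lemmas~\ref{clSpanning-Cheap-Terminals} and~\ref{clSpanning-Expensive-Terminals}. Here the ``terminals'' to be spanned are the vertices of $S_1$, and the threshold that splits the support of $x$ into a \emph{cheap} and an \emph{expensive} part is taken to be $1/n^{2/3}$ rather than $1/\sqrt n$. The reason for that choice is that the vertices of $S_1$ are guaranteed to have $x_t \ge 1/n^{1/3}$, which amplifies the flow-based lower bound on the bottleneck sets by a factor of $n^{1/3}$, and this is exactly what is needed to get a hitting set of size $O(n^{2/3}\ln n)$.

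More concretely, I would set $U := \{v\in V : x_v \ge 1/n^{2/3}\}$ and $U' := S \setminus U$. Note that $S_1 \subseteq U$. I split $S_1$ into two groups: $CH$, the vertices of $S_1$ reachable from $r$ through a path lying entirely in $D[U]$, and $EX := S_1\setminus CH$. For $CH$, I take a shortest-path tree in $D[U]$ from $r$ that covers all of $CH$. By constraint~\eqref{lpDBudgetQuotaAdditive:budget} we have $\sum_{v\in U} x_v c_v \le B$, and since $x_v \ge 1/n^{2/3}$ on $U$, this gives $c(U) \le n^{2/3} B$, which upper-bounds the cost of that tree.

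For $EX$, I mirror Claim~\ref{clBound-X-t}: for each $t \in EX$ define $X_t \subseteq U'$ as the set of vertices $w$ such that there is a directed path from $w$ to $t$ in $D[U \cup \{w\}]$. Every $r$-to-$t$ path in $D[S]$ must cross from $U'$ into $U$, and the last vertex of $U'$ on such a path lies in $X_t$. Combining~\eqref{lpDBudgetQuotaAdditive:overrallflow} with~\eqref{lpDBudgetQuotaAdditive:capacity} exactly as in the proof of Claim~\ref{clBound-X-t} yields
\[
\tfrac{1}{n^{1/3}} \;\le\; x_t \;=\; \sum_{P\in\mathcal{P}_t} f^t_P \;\le\; \sum_{w\in X_t} x_w \;<\; \frac{|X_t|}{n^{2/3}},
\]
so $|X_t| > n^{1/3}$. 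Applying Claim~\ref{clHittingSet} with $M \le n$, $N \le n$ and $R = \lceil n^{1/3}\rceil$ produces in polynomial time a set $X' \subseteq \bigcup_{t\in EX} X_t$ with $|X'| \le n^{2/3}\ln n$ that meets every $X_t$.

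I would then assemble $T$ exactly as in the proof of Lemma~\ref{clSpanning-Expensive-Terminals}: for each $w \in X'$ add a shortest $r$-to-$w$ path in $D$, and for each $t \in EX$ pick some $w \in X'\cap X_t$ and add a shortest $w$-to-$t$ path inside $D[U \cup \{w\}]$; finally add the $CH$-spanning tree. The first family of paths contributes at most $|X'|\cdot F = O(F\, n^{2/3}\ln n)$; the second family uses only nodes already paid for in $X'$ or in $U$, so its cost is absorbed by $c(U) + c(X') \le n^{2/3} B + F\, n^{2/3}\ln n$; and the $CH$-tree also costs at most $c(U) \le n^{2/3} B$. Contracting the resulting connected subgraph into a spanning out-tree rooted at $r$ gives $c(T) = O((F+B)\, n^{2/3} \ln n)$, as required, and every step runs in polynomial time.

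The only delicate point, and the step I would double-check first, is the flow lower bound $|X_t| > n^{1/3}$: it relies on the \emph{mismatch} between the threshold $1/n^{2/3}$ chosen for $U$ (which controls $x_w$ on $U'$) and the threshold $1/n^{1/3}$ that defines $S_1$ (which controls $x_t$ from below). Any other choice of threshold for $U$ would either blow up $c(U)$ beyond $Bn^{2/3}$ or shrink $|X_t|$ and inflate $|X'|$; the value $1/n^{2/3}$ is the one that balances both contributions at $n^{2/3}$, and this balancing is what drives the exponent in the statement.
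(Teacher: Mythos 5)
Your proposal is correct and follows essentially the same route as the paper's proof: the same threshold $1/n^{2/3}$ defining $U$, the same cheap/expensive split of $S_1$, the same flow-based bound $|X_t| \ge n^{1/3}$ (the paper's Claim~\ref{clBoundXv}), the same hitting-set application of Claim~\ref{clHittingSet}, and the same assembly of the final tree. The closing remark about why $1/n^{2/3}$ is the right threshold to balance $c(U)$ against $|X'|$ is the same balancing observation that underlies the paper's choice.
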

\begin{proof}
Let $U:=\{v \in S : x_v \ge\frac{1}{ n^{2/3}}\}$. We call a vertex $v \in S_1$ a \emph{cheap vertex} if there exists a path from $r$ to $v$ in $D[U]$ (the graph induced by $U$). We call a vertex $v \in S_1$ an \emph{expensive vertex} otherwise. Note that $r$ belongs to $U$ when $U\ne \emptyset$ since we need to send $\frac{1}{n^{2/3}}$ amount of flow from $r$ to any vertex in $U$ by constraint~\eqref{lpDBudgetQuotaAdditive:overrallflow}. Let $CH$ and $EX$ be the set of all cheap and expensive vertices in $S_1$, respectively.

In the following, we first show that we can compute in polynomial time  two trees $T^{CH}$ and $T^{EX}$ spanning all nodes in $CH$ and $EX$, respectively, and having cost $c(T^{CH})=O(Bn^{2/3})$ and $c(T^{EX})=O((F+B) n^{2/3} \ln{n})$, then we show how to merge the two trees into a single tree with cost $O((F+B) n^{2/3} \ln{n})$. 

We first focus on $T^{CH}$. By definition, all vertices in $CH$ are reachable from $r$ through paths that contain only vertices in $U$. Thus, for each $v \in CH$, we compute a shortest path from $r$ to $v$ in $D[U]$ and find a tree $T^{CH}$ rooted at $r$ spanning all and only the vertices in the union of these shortest paths. Since $\sum_{u \in U} x_u c_u \le B$ (by constraint~\eqref{lpDBudgetQuotaAdditive:budget}) and $x_u \ge \frac{1}{n^{2/3}}$ for any $u \in U$ (by definition of $U$), then $c(U)= \sum_{u \in U} c_u\leq Bn^{2/3}$. Hence $c(T^{CH}) \leq c(U) =O(Bn^{2/3})$.

Now we show that there exists a tree $T^{EX}$ rooted at $r$ spanning all the expensive vertices $EX$ with cost $c(T^{EX})=O((F+B) n^{2/3} \ln{n})$ and that we can compute $T^{EX}$ in polynomial time.
The algorithm to build $T^{EX}$ can be summarized as follows. We first compute, for each $v\in EX$, the set $X_v$ of vertices $w \in S\setminus U$ for which there exists a path from $w$ to $v$ that uses only vertices in $U\cup \{w\}$. Then we compute a small-size hitting set $X'$ of all $X_v$. Finally, we connect $r$ to the vertices of $X'$ and the vertices of $X'$ to those in $EX$ in such a way that each node $v$ in $EX$ is reached from one of the vertices in $X'$ that hits $X_v$. The bound on the cost of $T^{EX}$ follows from the size of $X'$ and from the cost of nodes in $U$. We now detail on the construction of $T^{EX}$ and its cost analysis.

Let $U' \subseteq S$ be the set of all vertices $w$ with $x_w <\frac{1}{n^{2/3}}$, i.e., $U'=\{w \in S:x_w <\frac{1}{n^{2/3}}\}$ and $U'=S\setminus U$.
For any expensive vertex $v \in EX$, we define $X_v$ as the set of vertices $w$ in $U'$ such that there exists a path from $w$ to $v$ in $D[U \cup \{w\}]$.
Note that for any $w \in X_v$, $v$ is reachable from $w$ through a path $P$ such that $V(P)\setminus \{w\} \subseteq U$, i.e., $V(P)\setminus \{w\}$ only contains vertices from $U$.

The following claim  gives a lower bound on the size of $X_v$, for each $v \in EX$. We follow similar arguments as those of Claim~\ref{clBound-X-t}, but we include it for the sake of completeness. 
\begin{claim}\label{clBoundXv}
$|X_v| \ge n^{1/3}$, for each $v \in EX$.
\end{claim}
\begin{proof}
We know that (i) the amount of flow that each vertex $v \in S_1$ should receive is at least $\frac{1}{n^{1/3}}$ (by definition of $S_1$ and constraint~\eqref{lpDBudgetQuotaAdditive:overrallflow} of~\eqref{lpDBudgetQuotaAdditive}), (ii) any path $P$ from $r$ to any $v \in EX$ in the graph $D[S]$ contains at least one vertex $w \in U'$ (by definition of expensive vertices), and , (iii) in any path $P$ from $r$ to any $v \in EX$,  the node $w \in U'$ in  $P$ that is closest to $v$ is a member of $X_v$, i.e. $w \in X_v$ (by definition of $X_v$), therefore any flow from $r$ to $v$ should pass through a vertex $w \in X_v$. This implies that the vertices in $X_v$ must send at least $\frac{1}{n^{1/3}}$ amount of flow to $v$ in total . Formally, we have 

\[
\frac{1}{n^{1/3}} \le x_v= \sum_{P\in\mathcal{P}_v} f^v_P \leq \sum_{w \in X_v}\sum_{P\in\mathcal{P}_v:w\in P} f^v_P\leq \sum_{w \in X_v} x_w \leq \sum_{w \in X_v} \frac{1}{n^{2/3}} =\frac{|X_v|}{n^{2/3}},
\]
which implies that $|X_v| \ge n^{1/3}$. Note that the first inequality follows from the definition of $S_1$, the first equality follows from constraint~\eqref{lpDBudgetQuotaAdditive:overrallflow} of~\eqref{lpDBudgetQuotaAdditive}, the second inequality is due to the fact that, by definition of $X_v$, any
path $P\in\mathcal{P}_v$ contains a vertex $w \in X_v$, the third inequality is due to constraint~\eqref{lpDBudgetQuotaAdditive:capacity}  of~\eqref{lpDBudgetQuotaAdditive}, the last inequality 
is due to $x_w < \frac{1}{n^{2/3}}$, for each $w \in U'$, and $X_v\subseteq U'$.
This concludes the proof of the claim.
\end{proof}

Using the the bound of Claim~\ref{clBoundXv} on the size of sets $X_v$, we can exploit the algorithm of Claim~\ref{clHittingSet} to find a set $X'\subseteq \bigcup_{v \in EX}X_v$ such that $X' \cap X_v\ne \emptyset$, for all $v\in EX$, whose size is at most
\[
|X'| \le \frac{n \ln{n}}{n^{1/3}}=n^{2/3}\ln{n},
\]
where in this case the parameters of Claim~\ref{clHittingSet} are $R=n^{1/3}$, $M=\big|\bigcup_{v \in EX}X_v\big|\leq n$, and $N=|EX|\leq n$.
In other words, since for any $v \in EX$ there exists a path from any $w \in X_v$ to $v$ in $D[U \cup \{w\}]$ and $X'$ contains at least one vertex in $X_v$, then there exists a path from one of the vertices $w \in X'$ to $v$ in $D[U \cup \{w\}]$.

Now we find a shortest path from $r$ to any $w \in X'$ in $D$. Let $\mathcal{P}_1$ be the set of all these shortest paths. We also find, for each $v\in EX$, a shortest path from an arbitrary vertex $w\in X' \cap X_v$ to $v$ in $D[U\cup \{w\}]$ (the choice of $w$ can be made arbitrarily if there are several vertices $w$ in $X' \cap X_v$). Let $\mathcal{P}_2$ be the set of all these shortest paths. Let $V(\mathcal{P}_1)$ and $V(\mathcal{P}_2)$ be the union of all the nodes of the paths in $\mathcal{P}_1$ and $\mathcal{P}_2$, respectively. Now we find a tree $T^{EX}$ rooted at $r$ spanning graph $D^{EX}:=D[V(\mathcal{P}_1) \cup V(\mathcal{P}_2)]$. Note that such a tree exists as in $D^{EX}$ there exists a path from $r$ to any $w \in X'$ and, for each vertex $v\in EX$, at least a path from one of the vertices in $X'$ to $v$.

We next move to bounding the cost of $T^{EX}$, i.e. we bound the total cost of nodes in $V(\mathcal{P}_1) \cup V(\mathcal{P}_2)$. 
Since $|X'|\le n^{2/3}\ln{n}$ and the maximum distance from $r$ to any other node is $F$,  then $c(V(\mathcal{P}_1))\leq Fn^{2/3}\ln{n}$. 
As $\sum_{u \in U}c(v)\leq Bn^{2/3}$ (by constraint~\eqref{lpDBudgetQuotaAdditive:budget} of~\eqref{lpDBudgetQuotaAdditive} and $x_u \ge \frac{1}{n^{2/3}}$ for any $u \in U$) and $V(\mathcal{P}_2)\setminus X'\subseteq U$, then $c(V(\mathcal{P}_2)\setminus X')\leq Bn^{2/3}$. 
Overall, $T^{EX}$ costs at most $O((F+B)n^{2/3}\ln{n})$. 

We have shown that $T^{CH}$ and $T^{EX}$ span all nodes in $CH$ and $EX$, respectively, and cost $c(T^{CH})=O((F+B)n^{2/3})$ and $c(T^{EX})=O((F+B)n^{2/3}\ln{n})$. Since both $T^{EX}$ and $T^{CH}$ are rooted at $r$, we can find a tree $T$ rooted at $r$ that spans all vertices $V(T^{EX})\cup V(T^{CH})$. Since $c(T^{CH}) + c(T^{EX}) = O((F+B)n^{2/3}\ln{n})$, we have $c(T)=O((F+B)n^{2/3}\ln{n})$, which concludes the proof.
\end{proof}

We are now ready to prove Theorem~\ref{th:tree-ratio}. 
\begin{proof}[Proof of Theorem~\ref{th:tree-ratio}]
Since $\sum_{v \in S} x_v p_v\geq Q$, we know that $\sum_{v \in S_1}x_v p_v \ge \frac{Q}{2}$ or  $\sum_{v \in S_2}x_v p_v \ge \frac{Q}{2}$.

If $\sum_{v \in S_1}x_v p_v \ge \frac{Q}{2}$, then by Lemma~\ref{thSpanning-Sz}, we can find in polynomial time a tree $T$ that spans all vertices in $S_1$ such that $c(T)=O((F+B)n^{2/3}\ln{n})$. Since $T$ spans all vertices of $S_1$, then $p(T)\ge p(S_1)\ge \frac{Q}{2}$.

If $\sum_{v \in S_2}x_v p_v \ge \frac{Q}{2}$, we have that 
\begin{align}\label{eq-Case2LowerBoundPrize-Additive}
    p(S_2)= \sum_{v \in S_2}p_v\ge n^{1/3}\cdot\sum_{v \in S_2}x_v p_v \ge \frac{n^{1/3} Q}{2},
\end{align}
where the first inequality holds since $0<x_v <\frac{1}{n^{1/3}}$ for any $v \in S_2$ and the second inequality holds by the case assumption.
We partition $S_2$ into $M$ groups $U_1, \dots, U_M$ in such a way that for each $i\in [M-1]$, $|U_i|= 2|S_2|^{2/3}$, and  $|U_M|\le 2|S_2|^{2/3}$. Hence the number of selected groups is at most
\[
 M \le \left\lceil \frac{|S_2|}{2|S_2|^{2/3}} \right\rceil = \left\lceil \frac{|S_2|^{1/3}}{2} \right\rceil \le \left\lfloor \frac{|S_2|^{1/3}}{2}\right\rfloor + 1 \le |S_2|^{1/3} \le  n^{1/3}.
\]

Now among $U_1, \dots, U_M$, we select the group $U_{z}$ that maximizes the prize, i.e., $z=\arg\max_{i \in [M]}p(U_i)$. We know that
\[
p(U_{z}) \ge\frac{1}{M}\sum_{i=1}^{M}p(U_i)= \frac{p(S_2)}{M}\ge\frac{n^{1/3}Q}{2n^{1/3}} =\frac{Q}{2},
\]
where the first inequality is due to averaging argument, the first equality is due to the additivity of $p$, and the second inequality is due to $M\le n^{1/3}$ and Inequality~\eqref{eq-Case2LowerBoundPrize-Additive}.
   
We now find for each vertex $v$ in $U_z$ a shortest path from $r$ to $v$ and compute a tree $T$ that spans all the vertices in the union of these shortest paths. Clearly, $c(T) \le 2F n^{2/3}$ as $|U_z| \le 2n^{2/3}$ and the cost of a shortest path from $r$ to any $v \in V$ in $G$ is at most $F$. Furthermore, $p(T) \ge p(U_z) \ge \frac{Q}{2}$, by additivity of $p$. This concludes the proof.
\end{proof}

We next show how to use Theorem~\ref{th:tree-ratio} to devise bicriteria approximation algorithms for \DQuotaAdditiveName and \DBudgetAdditiveName.

\paragraph{Approximation algorithm for \DQuotaAdditiveName.}
Recall that $dist(v, u)$ denotes the cost of a shortest path from $v$ to $u$ in $D$ and that $T^*_Q$ is an optimal solution to $I_Q$.

Like in the case of \DSteinerT, we can assume that for all nodes $v$ we have $dist(r, v) \leq (1+\epsilon)c(T^*_Q)$, i.e. that $F\leq (1+\epsilon)c(T^*_Q)$, for any $\epsilon >0$. 
We hence describe our bicriteria $O\left(2, (1+\epsilon)n^{2/3} \ln{n}\right)$-approximation algorithm for \DQuotaAdditiveName under this assumption.

We first find an optimal solution $x$ to~\DQuotaAdditiveLP, let $OPT_Q$ be the optimal value of~\DQuotaAdditiveLP. Observe that $x$ is a feasible solution for the set of constraints~\eqref{lpDBudgetQuotaAdditive} in which $B=OPT_Q\leq c(T^*_Q)$ (see Lemma~\ref{lmAdditiveLPOtimpality}). Therefore, we can use $x$ and apply the algorithm in Theorem~\ref{th:tree-ratio} to obtain a tree $T$ such that $c(T) = O((1+\epsilon)c(T^*_Q)n^{2/3}\ln{n})$ and $p(T)\geq \frac{Q}{2}$. This shows the following theorem.
\thDQuotaTree*

\paragraph{Approximation algorithm for \DBudgetAdditiveName.}
Let us assume that for every $v\in V$, $dist(r, v)\le B$, i.e. $F\leq B$, since otherwise we can remove from $D$ all the nodes $v$ such that $dist(r, v)> B$.

Let $x$ be an optimal solution for \DBudgetAdditiveLP and let $T$ be a tree computed from $x$ by the algorithm in Theorem~\ref{th:tree-ratio}. Since $x$ is a feasible solution to~\eqref{lpDBudgetQuotaAdditive} when $Q=OPT_B \geq p(T^*_B)$, then the prize of $T$ is at least $\frac{p(T^*_B)}{2}$ but its cost can exceed the budget $B$. In this case, however, the cost of $T$ is bounded by $c(T)=O(Bn^{2/3} \ln{n})$ and its prize-to-cost ratio is $\gamma=\frac{p(T)}{c(T)}=\Omega\left(\frac{p(T^*_B)}{B n^{2/3}\ln{n}}\right)$. Therefore, we can use $T$ and a variant of the trimming process introduced by Bateni, Hajiaghay and Liaghat~\cite{bateni2018improved} for undirected graphs, to compute another tree $\hat T$ with cost between $\frac{\epsilon B}{2}$ and $(1+\epsilon)B$ and prize-to-cost ratio $\frac{\epsilon \gamma}{4}$, for any $\epsilon\in (0,1]$. The resulting tree violates the budget at most by a factor $1+\epsilon$ and guarantees an approximation ratio of $O(\frac{n^{2/3}\ln{n}}{\epsilon^2})$.

The trimming process given by Bateni, Hajiaghay and Liaghat~\cite{bateni2018improved} has been used for the undirected version of \DBudgetAdditiveName.
In particular, in their case, we are given an undirected graph $G=(V, E)$, a distinguished vertex $r \in V$ and a budget $B$, where each vertex $v \in V$ is assigned with a prize $p'(v)$ and a cost $c'(v)$. For a tree $T$, the prize and cost of $T$ are the sum of the prizes and costs of the nodes of $T$ and are denoted by $p'(T)$ and $c'(T)$, respectively. A graph $G$ is called \emph{$B$-proper} for the vertex $r$ if the cost of reaching any vertex from $r$ is at most $B$. \cite{bateni2018improved} proposed a trimming process that leads to the following lemma.

\begin{lemma}[Lemma 3 in \cite{bateni2018improved}]\label{lmBateniTrimmingProcess}
Let $T$ be a tree rooted at $r$ with the prize-to-cost ratio $\gamma=\frac{p'(T)}{c'(T)}$. Suppose the underlying graph is $B$-proper for $r$ and for $\epsilon \in (0, 1]$ the cost of the tree is at least $\frac{\epsilon B}{2}$. One can find a tree $T'$ containing $r$ with the prize-to-cost ratio at least $\frac{\epsilon \gamma}{4}$ such that $\epsilon B/2 \le c'(T') \le (1+\epsilon)B$.
\end{lemma}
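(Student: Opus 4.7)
I would split into the trivial case and the hard case, then handle the hard case by a decomposition-plus-averaging argument.

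\textbf{Trivial case.} If $c'(T) \le (1+\epsilon) B$, set $T' = T$. The hypothesis $c'(T) \ge \epsilon B /2$ gives the cost bounds, and the prize-to-cost ratio is $\gamma \ge \epsilon \gamma / 4$, so both required conclusions hold.

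\textbf{Main case: $c'(T) > (1+\epsilon) B$.} The idea is to decompose $T$ into disjoint subtrees of cost roughly $\epsilon B / 2$ each, attach each one to $r$ via a short path using $B$-properness, and pick the candidate with the largest prize by averaging. Concretely, root $T$ at $r$ and run a bottom-up (post-order) carving procedure: walk up $T$ maintaining, for each processed vertex $v$, the cost of the yet-uncarved portion of $T_v$; whenever this quantity first reaches $\epsilon B/2$, carve off the corresponding subtree as a new piece. This yields edge-disjoint subtrees $T_1,\dots,T_m$ of $T$ (with at most one leftover piece containing $r$ that can be absorbed or discarded) whose costs are contained in $[\epsilon B/2,\epsilon B]$ in the case where every vertex has cost at most $\epsilon B/2$. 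For each piece $T_i$ pick any $v_i \in V(T_i)$ and let $P_i$ be a shortest path from $r$ to $v_i$ in the underlying graph, so $c'(P_i) \le B$ by $B$-properness. Define $\hat T_i = T_i \cup P_i$, a tree rooted at $r$, with
\[
\tfrac{\epsilon}{2} B \;\le\; c'(T_i) \;\le\; c'(\hat T_i) \;\le\; c'(T_i) + c'(P_i) \;\le\; (1+\epsilon) B.
\]
Next, since the pieces are vertex-disjoint except possibly at $r$, $\sum_i p'(T_i) \ge p'(T) - p'(r)$, and the number of pieces satisfies $m \le c'(T)/(\epsilon B/2)$. By an averaging argument, some piece $T_{i^*}$ has prize at least $(p'(T)-p'(r))/m \ge \gamma \epsilon B/2$ (the $p'(r)$ term is harmless since one may always include $r$ in a single distinguished piece). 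This gives
\[
\frac{p'(\hat T_{i^*})}{c'(\hat T_{i^*})} \;\ge\; \frac{p'(T_{i^*})}{(1+\epsilon) B} \;\ge\; \frac{\gamma \epsilon}{2(1+\epsilon)} \;\ge\; \frac{\gamma \epsilon}{4},
\]
where the last inequality uses $\epsilon \le 1$. Setting $T' = \hat T_{i^*}$ finishes the argument.

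\textbf{Main obstacle.} The decomposition step has to handle \emph{heavy} vertices, i.e.\ vertices $v$ with $c'(v) \in (\epsilon B/2, B]$, since a single such vertex can by itself make a piece overshoot the $\epsilon B$ target and spoil the upper bound $c'(\hat T_i) \le (1+\epsilon)B$. The fix is to treat such a $v$ as a degenerate one-vertex piece whose associated candidate is simply the shortest path from $r$ to $v$ itself (of cost $\le B$ and automatically including $v$, so at least $c'(v) \ge \epsilon B/2$). One must then verify that the averaging argument still goes through after these heavy-vertex pieces are folded into the counting: since each such piece is charged a cost $\ge \epsilon B/2$ against the total $c'(T)$, the bound $m \le 2 c'(T)/(\epsilon B)$ is preserved, and the ratio estimate above continues to hold. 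Formalizing the carving carefully so that (i) pieces are genuinely disjoint subtrees of $T$, (ii) every vertex of $T$ is charged to exactly one piece, and (iii) the cost of each piece lies in $[\epsilon B/2, \epsilon B]$ or is a single heavy vertex, is the main technical content of the proof.
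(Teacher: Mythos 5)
Your high-level plan matches the description the paper gives of the Bateni--Hajiaghayi--Liaghat procedure: find a subtree $T''$ of $T$ with $\epsilon B/2 \le c'(T'') \le \epsilon B$ and $p'(T'')\geq \tfrac{\epsilon B}{2}\gamma$, then connect $r$ to $T''$ via a shortest path (of cost $\le B$ by $B$-properness). The trivial case, the averaging step, and the final arithmetic $\gamma\epsilon/(2(1+\epsilon))\ge\gamma\epsilon/4$ are all correct. Note that the paper does not reprove this lemma; it cites it and only sketches steps (i) and (ii), so there is nothing in the paper to compare the decomposition details against.

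There is, however, a genuine gap in the carving you describe. You assert that, if every vertex has cost at most $\epsilon B/2$, carving off the uncarved portion of $T_v$ the first time its cost reaches $\epsilon B/2$ yields pieces with cost in $[\epsilon B/2,\epsilon B]$. That is false when an internal vertex $v$ has several children. Take $c'(v)=0$ and three leaf children each of cost $0.4\,\epsilon B$: none of the child subtrees triggers a carve, and when $v$ is processed the uncarved $T_v$ has cost $1.2\,\epsilon B>\epsilon B$, so the carved piece already violates the upper bound. In general the overshoot can be proportional to the branching degree, not just an additive $\epsilon B/2$. You flag the ``heavy vertex'' issue (a single vertex of cost $>\epsilon B/2$), but that fix does not address this multi-child overshoot, which is the actual obstruction. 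The standard repair is to split the children's uncarved portions at $v$ into bundles of total cost in $[\epsilon B/2,\epsilon B]$ and let $v$ serve as a shared root of several pieces; but then the pieces are no longer vertex-disjoint, you must control both the double-counting of $c'(v)$ in the cost bound $m \le 2c'(T)/(\epsilon B)$ and the double-counting of $p'(v)$ in $\sum_i p'(T_i)\ge p'(T)$, and this bookkeeping (which is what Lemma~\ref{lmClaimKuoExtension} handles with its factor $5$) is exactly the content you leave unverified. As written, the ``carve when the uncarved $T_v$ reaches $\epsilon B/2$'' procedure does not establish the decomposition your averaging argument relies on.
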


We can use Lemma~\ref{lmBateniTrimmingProcess} in directed graphs and achieve the same guarantee. Let us first briefly explain the trimming process by~\cite{bateni2018improved}. Their trimming process takes as input a subtree $T$ rooted at a node $r$ of a $B$-proper graph $G$ and first (i) computes a subtree $T''$ of $T$, not necessarily rooted $r$, such that $\frac{\epsilon B}{2}\leq c'(T'') \leq \eps B$ and $p'(T'')\geq \frac{\eps B}{2}\gamma$, where $\gamma = \frac{p'(T)}{c'(T)}$. Then (ii) it connects node $r$ to the root of $T''$ with a minimum-cost path and obtains a tree $T'$ rooted at $r$. Since $G$ is $B$-proper, this path exists and has length at most $B$, which implies that the cost of the resulting tree is between $\frac{\epsilon B}{2}$ and $(1+\eps)B$, and its prize to cost ratio is at least $\frac{p'(T'')}{(1+\eps)B}\geq\frac{\epsilon \gamma}{4}$. This leads to Lemma~\ref{lmBateniTrimmingProcess}. For the case of directed graphs, we prove a lemma equivalent to Lemma~\ref{lmBateniTrimmingProcess}.

\begin{lemma}\label{coTrimmingProcess}
Let $D= (V, A)$ be a $B$-proper graph for a node $r$. Let $T$ be an out-tree of $D$ rooted at $r$ with the prize-to-cost ratio $\gamma=\frac{p(T)}{c(T)}$. Suppose that for $\epsilon \in (0, 1]$, $c(T) \ge \frac{\epsilon B}{2}$. One can find an out-tree $\hat{T}$ rooted at $r$ with the prize-to-cost ratio at least $\frac{\epsilon \gamma}{4}$ such that $\epsilon B/2 \le c'(\hat{T}) \le (1+\epsilon)B$.
\end{lemma}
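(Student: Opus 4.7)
The plan is to adapt Bateni, Hajiaghay and Liaghat's trimming process~\cite{bateni2018improved} described in the paragraph preceding the lemma to the directed setting, showing that the two modifications required (working with sub-out-trees of $T$ and using shortest directed paths) both go through essentially unchanged thanks to the $B$-properness of $D$. Specifically, I would mimic the undirected argument in two stages: (i) find a sub-out-tree $T''$ of $T$, rooted at some internal node $v\in V(T)$ (not necessarily $r$), with $\frac{\epsilon B}{2}\le c(T'')\le \epsilon B$ and $p(T'')\ge \frac{\epsilon B}{2}\gamma$; (ii) prepend to $T''$ a shortest directed path $P$ from $r$ to $v$ in $D$, yielding the out-tree $\hat T := T''\cup P$ rooted at $r$.

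For step (i), I would process $T$ in post-order and let $v$ be the deepest node of $T$ such that the cost of the sub-out-tree $T_v$ of $T$ rooted at $v$ is at least $\frac{\epsilon B}{2}$. Because every proper descendant's sub-out-tree has cost $<\frac{\epsilon B}{2}$, one can add children's sub-out-trees one by one to $\{v\}$ and stop as soon as the accumulated cost first reaches $\frac{\epsilon B}{2}$, so that the final cost does not exceed $\epsilon B$ (the only delicate situation is when $c(v)$ alone is already large, which is handled exactly as in~\cite{bateni2018improved} by taking $T''=\{v\}$ and exploiting $B$-properness; the prize lower bound then follows by the same averaging argument that some sub-out-tree must carry a prize-to-cost ratio at least $\gamma$, yielding $p(T'')\ge \gamma\cdot c(T'')\ge\gamma\cdot \frac{\epsilon B}{2}$). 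Since $T$ is an out-tree and $v\in V(T)$, the collection $T''$ is automatically a sub-out-tree rooted at $v$, which is the key structural observation that lets the undirected trimming transfer to the directed case verbatim.

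For step (ii), the $B$-properness of $D$ guarantees the existence of a directed $r$-to-$v$ path of cost at most $B$; pick a shortest such path $P$ and let $\hat T$ be the union $T''\cup P$, rooted at $r$ (by possibly removing nodes of $P$ that are already in $V(T'')$ to keep $\hat T$ an out-tree; this can only decrease its cost). Then $c(\hat T)\ge c(T'')\ge \frac{\epsilon B}{2}$, and $c(\hat T)\le c(P)+c(T'')\le B+\epsilon B=(1+\epsilon)B$, while $p(\hat T)\ge p(T'')\ge \gamma\cdot \frac{\epsilon B}{2}$. Dividing,
\[
\frac{p(\hat T)}{c(\hat T)}\ge \frac{\gamma\,\epsilon B/2}{(1+\epsilon)B}\ge \frac{\epsilon\gamma}{4},
\]
using $\epsilon\in(0,1]$, which gives the desired prize-to-cost ratio.

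The only real obstacle is the trimming sub-routine in step (i): one has to argue that in an out-tree there is always a candidate $T''$ with cost simultaneously bounded below by $\frac{\epsilon B}{2}$ and above by $\epsilon B$ and with the required prize guarantee. This is exactly the content of Bateni et al.'s analysis, and the only structural feature it uses is that the input object is a rooted tree in which the notion of ``sub-tree rooted at a descendant'' is well defined; since an out-tree satisfies this, nothing in their combinatorial argument needs to be modified. Everything else is a mechanical substitution of undirected paths by directed ones, justified by the hypothesis that $D$ is $B$-proper for $r$.
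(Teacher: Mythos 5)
Your proof is correct and takes essentially the same route as the paper: both invoke step (i) of Bateni et al.'s trimming procedure to extract a subtree $T''$ with cost in $[\epsilon B/2, \epsilon B]$ and prize at least $\gamma\epsilon B/2$, observe that any connected subtree of an out-tree inherits an out-tree structure rooted at the node closest to $r$, and then prepend a shortest directed $r$-to-$v$ path of cost at most $B$ guaranteed by $B$-properness. The paper just packages the first part slightly more compactly by treating $T$ as an undirected tree, applying Bateni et al.'s step (i) verbatim, and then re-orienting the edges of $T''$ away from $r$, whereas you re-sketch the internals of the trimming sub-routine (post-order traversal, deepest feasible node, greedy accumulation of children), but the substance is identical.
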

\begin{proof}
We first define an undirected tree by ignoring the directions of edges in $T$. Now we apply to $T$ the step (i) of the trimming procedure by~\cite{bateni2018improved} with $p'(\cdot) = p(\cdot)$ and $c'(\cdot) =c(\cdot)$ as prize and cost functions, respectively. This enables us to find a subtree $T''$ such that $\frac{\epsilon B}{2}\leq c'(T'') \leq \eps B$ and $p'(T'')\geq \frac{\eps B}{2}\gamma$. We orient all the edges in $T''$ toward the leaves, which are the same directions as in $T$, to obtain an out-tree $T''_{out}$ corresponding to $T''$. Then, we add a minimum cost path from $r$ to the root of $T''_{out}$. Note that as $D$ is $B$-proper for $r$, such a path exists and has length at most $B$. The obtained out-tree $\hat{T}$ has the desired properties because $p(T''_{out}) = p'(T'') $ and $ c(T''_{out}) = c'(T'')$ and therefore $\frac{\epsilon B}{2}\leq c(\hat{T})\leq (1+\eps)B$ and $\frac{p(\hat{T})}{c(\hat{T})}\geq\frac{p'(T'')}{(1+\eps)B}\geq\frac{\epsilon \gamma}{4}$.
\end{proof}

This results in the following theorem.

\MainDBudgetTheorem*
\begin{proof}
We first find an optimal solution $x$ to~\DBudgetAdditiveLP, then, we use $x$ as input to the algorithm in Theorem~\ref{th:tree-ratio} to obtain a tree $T$ in which $c(T) = O(Bn^{2/3}\ln{n})$ and $p(T)\geq \frac{OPT_B}{2}\geq\frac{p(T^*_B)}{2}$ as discussed above. The prize-to-cost ratio of $T$ is $\gamma=\frac{p(T)}{c(T)}=\Omega\left(\frac{p(T^*_B)}{B n^{2/3}\ln{n}}\right)$. Then, if $c(T)>B$, we can apply the algorithm of Lemma~\ref{coTrimmingProcess} to $T$ and compute another tree $\hat T$ with cost $c(\hat T)\le (1+\epsilon)B$ and prize-to-cost ratio at least $\frac{\epsilon \gamma}{4}$. Moreover, $c(\hat T) \ge \epsilon B/2$, and therefore we have $p(\hat T)=\Omega\left( \frac{\epsilon^2  p(T^*_B)}{n^{2/3}\ln{n}}\right)$, which concludes the proof.
\end{proof}

\section{Undirected Rooted Submodular Tree problems}
In this section we present our polynomial time bicriteria approximation algorithms for \UQuotaName and \UBudgetName.
Let $I_Q=<G=(V, E), c, p, r, Q>$ and $I_B=<G=(V, E), c, p, r, B>$ be two instances of \UQuotaName and \UBudgetName, respectively, and let $T^*_Q$ and $T^*_B$ be two optimal solutions for $I_Q$ and $I_B$, respectively. 

The algorithms use a technique similar to that used for \DQuotaAdditiveName and \DBudgetAdditiveName, which can be summarized in the following three steps: 

\begin{enumerate}
    \item We define a set of linear constraints, denoted as~\eqref{lpUndirected-SubmodularFlow}, over fractional variables, that takes a given quota $Q$ and a given budget $B$ as parameters and admits a feasible solution if there exists a subtree $T$ of $G$ such that $r\in V(T)$, $c(T)\leq B$, and $p(T)\geq Q$.  The set of constraints uses an exponential number of linear constraints and variables but we can show that it can be solved in polynomial time through a separation oracle.


    \item We give a polynomial time algorithm that takes as input a feasible solution to~\eqref{lpUndirected-SubmodularFlow} and computes a subtree $T$ of $G$ such that $r\in V(T)$ and one of the following two conditions holds: (i) $c(T) = O(B\sqrt{n}\ln{n})$ and $p(T)\geq \frac{Q}{2}$ or (ii) $c(T)\leq F$ and $p(T)\geq \frac{Q}{2\sqrt{n}}$.

    \item 
    \begin{description}
    \item[\UQuotaName:] We first show that we can assume that $F\leq (1+\epsilon)c(T^*_Q)$, for any $\epsilon>0$. Then, we use a solution for~\eqref{lpUndirected-SubmodularFlow} that minimizes $B$ as input to the algorithm in the previous step and obtain a tree $T$ such that (i) $c(T) = O(c(T^*_Q)\sqrt{n}\log{n})$ and $p(T)\geq \frac{Q}{2}$;  or (ii) $c(T) \le (1+\epsilon)c(T^*_Q)$ and $p(T)\geq \frac{Q}{2\sqrt{n}}$.
    
    \item[\UBudgetName:] We assume w.l.o.g. that $F\leq B$ and use a solution for~\eqref{lpUndirected-SubmodularFlow} that maximizes $Q$ as input to the algorithm in the previous step and obtain a tree $T$ such that $c(T) = O(B\sqrt{n}\ln{n})$ and $p(T)\geq \frac{p(T^*_B)}{2}$ or (ii) $c(T)\leq B$ and $p(T)\geq \frac{p(T^*_B)}{2\sqrt{n}}$.
    
    In the first case, the computed tree $T$ can violate the budget constraint by a factor $h=O(\sqrt{n}\ln{n})$ but it has a prize-to-cost ratio $\gamma=\Omega\left(\frac{p(T^*_B)}{B\sqrt{n}\ln{n}}\right)$. We introduce a variant of the trimming process proposed by D'Angelo, Delfaraz and Gilbert~\cite{d2022budgeted} for submodular prize functions that allows us to compute another tree $\hat T$ such that $r\in V(T)$ and one of the two following conditions holds: the prize-to-cost ratio of $\hat{T}$ is at least $\frac{\epsilon^2 \gamma}{640}$ and $\epsilon B/2 \le c(\hat{T}) \le (1+\epsilon)B$; $p(\hat{T}) \ge p(T)/5h$ and $c(\hat{T}) \le B$.
    Therefore, the obtained tree $\hat{T}$ achieves an approximation ratio of $O(\frac{\sqrt{n}\ln{n}}{\epsilon^3})$ at the cost of a budget violation of $1+\epsilon$, for any $\epsilon \in (0, 1]$.
    
    The new trimming process might be of its own interest.
    \end{description}

\end{enumerate}
In the following, we will detail each step of our algorithms.

\paragraph{Bounding the optimal cost and submodular prize.}
We formulate a set of linear constraints, encoding a submodular flow problem, that admits a feasible solution if there exists a tree $T$ such that $r\in V(T)$, $c(T)\leq B$, and $p(T)\geq Q$.

We use the submodular flow problem introduced by Edmonds and Giles~\cite{edmonds1977min}, which is a generalization of the network flow problem in which restrictions on flows into vertices are generalized to flows into subsets of vertices. In particular, let $D=(V, A)$ be a directed graph with upper and lower capacity bounds $\bar c, \underbar c \in \mathbb{R}^A$ on its edges. Let $d:A \rightarrow \mathbb{R}$ be a weight function on the edges and $f:2^V \rightarrow \mathbb{R}$ be a submodular function on the subsets of vertices. The \emph{base polyhedron} of $f$ is defined as:

\[
\mathcal{B}(f)=\{x| x\in \mathbb{R}^V, x(X)\le f(X), \forall X \subset V, x(V)=f(V)\},
\]
where for each $X \subseteq V$ and $x\in \mathbb{R}^V$, $x(X)=\sum_{v \in X} x(v)$.

For any $S\subseteq V$, let $\delta^+(S)$ (resp. $\delta^-(S)$) be the set of edges entering (resp. leaving) the set $S$. Let $\upvarphi: A \rightarrow \mathbb{R}$ be a flow function. For any $X \subseteq V$, we define $\partial\upvarphi(X)=\sum_{e \in \delta^-(X)} \upvarphi(e) - \sum_{e \in \delta^+(X)} \upvarphi(e)$.
Consider the following linear program:

\[
\max \sum_{e \in A} d(e)\upvarphi(e) \text{ subject to } \underbar c (e)\le \upvarphi(e) \le \bar c(e) \forall e \in A, \partial\upvarphi \in  \mathcal{B}(f).
\]
A feasible solution $\upvarphi$ that satisfies both constraints in this linear program is called a submodular flow~\cite{fujishige2000algorithms}. Gr{\"o}tschel, Lov{\'a}sz, and Schrijver~\cite{grotschel1981ellipsoid} provided a separation oracle and used the Ellipsoid method to solve it in polynomial time. 

Now we are ready to provide our set of linear constraints, which is denoted as~\eqref{lpUndirected-SubmodularFlow}.
Let $D=(V, A)$ be the directed graph obtained from $G=(V, E)$ such that $A=\{(w,v), (v,w)|\{w, v\} \in E\}$.
Let $p_v=p(\{v\})$ and $c_v=c(v), \forall v \in V$. For every $v \in V$, we let $\mathcal{P}_v$ be the set of simple paths in $D$ from $r$ to $v$.
\begin{align}\label{lpUndirected-SubmodularFlow}
& \tag{Const-URST}\\
                  \sum_{v \in S} x_v p_v & \le p(S),&& \forall S\subseteq V\label{lpUndirected-SubmodularFlow:submodular}\\
                  \sum_{v \in V} x_v p_v &\geq Q\label{lpUndirected-SubmodularFlow:quota}\\
                  \sum_{v \in V} x_v c_v &\le B\label{lpUndirected-SubmodularFlow:budget}\\
                  \sum_{P \in \mathcal{P}_v}f^v_P &= x_v,&& \forall v \in V\setminus \{r\}\label{lpUndirected-SubmodularFlow:overallflow}\\
                 \sum_{P \in \mathcal{P}_v:w \in P} f^v_{P}&\le n x_w,&& \forall v\in V\setminus \{r\} \text{ and }\forall w \in V\setminus \{ v\}\label{lpUndirected-SubmodularFlow:capacity}\\
                 0 \le x_v  &\le 1, &&\forall v\in V\notag\\
                 0 \le f^v_P &\le 1, &&\forall v\in V\setminus \{r\}, P \in \mathcal{P}_v\notag
\end{align}

We use variables $f^v_P$ and $x_v$, for each $v\in V$ and $P \in \mathcal{P}_v$, where $f^v_P$ is the amount of flow sent from $r$ to $v$ using path $P$ and $x_v$ is the overall amount of flow sent from $r$ to $v$. 
The value of $x_v$ can also be intended as the fraction of submodular prize $p(\{v\})$ shipping from $v$ to a fictitious sink on a fictitious edge between $v$ and the sink.
Variables $x_v$, for $v\in V$, are called \emph{capacity variables}, while variables $f_v^P$ for $v\in V$ and $P \in \mathcal{P}_v$ are called \emph{flow variables}. 

The constraints in~\eqref{lpUndirected-SubmodularFlow} are as follows. Constraint~\eqref{lpUndirected-SubmodularFlow:submodular} enables us to move from submodular prizes to additive prizes. Constraint~\eqref{lpUndirected-SubmodularFlow:quota} and~\eqref{lpUndirected-SubmodularFlow:budget} ensure that any feasible solution has a prize at least $Q$ and a cost at most $B$, here the prize is intended as a linear combination of the singleton prizes where the coefficients are given by the capacity variables. 
Constraints~\eqref{lpUndirected-SubmodularFlow:overallflow} and~\eqref{lpUndirected-SubmodularFlow:capacity} formulate a connectivity constraint through standard flow encoding, that is they ensure that the nodes $v$ with $x_v>0$ induce subgraph in which all nodes are reachable from $r$. In particular, constraint~\eqref{lpUndirected-SubmodularFlow:overallflow} ensures that the overall amount of flow that is sent from $r$ to any vertex $v$ is equal to $x_v$ and constraint~\eqref{lpUndirected-SubmodularFlow:capacity} ensures that the total flow from $r$ to $v$ passing through a vertex $w$ cannot exceed $n x_w$. It will be clear later why in a flow from $r$ to $v$ the capacity of each vertex $w \in V \setminus \{r, v\}$ is set to $n x_w$.

Note that although the number of flow variables in~\eqref{lpUndirected-SubmodularFlow} is exponential, we can have an equivalent formulation with a polynomial number of variables (see Appendix~\ref{apx:lp}). Moreover, even if the number of constraints~\eqref{lpUndirected-SubmodularFlow:submodular} is exponential, we can use the separation oracle by Gr{\"o}tschel, Lov{\'a}sz, and Schrijver~\cite{grotschel1981ellipsoid} to solve~\eqref{lpUndirected-SubmodularFlow} in polynomial time.

The next theorem allows us to use a feasible solution to~\eqref{lpUndirected-SubmodularFlow} in order to compute a lower bound to an optimal solution for \UQuotaName and an upper bound to an optimal solution for \UBudgetName.

\begin{theorem}\label{th-UndirectedLPOtimpality}
Given a graph $G=(V, E)$, $r\in V$,  $c:V \rightarrow \mathbb{R}^{\ge 0}$, $p:2^V \rightarrow \mathbb{R}^{\ge 0}$, $B\in \mathbb{R}^{\ge 0}$ and $Q\in \mathbb{R}^{\ge 0}$, if there exists a tree $T$ in $G$ such that $r\in V(T)$, $c(T)\leq B$ and $p(T)\geq Q$, then there exists a feasible solution $x$ for~\eqref{lpUndirected-SubmodularFlow}.
\end{theorem}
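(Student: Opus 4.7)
My plan is to construct, given the tree $T$ (with $r\in V(T)$, $c(T)\le B$, $p(V(T))\ge Q$), an explicit feasible pair $(x,f)$ for~\eqref{lpUndirected-SubmodularFlow}. I will take $x$ to be supported on $V(T)$ and determined by the Shapley value $y^{\mathrm{Sh}}$ of $p$ restricted to $V(T)$: letting $y^{\mathrm{Sh}}_v = \tfrac{1}{|V(T)|!}\sum_\sigma \bigl[p(\pi^\sigma(v)\cup\{v\})-p(\pi^\sigma(v))\bigr]$ where $\sigma$ ranges over permutations of $V(T)$ and $\pi^\sigma(v)$ denotes the elements preceding $v$ in $\sigma$, I set $x_v = y^{\mathrm{Sh}}_v/p_v$ for $v\in V(T)$ with $p_v>0$, $x_v = 1/n$ for $v\in V(T)$ with $p_v=0$, and $x_v=0$ otherwise. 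For each $v\in V(T)\setminus\{r\}$ I route $x_v$ units of flow from $r$ to $v$ along the unique $r$-$v$ path $P_v$ in $T$, oriented toward $v$ in $D$ (which is possible because $D$ is the bidirected version of $G$), setting $f^v_{P_v}=x_v$ and all other flow variables to $0$.

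The verification of constraints~\eqref{lpUndirected-SubmodularFlow:submodular}--\eqref{lpUndirected-SubmodularFlow:overallflow} invokes standard polymatroid facts. By Edmonds' theorem each greedy vector $y^\sigma$ lies in the base polytope of $p$ restricted to $V(T)$, and so therefore does the convex combination $y^{\mathrm{Sh}}$; this gives $y^{\mathrm{Sh}}(S')\le p(S')$ for every $S'\subseteq V(T)$ together with $y^{\mathrm{Sh}}(V(T)) = p(V(T))$. As a direct consequence, $\sum_{v\in S} x_v p_v = y^{\mathrm{Sh}}(S\cap V(T)) \le p(S\cap V(T)) \le p(S)$ by monotonicity, establishing~\eqref{lpUndirected-SubmodularFlow:submodular}, and $\sum_v x_v p_v = p(V(T)) \ge Q$, establishing~\eqref{lpUndirected-SubmodularFlow:quota}. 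Submodularity bounds each marginal by $p_v$, so $y^{\mathrm{Sh}}_v \le p_v$ and thus $x_v\le 1$ for every $v$; then $\sum_v x_v c_v \le \sum_{v\in V(T)} c_v = c(T) \le B$, giving~\eqref{lpUndirected-SubmodularFlow:budget}. Constraint~\eqref{lpUndirected-SubmodularFlow:overallflow} is immediate from $f^v_{P_v} = x_v$.

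The main obstacle, and the reason the factor $n$ appears in~\eqref{lpUndirected-SubmodularFlow:capacity}, is the capacity inequality: with the chosen routing its left-hand side reduces, for every $w\in P_v\setminus\{v\}$, to $x_v$, so I need to show $x_v \le n\, x_w$. The key observation is the uniform lower bound $x_w \ge 1/n$ for every $w\in V(T)$. When $p_w=0$ this holds by construction; when $p_w>0$ it follows from the Shapley lower bound $y^{\mathrm{Sh}}_w \ge p_w/|V(T)| \ge p_w/n$, which I will obtain by isolating the contribution to the Shapley average coming from the $(|V(T)|-1)!$ permutations placing $w$ first: each such permutation contributes the maximum possible marginal $p_w$ (the empty-prefix marginal), while every other permutation contributes a nonnegative marginal. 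Combined with $x_v \le 1$ this yields $x_v \le 1 = n\cdot(1/n) \le n\, x_w$, completing the argument.
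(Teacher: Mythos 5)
Your construction is correct and arrives at the same qualitative target as the paper — a vector $x$ supported on $V(T)$ satisfying $1/n \le x_v \le 1$ for $v\in V(T)$, with $\sum_{v\in S}x_vp_v \le p(S)$ for all $S$ and $\sum_{v\in V(T)}x_vp_v = p(V(T))$ — but by a genuinely different route. The paper constructs $x$ iteratively: it starts from $x_v = 1/n$ for all $v\in V(T)$, increases coordinates one at a time until each is either $1$ or pinned by a tight set, and then proves (via a closure-under-union-and-intersection lemma for tight sets) that the union of all tight sets is $V(T)$ itself, so the global constraint becomes tight and the prize is preserved. You instead give a closed-form solution: $x_v = y^{\mathrm{Sh}}_v/p_v$ for the Shapley vector $y^{\mathrm{Sh}}$ restricted to $V(T)$, invoking Edmonds' theorem to place each greedy vector (and hence their convex combination $y^{\mathrm{Sh}}$) in the base polytope, which directly yields both the submodularity constraints and the tightness $y^{\mathrm{Sh}}(V(T)) = p(V(T))$; the lower bound $x_w \ge 1/n$ follows by isolating the empty-prefix permutations in the Shapley average. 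Your argument is shorter and makes the connection to polymatroid theory explicit, at the cost of invoking heavier machinery (Edmonds' greedy characterization of the base polytope) where the paper's argument is self-contained. Both routes then conclude identically on the budget and flow constraints using $x_v\le 1$ and $nx_w\ge 1$. One small caveat worth stating explicitly in your write-up: the Shapley lower bound $y^{\mathrm{Sh}}_w \ge p_w/|V(T)|$, and the identity $y^{\mathrm{Sh}}(V(T)) = p(V(T))$, both require the standard normalization $p(\emptyset)=0$; the paper implicitly makes the same assumption (its base-polyhedron definition does too), but since your argument uses the empty-prefix marginal directly it would be cleaner to say so.
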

\begin{proof}
Let us consider a variant~\eqref{restrictedlpUndirected-SubmodularFlow} of~\eqref{lpUndirected-SubmodularFlow} in which we only consider the graph induced by $V(T)$ and $x_v$ is constrained to be at least $1/n$, for all $v\in V(T)$.

\begin{align}\label{restrictedlpUndirected-SubmodularFlow}
& \tag{Const-URST'}\\
                   \sum_{v \in S} x_v p_v & \le p(S),&& \forall S\subseteq V(T)\label{constraint:submodular}\\
                   \sum_{v \in V(T)} x_v p_v &\geq Q\label{constraint:quota}\\
                  \sum_{v \in V(T)} x_v c_v &\le B\label{constraint:budget}\\
                  \sum_{P \in \mathcal{P}_v}f^v_P &= x_v,&&\forall v \in V\setminus \{r\}\label{constraint:flow}\\
                 \sum_{P \in \mathcal{P}_v:w \in P} f^v_{P}&\le n x_w ,&& \forall v\in V(T)\setminus \{r\} \text{ and }\forall w \in V(T)\setminus \{v\}\label{constraint:flowconservation}\\
                 1/n \le x_v &\le 1, &&\forall v\in V(T)\label{constraint:capacitypositive}\\
                 0 \le f^v_P&\le 1, &&\forall v\in V(T), P \in \mathcal{P}_v\label{constraint:flowpositive}
\end{align}
We observe that any feasible solution to~\eqref{restrictedlpUndirected-SubmodularFlow} induces a feasible solution for~\eqref{lpUndirected-SubmodularFlow} as it is enough to set all the additional variables in~\eqref{lpUndirected-SubmodularFlow} to be equal to 0 (recall that function $p$ is monotone non-decreasing).

We therefore construct a feasible solution to~\eqref{restrictedlpUndirected-SubmodularFlow}.
We first focus on constraints~\eqref{constraint:submodular} and show that we can construct a solution $x$ that satisfies all such constraints and for which $\sum_{v \in V(T)}x_v p_v = p(T)$, which implies that also constraint~\eqref{constraint:quota} is satisfied as $p(T)\geq Q$ by hypothesis.
We then show that the constructed solution satisfies also constraints~\eqref{constraint:budget}--\eqref{constraint:flowpositive}.

We start from solution $x$, where $x_v=1/n$, for each $v\in V(T)$. Since $p_v\leq p(S)$, for any $v\in S$ and $S\subseteq V(T)$, then constraint~\eqref{constraint:submodular} for set $S$ is satisfied for any $x_v\leq1/|S|$. By assigning $x_v = 1/n$, for all $v\in V(T)$, we have $x_v\leq 1/|S|$, for all $S\subseteq V(T)$, and all constraints~\eqref{constraint:submodular} are satisfied. We now show how to increase some of the variables $x_v$ in such a way that $\sum_{v \in V(T)}x_v p_v = p(T)$, i.e. that constraints~\eqref{constraint:submodular} corresponding to set of $V(T)$ is tight\footnote{Given a set $S\subseteq V(T)$, we say that constraints~\eqref{constraint:submodular} is tight for $S$ if $\sum_{v \in S} x_v p_v  = p(S)$.} and all the other constraints~\eqref{constraint:submodular} remain satisfied.

Starting from $x_v=1/n$, for each $v\in V(T)$, we increase variables $x_v$, one by one, until either $x_v=1$ or, for some set $S$ with $v\in S$, we have $\sum_{v \in S} x_v p_v = p(S)$. We keep on increasing variables until, for all $v\in V(T)$, we have $x_v = 1$ or $\sum_{v \in S} x_v p_v = p(S)$, for some $S$ such that $v\in S$. Note that, by using this procedure we have that constraint~\eqref{constraint:submodular} is tight for at least one set since, by submodularity of $p$, it holds $\sum_{v \in S} p_v \ge p(S)$, for all $S\subseteq V(T)$. Moreover, all constraints~\eqref{constraint:submodular} remain satisfied by construction.

The following lemma shows that if constraints~\eqref{constraint:submodular} corresponding to two subsets of $V(T)$ are tight, then the constraints corresponding to their union and their intersection are also tight.
\begin{lemma}\label{lmClosedUnsderIntersectionAndUnion}
Let $x$ be a solution satisfying constraints~\eqref{constraint:submodular} and let $U_1$ and $U_2$ be two subsets of $V(T)$ such that $\sum_{v \in U_i}x_v p_v =p(U_i)$ for any $i \in \{1,2\}$, then $\sum_{v \in U_1 \cup U_2}x_v p_v =p(U_1 \cup U_2)$ and $\sum_{v \in U_1 \cap U_2}x_v p_v =p(U_1 \cap U_2)$.
\end{lemma}
\begin{proof}
We have
\[
0=p(U_1)-\sum_{v \in U_1}x_v p_v+ p(U_2)-\sum_{v \in U_2}x_v p_v\ge p(U_1 \cup U_2)-\sum_{v \in U_1 \cup U_2}x_v p_v+ p(U_1 \cap U_2)-\sum_{v \in U_1 \cap U_2}x_v p_v.
\]
Note that the inequality holds because $p(U_1)+ p(U_2) \ge p(U_1 \cup U_2)+ p(U_1 \cap U_2)$ by the submodularity of $p$ and, clearly, $\sum_{v \in U_1}x_v p_v+ \sum_{v \in U_2}x_v p_v= \sum_{v \in U_1 \cup U_2}x_v p_v+ \sum_{v \in U_1 \cap U_2}x_v p_v$.

 By constraint~\eqref{constraint:submodular}, $p(U_1 \cup U_2)-\sum_{v \in U_1 \cup U_2}x_v p_v \ge 0$ and $p(U_1 \cap U_2)-\sum_{v \in U_1 \cap U_2}x_v p_v \ge 0$. This implies that $p(U_1 \cup U_2)=\sum_{v \in U_1 \cup U_2}x_v p_v$ and $p(U_1 \cap U_2)=\sum_{v \in U_1 \cap U_2}x_v p_v$, which concludes the proof.
\end{proof}
Let $S_1$ be the union of all sets for which the corresponding constraint~\eqref{constraint:submodular} is tight. By Lemma~\ref{lmClosedUnsderIntersectionAndUnion} we have that $\sum_{v \in S_1}x_v p_v =p(S_1)$. We now show that $S_1=V(T)$.
By contradiction, let us suppose that $S_2 = V(T)\setminus S_1$ is not empty. We have that
$\sum_{v \in S_2}x_v p_v < p(S_2)$, as otherwise we should have $S_2\subseteq S_1$, by definition of $S_1$. Therefore, there exists a set $S_3$ such that $S_2\cap S_3\neq \emptyset$ and $\sum_{v \in S_3}x_v p_v =p(S_3)$, as otherwise we can increase one of the variables corresponding to nodes in $S_2$, which is not possible by the construction of $x$. By definition of $S_1$, it follows that $S_3\subseteq S_1$, a contradiction to $S_1\cap S_2 = \emptyset$ and $S_2\cap S_3\neq \emptyset$.
This concludes the proof that $x$ satisfies constraints~\eqref{constraint:submodular} and~\eqref{constraint:quota}.

We now show that $x$ satisfies also constraints~\eqref{constraint:budget}--\eqref{constraint:flowpositive}. 
Since $c(T)\le B$ and $x_v \le 1$, for all $v\in V(T)$, then constraint~\eqref{constraint:budget} is satisfied. Constraint~\eqref{constraint:capacitypositive} is satisfied by construction of $x$.
Finally, there exists a flow function that satisfies constraints~\eqref{constraint:flow},~\eqref{constraint:flowconservation}, and~\eqref{constraint:flowpositive}, given any $x$ such that $1/n\le x_v\leq 1$, for all $v\in V(T)$. In fact: (i) since $T$ is connected, constraints~\eqref{constraint:flow} are satisfied by any flow function that, for all $v\in V(T)$, sends exactly $x_v$ amount of flow from $r$ to $v$; (ii) since $x_w \ge 1/n$ for any $w \in V(T)$, then $nx_w\geq 1$ and  $\sum_{P \in \mathcal{P}_v:w \in P} f^v_{P} \leq \sum_{P \in \mathcal{P}_v}f^v_P = x_v \leq 1$, which implies that constraints~\eqref{constraint:flowconservation} are satisfied by any flow function satisfying constraints~\eqref{constraint:flow}.
\end{proof}
As in the case of additive prize functions, the optimum $OPT_Q$ to the linear program of minimizing $B$ subject to constraints~\eqref{lpUndirected-SubmodularFlow} gives a lower bound to $c(T^*_Q)$, i.e. $OPT_Q \leq c(T^*_Q)$, and the optimum $OPT_B$ to the linear program of maximizing $Q$ subject to constraints~\eqref{lpUndirected-SubmodularFlow}
gives an upper bound to $p(T^*_B)$, $OPT_B \geq p(T^*_B)$. We denote these two linear programs by \UQuotaLP and \UBudgetLP, respectively.

\paragraph{Finding a good tree from a feasible fractional solution to~\eqref{lpUndirected-SubmodularFlow}.}

Here we elaborate the second step. In particular, we find a tree with a good trade-off between prize and cost using the fractional solution from the previous step. We show the following theorem. Recall that $F$ denotes the maximum distance from $r$ to a node in $V$, $F:=\max_{v\in V}\{ dist(r,v) \}$.

\begin{theorem}\label{th:urst:tree-ratio}
Given a feasible solution $x$ to~\eqref{lpUndirected-SubmodularFlow}, then there exists a polynomial time algorithm that computes a tree $T$ such that $r\in V(T)$, and one of the two following conditions holds
(i) $c(T) = O(B\sqrt{n}\ln{n})$ and $p(T)\geq \frac{Q}{2}$ or (ii) $c(T)\leq F$ and $p(T)\geq \frac{Q}{2\sqrt{n}}$.
\end{theorem}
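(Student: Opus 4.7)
The plan follows the template of Theorem~\ref{th:tree-ratio}: partition $V$ by a threshold on $x_v$ and then split into two cases according to where the LP prize mass concentrates. Let $S := \{v \in V : x_v > 0\}$, $S_1 := \{v \in S : x_v \ge 1/\sqrt{n}\}$ and $S_2 := S \setminus S_1$. By~\eqref{lpUndirected-SubmodularFlow:quota}, $\sum_{v} x_v p_v \ge Q$, so either (A) $\sum_{v \in S_1} x_v p_v \ge Q/2$ or (B) $\sum_{v \in S_2} x_v p_v \ge Q/2$ holds; and constraint~\eqref{lpUndirected-SubmodularFlow:submodular} applied to $S_j$ yields $p(S_j) \ge \sum_{v \in S_j} x_v p_v$ for $j\in\{1,2\}$.

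Case (B) will produce condition (ii). Since $x_v < 1/\sqrt{n}$ for every $v \in S_2$, subadditivity (which follows from submodularity and $p(\emptyset)=0$) gives
$\sum_{v \in S_2} p_v > \sqrt{n} \sum_{v \in S_2} x_v p_v \ge \sqrt{n}\,Q/2$, and pigeonhole on $|S_2| \le n$ yields a single vertex $v^* \in S_2$ with $p_{v^*} > Q/(2\sqrt{n})$. Returning a shortest $r$-to-$v^*$ path $T$ in $G$ gives $c(T) \le F$ and, by monotonicity of $p$, $p(T) \ge p(\{v^*\}) > Q/(2\sqrt{n})$, as required.

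Case (A) will produce condition (i). Since $p(S_1) \ge Q/2$ by~\eqref{lpUndirected-SubmodularFlow:submodular}, monotonicity ensures that any tree $T$ rooted at $r$ and spanning $S_1$ satisfies $p(T) \ge Q/2$, so it suffices to exhibit such a tree with $c(T) = O(B\sqrt{n}\log n)$. Mirroring Lemma~\ref{thSpanning-Sz}, I would pick an auxiliary threshold $\tau$, set $U := \{v \in S : x_v \ge \tau\}$ (so that $S_1 \subseteq U$ whenever $\tau \le 1/\sqrt{n}$ and $c(U) \le B/\tau$), split $S_1$ into \emph{cheap} vertices reachable from $r$ inside $G[U]$ (covered by shortest paths in $G[U]$, total cost $\le c(U)$) and \emph{expensive} vertices, and for each expensive $v$ define $X_v := \{w \in S\setminus U \mid v \text{ reachable from } w \text{ in } G[U\cup\{w\}]\}$. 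A hitting set $X'$ for the family $\{X_v\}$, obtained through Claim~\ref{clHittingSet}, is then connected to $r$ via shortest $G$-paths and to the expensive vertices through $G[U]$, and the union of the two covers is converted into a single tree rooted at $r$.

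The hard part will be Case (A): the capacity term $n x_w$ in~\eqref{lpUndirected-SubmodularFlow:capacity} (instead of $x_w$ as in~\eqref{lpDBudgetQuotaAdditive:capacity}) weakens the analogue of Claim~\ref{clBoundXv} by a factor of $n$, yielding only $|X_v| \ge x_v/(n\tau)$, so the naive balancing of $c(U) \le B/\tau$ against $|X'|\cdot F$ does not immediately give $O(B\sqrt{n}\log n)$. To close this gap I plan to (a) preprocess the instance via a $\log_{1+\epsilon}$-style guessing procedure analogous to that in Section~\ref{sec:DST} so that w.l.o.g.\ $F = O(B)$ in Case (A) (otherwise Case (ii) already suffices by taking a shortest path to any $v\in S_1$ of maximum $p_v$), and (b) sharpen the hitting-set analysis by combining the flow decomposition of~\eqref{lpUndirected-SubmodularFlow:overallflow}--\eqref{lpUndirected-SubmodularFlow:capacity} with the submodularity constraint~\eqref{lpUndirected-SubmodularFlow:submodular}, so as to obtain a hitting set of size $O(\sqrt{n}\log n)$ while keeping $c(U) = O(B\sqrt{n}\log n)$. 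This yields a tree $T$ with $c(T) = O(B\sqrt{n}\log n)$ and $p(T) \ge Q/2$, completing case (i).
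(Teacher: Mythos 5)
Your partition of $S$ into $S_1,S_2$ by the $1/\sqrt{n}$ threshold, the two-case split on where the LP prize mass lands, and the handling of Case~(B) (pigeonhole over $S_2$ plus a shortest $r$--$v^*$ path, giving $c(T)\leq F$ and $p(T)\geq Q/(2\sqrt n)$) all match the paper exactly. (One small remark: the inequality $\sum_{v\in S_2}p_v\geq \sqrt{n}\sum_{v\in S_2}x_vp_v$ needs only $x_v<1/\sqrt n$; no appeal to subadditivity is required.)

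Case~(A) is where the proposal diverges from the paper and where it has a real gap. You try to port the directed hitting-set machinery (Lemma~\ref{thSpanning-Sz}, Claims~\ref{clBoundXv} and~\ref{clHittingSet}), and you correctly observe that the relaxed node capacities $n x_w$ in constraint~\eqref{lpUndirected-SubmodularFlow:capacity} cripple the analogue of Claim~\ref{clBoundXv}: the lower bound on $|X_v|$ drops by a factor of $n$, so the cost of $U$ and the size of the hitting set cannot both be kept at $O(\sqrt{n}\log n)$ simultaneously. The two patches you sketch do not close this. Patch~(a) is not sound: if $F\gg B$, picking the highest-prize vertex of $S_1$ only guarantees $p_v\geq Q/(2|S_1|)=\Omega(Q/n)$, not $Q/(2\sqrt n)$, so you do not land in condition~(ii). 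Patch~(b) ("sharpen the hitting-set analysis by combining the flow decomposition with the submodularity constraint") is not an argument; no mechanism is given for why the factor-$n$ loss disappears.

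The paper avoids the hitting-set route entirely in the undirected case. In Case~(A) it treats $S_1\cup\{r\}$ as the terminal set of a \emph{fractional node-weighted Steiner tree} instance (edge costs zero, node costs $c$). Scaling the capacity variables by $\sqrt{n}$ produces a feasible fractional solution of cost at most $B\sqrt n$ for that instance, because every terminal has $x_v\geq 1/\sqrt n$. Then it invokes the Guha--Moss--Naor--Schieber rounding of the Klein--Ravi algorithm, which returns a Steiner tree spanning $S_1$ of cost at most $O(\log n)$ times the fractional optimum, i.e.\ $O(B\sqrt{n}\log n)$; monotonicity plus constraint~\eqref{lpUndirected-SubmodularFlow:submodular} give $p(T)\geq p(S_1)\geq \sum_{v\in S_1}x_vp_v\geq Q/2$. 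The hitting-set technique is needed in the directed setting precisely because no such $O(\log n)$-approximation for directed Steiner tree exists; in the undirected setting the black-box reduction is both simpler and sidesteps the weak $n x_w$ capacities. That black-box step is the idea missing from your proposal.
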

\begin{proof}
Let $x$ be a feasible solution for~\eqref{lpUndirected-SubmodularFlow} and $S \subseteq V$ be the set of vertices $v$ with $x_v>0$ i.e., $S=\{v \in V: x_v>0\}$.
We partition $S$ into two subsets $S_1, S_2 \subseteq S$, where $S_1=\{v \in S| x_v\ge \frac{1}{\sqrt{n}}\}$ and $S_2=\{v \in S| x_v < \frac{1}{\sqrt{n}}\}$, i.e., $S_2=S\setminus S_1$.

We know that $\sum_{v \in S}x_v p_v\geq Q$, which implies that $\sum_{v \in S_1}x_v p_v \ge \frac{Q}{2}$ or  $\sum_{v \in S_2}x_v p_v \ge \frac{Q}{2}$, we then distinguish between two cases.

\begin{enumerate}
    \item $\sum_{v \in S_1}x_v p_v \ge \frac{Q}{2}$. We consider the set of vertices in $S_1 \cup \{r\}$ as the set of terminals in an instance of the fractional node-weighted Steiner tree problem.
    
    In the node-weighted Steiner tree problem, we are given an undirected graph $G''=(V'', E'')$ with nonnegative costs assigned to its nodes and edges and a set of terminals $X \subseteq V''$, and the goal is to find a connected subgraph of $G''$ including $X$ such that its cost (the total cost on its nodes and edges) is minimum. We term this problem \textbf{NWST}. Klein and Ravi~\cite{klein1995nearly} gave an $O(\log n)$-approximation algorithm for \textbf{NWST}. In the fractional relaxation of \textbf{NWST} we need to assign capacities to nodes and edges in such a way that the capacity of any \emph{cut} (i.e. a subset of nodes and edges) that separates any terminal from $r$ is at least $1$ and the sum of edge and vertex capacities multiplied by their cost is minimized.
    We term this problem \textbf{FNWST}. Let $SOL_{FNWST}$ be an optimal solution for \textbf{FNWST}. Guha, Moss, Naor, and Schieber~\cite{guha1999efficient} showed that we can generalize Klein and Ravi's technique~\cite{klein1995nearly} and provide a solution $T_{NWST}$ to \textbf{NWST} (i.e. a tree spanning all terminals) obtained from $SOL_{FNWST}$ such that its cost is $O(\log n)$ times the cost of $SOL_{FNWST}$, i.e., $c(T_{NWST})=O(\log n)c(SOL_{FNWST})$.
    
    Starting from $G$ and $x$, we define an instance $I_{FNWST}$ of \textbf{FNWST} in which the set of terminals is $S_1 \cup \{r\}$, the node costs are defined by $c$ and the edge costs are zero. Let $OPT_{FNWST}$ be the optimum for $I_{FNWST}$. The feasible solution $x$ for~\eqref{lpUndirected-SubmodularFlow} induces a solution to  $I_{FNWST}$ with cost at most $B \sqrt{n}$. This is because the cost of $x$ is at most $B$ and $x_v\geq 1/\sqrt{n}$ for each vertex $v\in S_1$. So, if we multiply $x_v$ by a factor at most $\sqrt{n}$, for each vertex $v\in S_1$, we obtain a solution to $I_{FNWST}$ which is feasible, since the capacity of any cut that separates any terminal from $r$ is at least $1$, and costs at most $B \sqrt{n}$. Therefore, $OPT_{FNWST} \leq B \sqrt{n}$.
    Now, by using the $O(\log n)$-approximation algorithm of~\cite{guha1999efficient}, we compute a tree $T$ with cost $c(T)=O(B \sqrt{n}\log n)$ that spans all the vertices in $S_1$ and hence, by monotonicity of $p$ and by constraint~\eqref{constraint:submodular}, has prize $p(T)\ge p(S_1) \ge \sum_{v \in S_1}x_v p_v \ge\frac{Q}{2}$.
    
    \item $\sum_{v \in S_2}x_v p_v \ge \frac{Q}{2}$. In this case, we have 
    
    \begin{align}\label{eqCase1LowerBoundPrizeUndirected}
        \sum_{v \in S_2}p_v\ge \sqrt{n}\times \sum_{v \in S_2}x_v p_v \ge \frac{\sqrt{n} Q}{2},
    \end{align}
    where the first inequality holds since $x_v < \frac{1}{\sqrt{n}}$, for any $v \in S_2$, and the second inequality holds by the case assumption. 
    
    Inequality~\eqref{eqCase1LowerBoundPrizeUndirected} implies that the vertex $v \in S_2$ with the highest prize has $p(\{v\}) \ge \frac{Q}{2 \sqrt{n}}$, since $|S_2| \le n$. Then we find a shortest path $P$ from $r$ to $v$ and return the tree $T=P$ as the solution in this case. Note that since $p$ is monotone non-decreasing, then $p(T) \ge \frac{Q}{2 \sqrt{n}}$, moreover, $c(T)\le F$ as $dist(r,u)\le F$.\qedhere
\end{enumerate}
\end{proof}

\paragraph{Approximation algorithm for \UQuotaName.}
By similar argument to that used for the approximation algorithm for \DSteinerT and \DQuotaAdditiveName, we can assume that each vertex $v$ has a distance no more than $(1+\epsilon)c(T^*_Q)$ from $r$ in $G$, that is $F\le (1+\epsilon)c(T^*_Q)$, for any $\epsilon >0$.
Therefore, using an optimal solution $x$ to \UQuotaLP as input to the algorithm in Theorem~\ref{th:urst:tree-ratio}, we obtain the following theorem.
\thUQuotaTree*

\paragraph{Approximation algorithm for \UBudgetName.}
Let $x$ be an optimal solution for~\UBudgetLP and $T$ be the tree computed by the algorithm in Theorem~\ref{th:urst:tree-ratio} when $x$ is used as input. Tree $T$ can violate the budget constraint if used as a solution for $I_B$. However, the budget violation and  the prize-to-cost ratio of $T$ are bounded, i.e. $c(T)=O(B \sqrt{n}\log n)$ and $\gamma=\frac{p(T)}{c(T)}=\Omega(\frac{p(T^*_B)}{B\sqrt{n}\log{n}})$. In the following, we show how to trim $T$ to have a budget violation of $1+\eps$ and an approximation ratio of $O(\frac{\sqrt{n}\log{n}}{\epsilon^3})$, for any $\epsilon\in (0,1]$. We again assume w.l.o.g. that $F\leq B$, i.e. graph $G$ is $B$-proper for $r$, and we will use two results from~\cite{d2022budgeted}.

Let $D$ be a directed graph where each node is associated with a cost, and the prize is defined by a monotone submodular function on the subsets of nodes. The following lemma introduced a trimming process that takes as input an out-tree of $D$, and returns another out-tree of $D$ which has a smaller cost but preserves the same prize-to-cost ratio (up to a bounded multiplicative factor). 

\begin{lemma}[Lemma 4.2 in~\cite{d2022budgeted}]\label{trimmingProcess}
Let $D= (V, A)$ be a $B$-proper graph for a node $r$. Let $\Tilde T$ be an out-tree of $D$ rooted at $r$ with  prize-to-cost ratio $\gamma=\frac{p(\Tilde T)}{c(\Tilde T)}$, where $p$ is a monotone submodular function. Suppose that $\epsilon B/2\le c(\Tilde T) \le hB$, where $h \in (1, n]$ and $\epsilon \in (0, 1]$. One can find an out-subtree $\hat{T}$ rooted at $r$ with the prize-to-cost ratio at least $\frac{\epsilon^2 \gamma}{32h}$ such that $\epsilon B/2 \le c(\hat{T}) \le (1+\epsilon)B$.
\end{lemma}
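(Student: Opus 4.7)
The plan is to split on whether $c(\tilde T)$ already fits the target upper bound $(1+\epsilon)B$. If $c(\tilde T)\le (1+\epsilon)B$, simply return $\hat T = \tilde T$: the cost lower bound is the hypothesis $c(\tilde T)\ge \epsilon B/2$, and the prize-to-cost ratio is exactly $\gamma$, which trivially exceeds $\epsilon^2\gamma/(32h)$. So from now on I focus on the harder case $c(\tilde T)>(1+\epsilon)B$.

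The core idea in this case is: (i) decompose $\tilde T$ into small connected sub-out-trees, (ii) use subadditivity of $p$ to exhibit one piece with large prize, and (iii) splice this piece back to $r$ via a cheap shortest path in $D$, which exists because $D$ is $B$-proper. By a standard DFS-based tree decomposition, I would partition $V(\tilde T)$ into connected sub-out-trees $T_1,\dots,T_k$ such that each $T_i$ has cost at most roughly $\epsilon B/2$ while maintaining an average piece cost of order $\epsilon B$, so that $k=O(h/\epsilon)$; a vertex $v$ with $c_v>\epsilon B/2$ (which still satisfies $c_v\le B$ by $B$-properness) is emitted as a singleton piece.

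Since $p$ is monotone submodular it is subadditive, so $\sum_{i=1}^k p(T_i)\ge p(\tilde T)=\gamma\, c(\tilde T)$, and the piece $T^*$ with the largest prize satisfies $p(T^*)\ge p(\tilde T)/k=\Omega(\gamma\, c(\tilde T)\,\epsilon/h)$. Let $P$ be a shortest path in $D$ from $r$ to the root of $T^*$; by $B$-properness $c(P)\le B$. Define $\hat T$ as the out-tree obtained by attaching $T^*$ to $r$ via $P$, so $c(\hat T)\le c(P)+c(T^*)\le B+\epsilon B/2\le (1+\epsilon)B$. If $c(\hat T)<\epsilon B/2$, augment $\hat T$ by walking along edges of $\tilde T$ starting from $\hat T\cap\tilde T$ and adding vertices one at a time until $c(\hat T)\in[\epsilon B/2,(1+\epsilon)B]$, which is possible since $c(\tilde T)>(1+\epsilon)B$; monotonicity of $p$ ensures $p(\hat T)$ only grows. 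Using $c(\tilde T)\ge \epsilon B/2$ we obtain
\[
\frac{p(\hat T)}{c(\hat T)}\ \ge\ \frac{p(T^*)}{(1+\epsilon)B}\ =\ \Omega\!\left(\frac{\gamma\, c(\tilde T)\,\epsilon}{hB}\right)\ =\ \Omega\!\left(\frac{\gamma\,\epsilon^2}{h}\right),
\]
and tracking constants carefully through the partitioning and reconnection steps should yield the promised $\epsilon^2\gamma/(32h)$ bound.

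The main obstacle is the partition step: I need pieces that are simultaneously small enough individually (so the upper cost bound on $\hat T$ holds) and few enough in number (so the averaging loss is only $O(h/\epsilon)$), while coping with heavy single vertices, with the piece containing $r$, and with the augmentation step that enforces the lower bound $c(\hat T)\ge \epsilon B/2$. Pinning down these corner cases is what fixes the universal constant $1/32$ in the final bound.
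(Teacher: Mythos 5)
The paper does not prove this lemma itself: it is quoted verbatim from~\cite{d2022budgeted} and used as a black box, so there is no in-paper proof to compare against word for word. Nevertheless, the paper's proof of Lemma~\ref{lmNewtrimmingProcess} (together with the decomposition Lemma~\ref{lmClaimKuoExtension}, also imported from~\cite{d2022budgeted}) reveals the intended machinery, and your plan --- decompose $\tilde T$ into connected sub-out-trees of bounded cost, use subadditivity of $p$ to extract a high-prize piece, reconnect it to $r$ via a shortest path (using $B$-properness), and pad up to cost $\epsilon B/2$ if needed --- is exactly the intended recipe. Using the stated Lemma~\ref{lmClaimKuoExtension} with $m=\epsilon B/2$ is cleaner than your ad hoc ``DFS-based decomposition,'' since it directly gives $N\le 5\lfloor 2c(\tilde T)/(\epsilon B)\rfloor$ pieces with $c(V^i)\le \epsilon B/2 + c(r_i)$; it also handles the heavy-vertex and branching-vertex corner cases you are worried about, including the crucial point that a piece's root may itself cost up to $B$ (which your ``singleton piece'' fix does not by itself cover at branching points).

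There are two substantive issues in your write-up. First, your final constant-tracking mixes two incompatible worst cases: you bound the number of pieces by $k=O(h/\epsilon)$ (which is the worst case when $c(\tilde T)\approx hB$) and simultaneously lower-bound $c(\tilde T)\ge \epsilon B/2$ (the opposite extreme). Plugging in the explicit Lemma~\ref{lmClaimKuoExtension} constants this yields roughly $\epsilon^2\gamma/(40h)$, which is strictly \emph{weaker} than the claimed $\epsilon^2\gamma/(32h)$ and so does not actually prove the statement as written. The fix is to use the piece count $N\le 10\,c(\tilde T)/(\epsilon B)$ directly: then $p(T^*)\ge p(\tilde T)/N\ge \gamma\epsilon B/10$, and since $c(\hat T)\le(1+\epsilon/2)B$ the ratio is at least $\gamma\epsilon/(10(1+\epsilon/2))\ge\gamma\epsilon/15$, which is $h$-independent and dominates $\epsilon^2\gamma/(32h)$ for all $h\ge 1$, $\epsilon\in(0,1]$. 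In other words, the $c(\tilde T)$ factors cancel and the $h$ in the statement is just slack; you should not expect to recover the exact constant $1/32$ by ``tracking carefully,'' you should simply observe you get something stronger. Second, you gloss over the fact that $T^*\cup P$ is generally not an out-tree (the path $P$ can re-enter $V(T^*)$, producing vertices with in-degree $2$), and that the augmentation step must add vertices only as children of current nodes to preserve out-tree structure. The paper handles the first of these explicitly in the proof of Lemma~\ref{lmNewtrimmingProcess} by deleting the extra incoming edges; you should do the same rather than wave at it.
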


This trimming process returns a tree with a good prize as long as the factor $h$ by which the budget is violated in $\Tilde T$, is small. However, it returns a tree $\hat T$ with a low prize when $h$ is large. 
In the following, we introduce an improvement of the above trimming process, which might be used for different problems.

\begin{lemma}\label{lmNewtrimmingProcess}
Let $D= (V, A)$ be a $B$-proper graph for a node $r$. Let $T$ be an out-tree of $D$ rooted at $r$ with  prize-to-cost ratio $\gamma=\frac{p(T)}{c(T)}$, where $p$ is a monotone submodular function. Suppose that $\epsilon B/2\le c(T) \le hB$, where $h \in (1, n]$ and $\epsilon \in (0, 1]$. One can find in polynomial time an out-subtree $\hat{T}$ rooted at $r$ such that one of the two following conditions holds: the prize-to-cost ratio of $\hat{T}$ is at least $\frac{\epsilon^2 \gamma}{640}$ and $\epsilon B/2 \le c(\hat{T}) \le (1+\epsilon)B$; $p(\hat{T}) \ge p(T)/5h$ and $c(\hat{T}) \le B$.
\end{lemma}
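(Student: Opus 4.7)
My plan is to distinguish two regimes based on the budget violation factor $h$, combining the earlier trimming process (Lemma~\ref{trimmingProcess}) with a new construction for the case of large $h$. First, if $c(T) \le (1+\epsilon)B$, I would simply take $\hat T := T$: its cost lies in $[\epsilon B/2, (1+\epsilon)B]$ by hypothesis and its prize-to-cost ratio is $\gamma \ge \epsilon^2 \gamma/640$, so condition (i) holds. Hence I may assume $c(T) > (1+\epsilon)B$ and in particular $h > 1$.

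In the first regime, when $h \le 20$, I would invoke Lemma~\ref{trimmingProcess} directly on $T$. It returns an out-subtree $\hat T$ rooted at $r$ with $\epsilon B/2 \le c(\hat T) \le (1+\epsilon)B$ and prize-to-cost ratio at least $\epsilon^2 \gamma/(32 h)$; since $h \le 20$ this is at least $\epsilon^2 \gamma/640$, establishing condition (i).

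In the second regime, when $h > 20$, Lemma~\ref{trimmingProcess} no longer gives a good enough ratio on its own, and I would instead target condition (ii). The high-level idea is to build a family of $k = O(h)$ candidate out-subtrees $\hat T_1,\dots,\hat T_k$ of $D$, each rooted at $r$ with $c(\hat T_i) \le B$, whose vertex sets cover $V(T)$. Once such a family is produced, monotone submodularity of $p$ implies subadditivity, so $\sum_{i} p(V(\hat T_i)) \ge p\bigl(\bigcup_{i} V(\hat T_i)\bigr) \ge p(T)$, and by averaging some $\hat T_{i^\star}$ satisfies $p(\hat T_{i^\star}) \ge p(T)/k \ge p(T)/(5h)$, which is condition (ii). The algorithm would then output whichever of the two constructions meets the claimed bounds.

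The main obstacle is designing the family of candidates. My plan is to decompose the out-tree $T$ into at most $O(h)$ connected pieces of cost bounded by a suitable fraction of $B$ via a DFS-based greedy partition (using $B$-properness, which ensures every single node has cost at most $B$, so greedy accumulation on the DFS order does not overshoot badly), and then attach each piece to $r$ through a shortest path in $D$. The delicate point is the cost of the connector: a naive shortest $r$-to-piece path already costs up to $B$, so stacking a piece on top breaks the budget. To control this, I would either cap the pieces at cost $B/2$ and exploit shared prefix segments within $T$, or fall back on single-node candidates, i.e.\ shortest $r$-to-$v$ paths in $D$, which have cost at most $B$ and, by subadditivity, achieve singleton prize at least $p(T)/n$; this alone discharges condition (ii) whenever $h \ge n/5$, and the intermediate range $h \in (20, n/5)$ is covered by the multi-node pieces. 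Tightening the constants in this case analysis is where the $5h$ factor arises. All of the steps---the DFS decomposition, the shortest-path computations, the invocation of Lemma~\ref{trimmingProcess}, and picking the best of $O(h)$ candidates---run in polynomial time.
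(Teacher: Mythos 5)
Your first two cases are fine: if $c(T)\le(1+\epsilon)B$, output $T$ directly; and if $c(T)\le 20B$, a single invocation of Lemma~\ref{trimmingProcess} with parameter $h=20$ already gives ratio at least $\epsilon^2\gamma/(32\cdot 20)=\epsilon^2\gamma/640$. The gap is entirely in the remaining case $c(T)>20B$, and it is the crux of the lemma.

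There you insist on a family of $O(h)$ candidate out-subtrees rooted at $r$, \emph{each of cost at most $B$}, whose vertex sets cover $V(T)$. You correctly identify the obstruction: decomposing $T$ via Lemma~\ref{lmClaimKuoExtension} yields pieces of cost up to $B+c(r^i)$, and the shortest $r$-to-$r^i$ connector can itself cost up to $B$, so each candidate ends up costing about $2B$, not $B$. Neither of your two workarounds closes this: capping pieces at $B/2$ still leaves a connector of cost up to $B$, so candidates cost roughly $3B/2$; and the single-node fallback only gives a prize guarantee of $p(T)/n$, which matches $p(T)/(5h)$ only when $h\ge n/5$. You explicitly concede that the range $h\in(20,n/5)$ is ``covered by the multi-node pieces'' without saying how — and with the constructions you describe it isn't. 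In short, you have reduced the lemma to the very subproblem it was designed to solve, and the reduction doesn't bottom out.

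The paper's proof avoids the difficulty by not demanding cost $\le B$ at the intermediate stage. It decomposes $T$ (once, using Lemma~\ref{lmClaimKuoExtension} with $m=B$) into at most $5\lfloor h'\rfloor$ pieces with $h'=c(T)/B\le h$, picks the single best piece $T^z$ (so $p(T^z)\ge p(T)/(5h')$ by subadditivity), and connects $r$ to its root $r^z$ by a shortest path. Since $B$-properness gives $dist(r,r^z)\le B$ and the piece minus $r^z$ costs at most $B$, the resulting out-tree $T'$ has $c(T')\le 2B$ and prize-to-cost ratio at least $\gamma/10$. Then it branches on $c(T')$: if $c(T')\le\epsilon B/2\le B$, output $T'$ and condition (ii) holds; otherwise apply Lemma~\ref{trimmingProcess} \emph{a second time}, but now to $T'$ with $h=2$, which trims to $\epsilon B/2\le c(\hat T)\le(1+\epsilon)B$ while degrading the ratio by only $\epsilon^2/64$, giving $\epsilon^2\gamma/640$ overall. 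This second trimming with a small overrun parameter is the idea your proposal is missing: it lets you tolerate an intermediate cost of $2B$ rather than fighting to get under $B$ before any trimming.
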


Before proving Lemma~\ref{lmNewtrimmingProcess}, we need the following lemma.

\begin{lemma}[Lemma 4.3 in~\cite{d2022budgeted}] \label{lmClaimKuoExtension}
For any out-tree $\Tilde T=(V, A)$ rooted at $r$ with cost $c(\Tilde T)$ and any $m\leq c(\Tilde T)$, there exist $N \le 5\lfloor \frac{c(\Tilde T)}{m}\rfloor$ out-subtrees $T^i=(V^i, A^i)$ of $\Tilde T$, for $i \in [N]$, where $V^i \subseteq V$, $A^i = (V^i \times V^i) \cap A$, $c(V^i) \le m+c(r_i)$, $r_i$ is the root of $T^i$, and $\bigcup_{i=1}^{N} V^i=V$.
\end{lemma}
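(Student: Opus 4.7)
The plan is to construct the $N$ out-subtrees by a bottom-up greedy procedure on $\Tilde T$. Process the vertices of $\Tilde T$ in post-order (leaves first, root $r$ last), maintaining for each vertex $v$ a \emph{pending} out-subtree $P(v)\subseteq T_v$ rooted at $v$ satisfying the invariant $c(P(v))\le m+c(v)$. For a leaf, set $P(v)=\{v\}$. For an internal vertex $v$ with children $u_1,\ldots,u_k$ whose pending subtrees $P(u_i)$ have been computed inductively, first emit every $P(u_i)$ with $c(P(u_i))>m/2$ as a \emph{standalone} output cluster rooted at $u_i$; by the invariant, its cost is $\le m+c(u_i)$. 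Then initialize $P(v)=\{v\}$ and greedily absorb the remaining ``light'' children (those with $c(P(u_i))\le m/2$) into $P(v)$; whenever adding the next light $P(u_i)$ would push $c(P(v))$ above $m+c(v)$, emit the current $P(v)$ as a \emph{packed} output cluster rooted at $v$ and restart with $P(v)=\{v\}\cup P(u_i)$. After all children of $v$ have been processed, $P(v)$ is passed up to $v$'s parent; at the end of the post-order, the remaining $P(r)$ is emitted as the final cluster.

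Correctness is immediate: by construction every emitted cluster is an out-subtree of $\Tilde T$ rooted at some $r_j$ with $c(V^j)\le m+c(r_j)$, and every vertex $w\in V$ starts in $P(w)=\{w\}$ when $w$ is processed and is then either emitted at $w$ or absorbed into a pending subtree of an ancestor, so recursively $w$ ends up in some emitted cluster or in the final $P(r)$.

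To bound $N\le 5\lfloor c(\Tilde T)/m\rfloor$, I will show that every emitted cluster except possibly the final $P(r)$ has cost strictly greater than $m/2$, and moreover that the ``non-root'' portions of these clusters are pairwise disjoint subsets of $V(\Tilde T)$. Standalone emissions have cost $>m/2$ by the heavy--light definition. A packed cluster emitted at $v$ was closed because the next light $P(u_i)$ with $c(P(u_i))\le m/2$ would have pushed $c(P(v))$ above $m+c(v)$, so its non-root cost satisfies $c(P(v))-c(v)>m-c(P(u_i))\ge m/2$. A vertex-consumption argument based on the post-order processing shows that distinct standalone clusters have disjoint vertex sets, and that the non-root parts of distinct packed clusters (at the same or different vertices) are pairwise disjoint and disjoint from the standalone vertex sets: once a vertex has been absorbed into a pending subtree that is subsequently emitted, it no longer appears in any later pending subtree. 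Hence, letting $M$ be the number of non-final emitted clusters, the disjoint sum of their non-root costs is at most $c(\Tilde T)$, giving $M\cdot m/2<c(\Tilde T)$, i.e., $M\le\lfloor 2c(\Tilde T)/m\rfloor\le 2\lfloor c(\Tilde T)/m\rfloor+1$; adding the final cluster yields $N\le 2\lfloor c(\Tilde T)/m\rfloor+2\le 5\lfloor c(\Tilde T)/m\rfloor$, using $\lfloor c(\Tilde T)/m\rfloor\ge 1$ (which holds because $m\le c(\Tilde T)$).

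The main obstacle will be carefully justifying the vertex-consumption argument across clusters rooted at different ancestors: showing that a vertex $w$ absorbed into a packed cluster emitted at some $v_1$ cannot reappear as a non-root of a cluster rooted at an ancestor $v_2$ requires carefully tracking how pending subtrees move up the tree during the post-order, together with the observation that vertices in emitted clusters are effectively removed from all subsequent pendings. The root-vertex overlap---where a single vertex $v$ may serve as the root of multiple packed clusters at $v$---is sidestepped by restricting the disjointness analysis to the non-root portions of clusters, which is precisely what allows the additive constant in the final bound to fit within the stated factor of $5$.
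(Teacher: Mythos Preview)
The paper does not actually prove this lemma; it is quoted verbatim as Lemma~4.3 of~\cite{d2022budgeted} and the reader is referred there for the proof. So there is no in-paper argument to compare against. That said, your bottom-up greedy packing is exactly the kind of argument one expects for such a decomposition (and is in the spirit of the Kuo--Lin--Tsai style tree partitioning that the lemma's name alludes to), and your sketch is essentially correct.

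One wording issue to tighten: you assert that ``every emitted cluster except possibly the final $P(r)$ has cost strictly greater than $m/2$'' and then sum the \emph{non-root} costs. For packed clusters this is fine, since you show $c(P(v))-c(v)>m/2$. For standalone clusters, however, you only establish that the \emph{total} cost exceeds $m/2$; the non-root cost could be arbitrarily small (e.g., a single heavy leaf). Your later sentence gets the disjointness statement right---standalone clusters are counted in full, packed clusters only on their non-root part, and these pieces are pairwise disjoint---so the charging goes through. But the summary sentence ``the disjoint sum of their non-root costs'' should read ``the disjoint sum of the full standalone costs and the packed non-root costs.'' The crucial observation that makes this work is that the root $u_i$ of a standalone cluster is never passed up (its pending subtree was just emitted), so $u_i$ is not a non-root of any other cluster and may safely be charged in full.

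With that clarification, the counting $M\cdot m/2 < c(\Tilde T)$ and hence $N\le 2\lfloor c(\Tilde T)/m\rfloor+2\le 5\lfloor c(\Tilde T)/m\rfloor$ is valid.
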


\begin{proof}[Proof of Lemma~\ref{lmNewtrimmingProcess}]
Let $c(T) = h'B$, where $h' \leq h$. We first decompose $T$ into $5\lfloor h'\rfloor$ out-subtrees by applying Lemma~\ref{lmClaimKuoExtension} with $m=B$; each computed out-subtree $T^i \subseteq T$, $i\in [5\lfloor h'\rfloor ]$, costs at most $B+c(r^i)$, where $r^i$ is the root of $T^i$. Note that the proof of the lemma provides a polynomial time algorithm to compute such a decomposition (see~\cite{d2022budgeted}).

Then, we select the out-subtree, say $T^z$, maximizing the prize. By the submodularity of $p$, we have $p(T^z)\ge \frac{p(T)}{5h'}$ . If $r^z=r$, then we define $T'=T^{z}$. Otherwise, we compute a shortest path $P$ from $r$ to $r^z$ and define $T'$ as the union of $T^{z}$ and $P$. Since the obtained graph might not be an out-tree, we remove the possible edges incoming the nodes in $V(T^{z})\cap V(P)$ that belong only to $T^{z}$. We have $c(T') \le 2B$ as $dist(r, r^z) \le B$ and $c(T^{z}\setminus \{r^z\}) \le B$. By monotonicity of $p$, we have also that $p(T')\ge p(T^z) \ge \frac{p(T)}{5h'}$, and hence $T'$ has a prize-to-cost ratio
\[
\gamma' = \frac{p(T')}{c(T')}\geq\frac{p(T)}{10h'B}= \frac{p(T)}{10c(T)}=\frac{\gamma}{10}.
\]

If $c(T')\leq \epsilon B/2 \leq B$, then we output $\hat{T} = T'$, which satisfies the second condition of the lemma statement.
Otherwise, we can use Lemma~\ref{trimmingProcess} with $\Tilde T=T'$ and $h =2$ and compute another tree $\hat T$ with cost  $\epsilon B/2 \leq c(\hat T)\le (1+\epsilon)B$  and prize-to-cost ratio equal to
\[
\frac{p(\hat{T})}{c(\hat{T})}\geq  \frac{\epsilon^2 \gamma'}{32h}\ge \frac{\epsilon^2 \gamma'}{64}
\geq \frac{\epsilon^2 \gamma}{640}.
\]
The proof is complete.
\end{proof}

Now we show the main theorem of this section.
\MainUBudgetTheorem*
\begin{proof}

We first compute an optimal solution $x$ for \UBudgetLP and then we apply the algorithm in Theorem~\ref{th:urst:tree-ratio} and compute a tree $T$. 
Since $x$ is a feasible solution to~\eqref{lpUndirected-SubmodularFlow} when $Q=OPT_B \geq p(T^*_B)$, then one of the two following conditions holds: (i) $c(T) = O(B\sqrt{n}\ln{n})$ and $p(T)\geq \frac{p(T^*_B)}{2}$ or (ii) $c(T)\leq B$ and $p(T)\geq \frac{p(T^*_B)}{2\sqrt{n}}$. In case (ii) the theorem holds.

In case (i), we have that $c(T) = hB$, for some $h = O(\sqrt{n}\ln{n})$, therefore we apply Lemma~\ref{lmNewtrimmingProcess} and obtain a new tree $\hat{T}$ such that one of the two following conditions holds:  $\frac{p(\hat{T})}{c(\hat{T})} \geq \frac{\epsilon^2 p(T)}{640 c(T)}$ and $\epsilon B/2 \le c(\hat{T}) \le (1+\epsilon)B$; $p(\hat{T}) \ge p(T)/5h$ and $c(\hat{T}) \le B$. In the first case, we have 
\[
p(\hat{T})\geq \frac{\epsilon^2p(T)}{640h B} c(\hat{T})\geq
\frac{\epsilon^3p(T)}{1280h} =
\Omega\left(\frac{\epsilon^3p(T)}{\sqrt{n}\ln{n}}\right).
\]
In the second case, we have $h=O(\sqrt{n}\ln{n})$ that implies $p(\hat{T})=\Omega\left(\frac{p(T)}{\sqrt{n}\ln{n}}\right)$. Observing that $p(T)\geq \frac{p(T^*_B)}{2}$ concludes the proof.
\end{proof}

\section{Conclusion}

We obtained very simple polynomial time approximation algorithms for some budgeted and quota variants of node-weighted Steiner tree problems on two scenarios: (i) directed graphs with additive prizes, and (ii) undirected graphs with submodular prizes. The key insights behind our algorithms for the first scenario were to carefully select a subset of vertices as terminals in a fractional solution returned by the standard flow-based LPs and use some flow properties to find a small hitting set through which all the chosen terminals can be reached from the root vertex. The key idea of our algorithms for the second scenario were to use the submodular flow problem and introduce new LPs for our problems.

To the best of our knowledge, our techniques yield the first polynomial time (bicriteria) approximation algorithms for these problems (except \DSteinerT) in terms of the number of vertices $n$. Furthermore, we believe that our introduced LPs can be utilized for some other Steiner problems in which, for example, other constraints can be added to~\eqref{lpDBudgetQuotaAdditive} and~\eqref{lpUndirected-SubmodularFlow}. 

A natural open question asks to improve the approximation guarantees or prove that the current
guarantees are the best possible using the flow-based LPs. By the result of Bateni, Hajiaghay and Liaghat~\cite{bateni2018improved}, we know that the integrality gap of the flow-based LP for \DBudgetAdditiveName is infinite. This implies that, using this LP, we should work on improving the approximation factors for the budgeted problems while violating the budget constraint. Also, Li and Laekhanukit~\cite{li2022polynomial} showed that the integrality gap of the flow-based LP for \DSteinerT is polynomial in the number of vertices $n$. This means that using this LP, one needs to work on the possibility of achieving an approximation algorithm with the factor $O(n^{\epsilon})$ for \DSteinerT, where $0 <\epsilon < 1/2$. Another interesting future work would be the possibility of extending our techniques for second scenario into directed graphs.
\bibliographystyle{alpha}
\bibliography{biblio.bib}

\newcommand{\etalchar}[1]{$^{#1}$}
\begin{thebibliography}{HMM{\etalchar{+}}19}

\bibitem[ABHK11]{archer2011improved}
Aaron Archer, MohammadHossein Bateni, MohammadTaghi Hajiaghayi, and Howard~J.
  Karloff.
\newblock Improved approximation algorithms for prize-collecting steiner tree
  and {TSP}.
\newblock {\em {SIAM} J. Comput.}, 40(2):309--332, 2011.

\bibitem[BBM{\etalchar{+}}11]{berman2011improved}
Piotr Berman, Arnab Bhattacharyya, Konstantin Makarychev, Sofya Raskhodnikova,
  and Grigory Yaroslavtsev.
\newblock Improved approximation for the directed spanner problem.
\newblock In Luca Aceto, Monika Henzinger, and Jir{\'{\i}} Sgall, editors, {\em
  Automata, Languages and Programming - 38th International Colloquium, {ICALP}
  2011, Zurich, Switzerland, July 4-8, 2011, Proceedings, Part {I}}, volume
  6755 of {\em Lecture Notes in Computer Science}, pages 1--12. Springer, 2011.

\bibitem[BHL18]{bateni2018improved}
Mohammad~Hossein Bateni, Mohammad~Taghi Hajiaghayi, and Vahid Liaghat.
\newblock Improved approximation algorithms for (budgeted) node-weighted
  steiner problems.
\newblock {\em {SIAM} J. Comput.}, 47(4):1275--1293, 2018.

\bibitem[CCC{\etalchar{+}}99]{charikar1999approximation}
Moses Charikar, Chandra Chekuri, To{-}Yat Cheung, Zuo Dai, Ashish Goel, Sudipto
  Guha, and Ming Li.
\newblock Approximation algorithms for directed steiner problems.
\newblock {\em J. Algorithms}, 33(1):73--91, 1999.

\bibitem[CCZ{\etalchar{+}}20]{chen2020optimal}
Xuefeng Chen, Xin Cao, Yifeng Zeng, Yixiang Fang, and Bin Yao.
\newblock Optimal region search with submodular maximization.
\newblock In Christian Bessiere, editor, {\em Proceedings of the Twenty-Ninth
  International Joint Conference on Artificial Intelligence, {IJCAI}}, pages
  1216--1222, 2020.

\bibitem[CDKL20]{chlamtavc2020approximating}
Eden Chlamt{\'{a}}c, Michael Dinitz, Guy Kortsarz, and Bundit Laekhanukit.
\newblock Approximating spanners and directed steiner forest: Upper and lower
  bounds.
\newblock {\em {ACM} Trans. Algorithms}, 16(3):33:1--33:31, 2020.

\bibitem[CEGS11]{chekuri2011set}
Chandra Chekuri, Guy Even, Anupam Gupta, and Danny Segev.
\newblock Set connectivity problems in undirected graphs and the directed
  steiner network problem.
\newblock {\em {ACM} Trans. Algorithms}, 7(2):18:1--18:17, 2011.

\bibitem[Cha07]{Chan2007}
Timothy~M. Chan.
\newblock More algorithms for all-pairs shortest paths in weighted graphs.
\newblock In {\em Proceedings of the Thirty-Ninth Annual ACM Symposium on
  Theory of Computing}, STOC '07, pages 590--598. Association for Computing
  Machinery, 2007.

\bibitem[CLDN04]{cheng2004steiner}
Xiuzhen Cheng, Yingshu Li, Ding-Zhu Du, and Hung~Q Ngo.
\newblock Steiner trees in industry.
\newblock In {\em Handbook of combinatorial optimization}, pages 193--216.
  Springer, 2004.

\bibitem[DDG22]{d2022budgeted}
Gianlorenzo D'Angelo, Esmaeil Delfaraz, and Hugo Gilbert.
\newblock Budgeted out-tree maximization with submodular prizes.
\newblock In {\em Proceedings of the 33rd International Symposium on Algorithms
  and Computation (ISAAC 2022)}, volume 248 of {\em LIPIcs}. Schloss Dagstuhl -
  Leibniz-Zentrum f{\"{u}}r Informatik, 2022.

\bibitem[DSN20]{danilchenko2020covering}
Kiril Danilchenko, Michael Segal, and Zeev Nutov.
\newblock Covering users by a connected swarm efficiently.
\newblock In {\em Algorithms for Sensor Systems - 16th International Symposium
  on Algorithms and Experiments for Wireless Sensor Networks, {ALGOSENSORS}
  2020, , Revised Selected Papers}, volume 12503 of {\em Lecture Notes in
  Computer Science}, pages 32--44. Springer, 2020.

\bibitem[EG77]{edmonds1977min}
Jack Edmonds and Rick Giles.
\newblock A min-max relation for submodular functions on graphs.
\newblock In {\em Annals of Discrete Mathematics}, volume~1, pages 185--204.
  Elsevier, 1977.

\bibitem[FI00]{fujishige2000algorithms}
Satoru Fujishige and Satoru Iwata.
\newblock Algorithms for submodular flows.
\newblock {\em IEICE TRANSACTIONS on Information and Systems}, 83(3):322--329,
  2000.

\bibitem[FKN12]{feldman2012improved}
Moran Feldman, Guy Kortsarz, and Zeev Nutov.
\newblock Improved approximation algorithms for directed steiner forest.
\newblock {\em J. Comput. Syst. Sci.}, 78(1):279--292, 2012.

\bibitem[Gar96]{garg3}
N~Garg.
\newblock A 3 factor approximation algorithm for the minimum tree spanning k
  vertices.
\newblock In {\em Proc of 37th Symp. on Foundations of Computer Science}, pages
  302--309, 1996.

\bibitem[Gar05]{garg2005saving}
Naveen Garg.
\newblock Saving an epsilon: a 2-approximation for the k-mst problem in graphs.
\newblock In Harold~N. Gabow and Ronald Fagin, editors, {\em Proceedings of the
  37th Annual {ACM} Symposium on Theory of Computing}, pages 396--402. {ACM},
  2005.

\bibitem[GLL19]{grandoni2019log2}
Fabrizio Grandoni, Bundit Laekhanukit, and Shi Li.
\newblock \emph{O}(log\({}^{\mbox{2}}\) \emph{k} / log log
  \emph{k})-approximation algorithm for directed steiner tree: a tight
  quasi-polynomial-time algorithm.
\newblock In Moses Charikar and Edith Cohen, editors, {\em Proceedings of the
  51st Annual {ACM} {SIGACT} Symposium on Theory of Computing, {STOC} 2019,
  Phoenix, AZ, USA, June 23-26, 2019}, pages 253--264. {ACM}, 2019.

\bibitem[GLS81]{grotschel1981ellipsoid}
Martin Gr{\"o}tschel, L{\'a}szl{\'o} Lov{\'a}sz, and Alexander Schrijver.
\newblock The ellipsoid method and its consequences in combinatorial
  optimization.
\newblock {\em Combinatorica}, 1(2):169--197, 1981.

\bibitem[GLW{\etalchar{+}}19]{gao2018algorithm}
Xiaofeng Gao, Junwei Lu, Haotian Wang, Fan Wu, and Guihai Chen.
\newblock Algorithm design and analysis for wireless relay network deployment
  problem.
\newblock {\em {IEEE} Trans. Mob. Comput.}, 18(10):2257--2269, 2019.

\bibitem[GMNS99]{guha1999efficient}
Sudipto Guha, Anna Moss, Joseph Naor, and Baruch Schieber.
\newblock Efficient recovery from power outage (extended abstract).
\newblock In {\em Proceedings of the Thirty-First Annual {ACM} Symposium on
  Theory of Computing}, pages 574--582. {ACM}, 1999.

\bibitem[GN20]{ghuge2020quasi}
Rohan Ghuge and Viswanath Nagarajan.
\newblock Quasi-polynomial algorithms for submodular tree orienteering and
  other directed network design problems.
\newblock In Shuchi Chawla, editor, {\em Proceedings of the 2020 {ACM-SIAM}
  Symposium on Discrete Algorithms, {SODA}}, pages 1039--1048. {SIAM}, 2020.

\bibitem[GW95]{goemans1995general}
Michel~X. Goemans and David~P. Williamson.
\newblock A general approximation technique for constrained forest problems.
\newblock {\em {SIAM} J. Comput.}, 24(2):296--317, 1995.

\bibitem[HKKN12]{HajiaghayiKKN12}
Mohammad~Taghi Hajiaghayi, Rohit Khandekar, Guy Kortsarz, and Zeev Nutov.
\newblock Prize-collecting steiner network problems.
\newblock {\em {ACM} Trans. Algorithms}, 9(1):2:1--2:13, 2012.

\bibitem[HLS20]{huang2020approximation}
Lingxiao Huang, Jian Li, and Qicai Shi.
\newblock Approximation algorithms for the connected sensor cover problem.
\newblock {\em Theor. Comput. Sci.}, 809:563--574, 2020.

\bibitem[HMM{\etalchar{+}}19]{huang2019maximizing}
Chien{-}Chung Huang, Mathieu Mari, Claire Mathieu, Joseph S.~B. Mitchell, and
  Nabil~H. Mustafa.
\newblock Maximizing covered area in the euclidean plane with connectivity
  constraint.
\newblock In {\em Approximation, Randomization, and Combinatorial Optimization.
  Algorithms and Techniques, {APPROX/RANDOM} 2019}, volume 145 of {\em LIPIcs},
  pages 32:1--32:21. Schloss Dagstuhl - Leibniz-Zentrum f{\"{u}}r Informatik,
  2019.

\bibitem[HR20]{hochbaum2020approximation}
Dorit~S. Hochbaum and Xu~Rao.
\newblock Approximation algorithms for connected maximum coverage problem for
  the discovery of mutated driver pathways in cancer.
\newblock {\em Inf. Process. Lett.}, 158:105940, 2020.

\bibitem[JMP00]{johnson2000prize}
David~S. Johnson, Maria Minkoff, and Steven Phillips.
\newblock The prize collecting steiner tree problem: theory and practice.
\newblock In David~B. Shmoys, editor, {\em Proceedings of the Eleventh Annual
  {ACM-SIAM} Symposium on Discrete Algorithms}, pages 760--769. {ACM/SIAM},
  2000.

\bibitem[KLT15]{kuo2014maximizing}
Tung{-}Wei Kuo, Kate~Ching{-}Ju Lin, and Ming{-}Jer Tsai.
\newblock Maximizing submodular set function with connectivity constraint:
  Theory and application to networks.
\newblock {\em {IEEE/ACM} Trans. Netw.}, 23(2):533--546, 2015.

\bibitem[KN11]{kortsarz2009approximating}
Guy Kortsarz and Zeev Nutov.
\newblock Approximating some network design problems with node costs.
\newblock {\em Theor. Comput. Sci.}, 412(35):4482--4492, 2011.

\bibitem[KPS20]{khuller2020analyzing}
Samir Khuller, Manish Purohit, and Kanthi~K. Sarpatwar.
\newblock Analyzing the optimal neighborhood: Algorithms for partial and
  budgeted connected dominating set problems.
\newblock {\em {SIAM} J. Discret. Math.}, 34(1):251--270, 2020.

\bibitem[KR95]{klein1995nearly}
Philip~N. Klein and R.~Ravi.
\newblock A nearly best-possible approximation algorithm for node-weighted
  steiner trees.
\newblock {\em J. Algorithms}, 19(1):104--115, 1995.

\bibitem[KSS13]{konemann2013lmp}
Jochen K{\"{o}}nemann, Sina~Sadeghian Sadeghabad, and Laura Sanit{\`{a}}.
\newblock An {LMP} o(log n)-approximation algorithm for node weighted prize
  collecting steiner tree.
\newblock In {\em 54th Annual {IEEE} Symposium on Foundations of Computer
  Science, {FOCS} 2013, 26-29 October, 2013, Berkeley, CA, {USA}}, pages
  568--577. {IEEE} Computer Society, 2013.

\bibitem[LD96]{lee1996algorithms}
Heungsoon~Felix Lee and Daniel~R Dooly.
\newblock Algorithms for the constrained maximum-weight connected graph
  problem.
\newblock {\em Naval Research Logistics (NRL)}, 43(7):985--1008, 1996.

\bibitem[LL22]{li2022polynomial}
Shi Li and Bundit Laekhanukit.
\newblock Polynomial integrality gap of flow {LP} for directed steiner tree.
\newblock In Joseph~(Seffi) Naor and Niv Buchbinder, editors, {\em Proceedings
  of the 2022 {ACM-SIAM} Symposium on Discrete Algorithms, {SODA} 2022}, pages
  3230--3236. {SIAM}, 2022.

\bibitem[LSZ21]{lamprou2020improved}
Ioannis Lamprou, Ioannis Sigalas, and Vassilis Zissimopoulos.
\newblock Improved budgeted connected domination and budgeted edge-vertex
  domination.
\newblock {\em Theor. Comput. Sci.}, 858:1--12, 2021.

\bibitem[MR07]{moss2007approximation}
Anna Moss and Yuval Rabani.
\newblock Approximation algorithms for constrained node weighted steiner tree
  problems.
\newblock {\em {SIAM} J. Comput.}, 37(2):460--481, 2007.

\bibitem[PFF{\etalchar{+}}20]{paul2020budgeted}
Alice Paul, Daniel Freund, Aaron~M. Ferber, David~B. Shmoys, and David~P.
  Williamson.
\newblock Budgeted prize-collecting traveling salesman and minimum spanning
  tree problems.
\newblock {\em Math. Oper. Res.}, 45(2):576--590, 2020.

\bibitem[RZKL16]{ran2016approximation}
Yingli Ran, Zhao Zhang, Ker{-}I Ko, and Jun Liang.
\newblock An approximation algorithm for maximum weight budgeted connected set
  cover.
\newblock {\em J. Comb. Optim.}, 31(4):1505--1517, 2016.

\bibitem[SBMW10]{seufert2010bonsai}
Stephan Seufert, Srikanta~J. Bedathur, Juli{\'{a}}n Mestre, and Gerhard Weikum.
\newblock Bonsai: Growing interesting small trees.
\newblock In Geoffrey~I. Webb, Bing Liu, Chengqi Zhang, Dimitrios Gunopulos,
  and Xindong Wu, editors, {\em {ICDM} 2010, The 10th {IEEE} International
  Conference on Data Mining}, pages 1013--1018. {IEEE} Computer Society, 2010.

\bibitem[VUR11]{vandin2011algorithms}
Fabio Vandin, Eli Upfal, and Benjamin~J. Raphael.
\newblock Algorithms for detecting significantly mutated pathways in cancer.
\newblock {\em J. Comput. Biol.}, 18(3):507--522, 2011.

\bibitem[XSZ{\etalchar{+}}22]{xu2021throughput}
Wenzheng Xu, Yueying Sun, Rui Zou, Weifa Liang, Qiufen Xia, Feng Shan, Tian
  Wang, Xiaohua Jia, and Zheng Li.
\newblock Throughput maximization of {UAV} networks.
\newblock {\em IEEE/ACM Transactions on Networking}, 30(2):881--895, 2022.

\bibitem[Zel97]{zelikovsky1997series}
Alexander Zelikovsky.
\newblock A series of approximation algorithms for the acyclic directed steiner
  tree problem.
\newblock {\em Algorithmica}, 18(1):99--110, 1997.

\bibitem[ZMD{\etalchar{+}}18]{zhou2018relay}
Chenyang Zhou, Anisha Mazumder, Arun Das, Kaustav Basu, Navid
  Matin{-}Moghaddam, Saharnaz Mehrani, and Arunabha Sen.
\newblock Relay node placement under budget constraint.
\newblock In Paolo Bellavista and Vijay~K. Garg, editors, {\em Proceedings of
  the 19th International Conference on Distributed Computing and Networking,
  {ICDCN}}, pages 35:1--35:11. {ACM}, 2018.

\end{thebibliography}

\newpage
\appendix

\section*{Appendix}

\section{An Equivalent formulation of flow constraints}\label{apx:lp}
All our set of constraints and linear programs have an exponential number of variables. However, they can be solved in polynomial time as we only need to find, independently for any $v \in V\setminus \{r\}$, a flow from $r$ to $v$ of value $x_v$ that does not exceed the capacity $x_w$ (and $nx_w$ in~\eqref{lpUndirected-SubmodularFlow}), for each vertex $w \in V\setminus\{v\}$.
Indeed, taking the example of~\eqref{lpDBudgetQuotaAdditive}, the flow variables appear only in constraints~\eqref{lpDBudgetQuotaAdditive:overrallflow} and~\eqref{lpDBudgetQuotaAdditive:capacity}, while the quota and budget constraints only depend on the capacity variables. 
Therefore, we can replace the flow variables and constraints \eqref{lpDBudgetQuotaAdditive:overrallflow} and~\eqref{lpDBudgetQuotaAdditive:capacity}, with an alternative formulation of flow variables and constraints in such a way that for any assignment $x$ of capacity variables, there exists a feasible assignment of flow variables if and only if there exists a feasible assignment of the alternative flow variables.

The two following sets of constraints, where  $x$ gives capacity values, are equivalent in this sense.
\begin{align}\label{const1}
& \tag{Const1}\\
                  \sum_{P \in \mathcal{P}_v}f^v_P &= x_v,&& \forall v \in V\setminus \{r\}\\
                 \sum_{P \in \mathcal{P}_v:w \in P} f^v_{P}&\le x_w,&& \forall v\in V\setminus \{r\} \text{ and }\forall w \in V\setminus \{ v\}\\
                 0 \le f^v_P &\le 1, &&\forall v\in V\setminus \{r\}, P \in \mathcal{P}_v\notag
\end{align}

\begin{align}\label{const2}
& \tag{Const2}\\
                  \sum_{w\in V} f^v_{wv} &= x_v,&& \forall v \in V\setminus \{r\}\label{const2:overallflow}\\
                  \sum_{u\in V} f^v_{wu} &\le x_w,&& \forall v\in V\setminus \{r\} \text{ and }\forall w \in V\setminus \{ v\}\label{const2:capacity}\\
                 \sum_{u \in V} f^v_{wu}&=\sum_{u \in V} f^v_{uw}&& \forall v\in V\setminus \{r\} \text{ and }\forall w \in V\setminus \{r, v\}\label{const2:conservation}\\\
                 0 \le &f^v_{wu} \le 1, &&  \forall v\in V\setminus \{r\} \text{ and }\forall w,u\in V\notag
\end{align}
In both formulations, the values of $x$ are fixed, while $f_P^v$, for each $v\in V\setminus\{r\}$, and $f^v_{wu}$, for each $v\in V\setminus \{r\}$ and $w,u \in V\setminus \{v\}$, are the flow variables for~\eqref{const1} and~\eqref{const2}, respectively.
In~\eqref{const2} for any $v \in V \setminus \{r\}$, $r$ has to send $x_v$ units of commodity $v$ to every vertex $v$ and $f^v_{wu}$ is the flow of commodity $v$ on the directed edge $(w, u)$.  
The constraints in~\eqref{const2} are as follows. Constraint~\eqref{const2:overallflow} ensures that the amount of flow entering each vertex $v \in V\setminus \{r\}$ should be equal to $x_v$. Constraints~\eqref{const2:capacity} and~\eqref{const2:conservation} formulate the standard flow constraints encoding of the connectivity constraint in which in a flow from $r$ to $v$, $x_u$ is the capacity of each vertex $u \in V \setminus \{v\}$ and $f^v_{wu}$ is the flow of commodity $v$ on the directed edge $(w,u)$.

It is easy to see that, given an assignment of capacity variable $x$, there exists an assignment of variables $f^v_P$, for each $v\in V\setminus\{r\}$ and $P \in \mathcal{P}_v$, that satisfies constraints~\eqref{const1} if and only if there exists an  assignment of variables $f^v_{wu}$, for each $v\in V\setminus \{r\}$ and $w,u \in V\setminus \{v\}$, that satisfies constraints~\eqref{const2}. In fact, both conditions are satisfied if an only if it is possible for each $v\in V\setminus \{r\}$, to send $x_v$ units of flow from $r$ to $v$ satisfying the node capacities defined by $x$. 

In our approximation algorithms, we will use only the capacity variables, which we can compute by replacing~\eqref{const1} with~\eqref{const2} in the respective linear program and hence solving linear programs with a polynomial number of variables.

\section{Proof of Claim~\ref{clHittingSet}}\label{apx:claim}

For each element $i\in V'$, we define a counter $c_i$ of the number of sets which $i$ belongs to, $c_i:=|\{j~:~ i\in X'_j\}|$. We initialize $X'$ to $\emptyset$ and iterate the following greedy steps until $X'$ hits all the subsets of $\Sigma$:
(1) select the element $i$ that maximizes $c_i$; (2) add $i$ to $X'$; (3) update all counters of elements that belong to a set which also $i$ belongs to.

The above algorithm runs for at most $N$ iterations since at least a subset is covered in each iteration. Moreover, each iteration requires polynomial time in $N$ and $M$.

For $k\geq 0$, let $N_k$ be the number of sets not covered by $X'$ after $k$ iterations of the above algorithm.
We have $N_0=N$ and $N_{|X'|} =0$.
Let $i$ be the element selected at iteration $k\geq 1$.
At the beginning of iteration $k$, before adding $i$ to $X'$, we have that 
\[
\sum_{\ell:c_\ell>0} c_\ell \geq R \cdot N_{k-1},
\]
and since $|\{\ell:c_\ell>0\}| \leq M-k+1$, by an averaging argument, we must have $c_i\geq \frac{R N_{k-1}}{M-k+1}$.
It follows that:

\[
N_k =N_{k-1}-c_i\leq  \left( 1 - \frac{R}{M-k+1}\right) N_{k-1}\leq  N_0 \prod_{\ell=0}^{k-1} \left( 1 - \frac{R}{M-\ell}\right) = N \prod_{\ell=0}^{k-1} \left( 1 - \frac{R}{M-\ell}\right) <  N \left( 1 - \frac{R}{M}\right)^{k}\leq N e^{-Rk/M}~,
\]
where the second inequality is due to $k-1$ recursions on $N_{k-1}$ and the last one is due to $1-x\leq e^{-x}$, for any $x\geq 0$. For $k=\frac{M}{R}\ln{N}$ we have $N_k< 1$, which means $N_k=0$ and hence $|X'|\leq \frac{M}{R}\ln{N}$.


\end{document}